\documentclass[reqno,10pt]{amsart}
 \oddsidemargin9mm
 \evensidemargin9mm 
 \textwidth13cm

\usepackage{amsmath}
\usepackage{amsthm}
\usepackage{amssymb}
\usepackage[mathscr]{euscript} 
\usepackage{bm}
\usepackage{pgf}
\usepackage{color,xcolor}
\usepackage{graphicx}
\usepackage{comment}
\usepackage{mathtools}
\usepackage{soul}

\usepackage{a4wide}

\parskip2mm

\newcommand\be[1]{\begin{equation}\label{#1}}
\newcommand\ee{\end{equation}}
\newcommand\ba[1]{\begin{align}\label{#1}}
\newcommand\ea{\end{align}}
\newcommand\bas{\begin{align*}}
\newcommand\eas{\end{align*}}

%%%%%%%
\newtheorem{theorem}{Theorem}[section]
\newtheorem{lemma}[theorem]{Lemma}
\newtheorem{proposition}[theorem]{Proposition}
\newtheorem{corollary}[theorem]{Corollary}

\theoremstyle{definition}
\begingroup
\newtheorem{definition}[theorem]{Definition}
\newtheorem{remark}[theorem]{Remark}

\endgroup

\newcommand{\Zz}{\mathbb{Z}}

\newcommand{\eps}{\varepsilon}

\definecolor{WWW}{rgb}{0.0,0.4,0.0}

\definecolor{darkolivegreen}{rgb}{0.33, 0.42, 0.18}

\usepackage{metalogo}
\usepackage{booktabs}
\usepackage{hyperref}
\makeatletter
\hypersetup{%
     pdfpagemode={UseOutlines},
     bookmarksopen,
     pdfstartview={FitH},
     colorlinks,
     linkcolor={blue},
     citecolor={red},
    urlcolor={red}
  }

\begin{document}

\title[Maximal fluctuations of edge-isoperimetric sets in $\mathbb Z^d$]{Maximal fluctuations around the Wulff shape for edge-isoperimetric sets in $\bm{\mathbb Z^d}$: a sharp scaling law}

\author{Edoardo Mainini}
\address[Edoardo Mainini]{Dipartimento di Ingegneria meccanica, energetica, gestionale e dei trasporti, 
  Universit\`a  degli studi di Genova, Via all'Opera Pia, 15 - 16145 Genova Italy.}
\email{mainini@dime.unige.it}
\urladdr{http://www.dime.unige.it/it/users/edoardo-mainini}

\author{Bernd Schmidt}
\address[Bernd Schmidt]{Institut f{\"u}r Mathematik, Universit{\"a}t Augsburg, Universit\"atsstr.\ 14, 86159 Augsburg, Germany}
\email{bernd.schmidt@math.uni-augsburg.de}
\urladdr{http://www.math.uni-augsburg.de/ana/schmidt}

\begin{abstract}
We derive a sharp scaling law for deviations of edge-isoperimetric sets in the lattice $\mathbb Z^d$ from the limiting Wulff shape in arbitrary dimensions. As the number $n$ of elements diverges, we prove that the symmetric difference to the corresponding Wulff set consists of at most $O(n^{(d-1+2^{1-d})/d})$ lattice points and that the exponent $(d-1+2^{1-d})/d$ is optimal. This extends the previously found `$n^{3/4}$ laws' for $d=2,3$ to general dimensions. As a consequence we obtain optimal estimates on the rate of convergence to the limiting Wulff shape as $n$ diverges. 
\end{abstract}

\subjclass[2010]{%
82D25, %Statistical mechanics, structure of matter: Crystals
82B20, %Statistical mechanics, structure of matter: Lattice systems (Ising, dimer, Potts, etc.) and systems on graphs
05C35%Combinatorics: Extremal problems 
} 

\keywords{Wulff shape, $N$\textsuperscript{3/4} law, integer lattice, fluctuations, edge perimeter, crystallization}

\maketitle

\pagestyle{myheadings}
%%%%%%%%%%%%%%%%%%%%%%%%%%%%%%%%%%%%%%%%%%%%%%%%%%

\section{introduction}
Let $d\in\mathbb N$. For a nonempty subset $C$  of $\Zz^d$, we denote by $\Theta_d(C)$ the {\it edge boundary} of $C$, i.e., 
\[
\Theta_d(C):=\{ (x,y)\in \Zz^d\times\Zz^d \ : \ \textrm{$|x-y|=1$, $x\in C$ and $y\in \Zz^d\setminus C$} \}.
\] 
Its cardinality $\#\Theta_d(C)$  is the {\it edge perimeter} of $C$. Given $n\in\mathbb N$, the $n$-points edge-isoperimetric problem in $\mathbb Z^d$ is the minimization problem
\[
EIP^d(n):=\min \{\#\Theta_d(C): C\subset\mathbb Z^d,\; \#C=n\}.
\]
In the following,  a nonempty set $C$ of $\mathbb Z^d$ is said to be an $EIP^d$ minimizer if the edge perimeter of $C$ is equal to $EIP^d({\# C})$. As a convention, the empty set is assumed to be an $EIP^d$ minimizer as well. 
A solution to the $n$-points edge-isoperimetric problem was given by Bollobas and Leader in \cite{Bollobas}. If two points $x,y$ in a configuration $C \subset \mathbb Z^d$ occupy neighboring lattice sites, i.e.\ $|x - y|=1$, we say there is a {\it bond} connecting these points. The number of bonds $b(C):=\frac12\,\#\{(x,y)\in C\times C: |x-y|=1\}$ satisfies the elementary relation $\#\Theta_d(C)+2b(C)=2d\#C$. This shows that edge-perimeter minimization coincides with number of bonds maximization, as $\# C$ is fixed. 

The edge isoperimetric problem naturally arises within the theory of equilibrium shapes of crystals under a minimal surface energy criterion \cite{BL, DKS, M}. It appears in connection to  low temperature lattice statistics systems such as the Ising model  \cite{ACC,Arous-Cerf,B,C,CK,CM,CP,N}. Regarded as a maximization problem for the number of bonds, it is incurred in the analysis of classical interacting point particle systems with short-range interatomic potentials, where it describes ground states among configurations on a given lattice. In situations where ground states are known to crystallize, $EIP^d$ minimizers are indeed general ground states. Whereas interactions with significant long-range contributions lead to non-trivial boundary layers, see, e.g., \cite{Theil:11,JKST:19}, for specific  {\it sticky-disc} potentials in the plane, crystallization in the triangular lattice has been shown already in \cite{Harborth,Heitmann-Radin80,Radin81}. Yet, convergence of $EIP^2$ minimizers to the hexagonal Wulff shape as the particle number $n$ diverges, the $n^{3/4}$ law for fluctuations at finite $n$ and sharpened estimates with optimal constants have only been obtained rather recently, cf.\ \cite{AuYeung-et-al12,Schmidt,DPS}, respectively. Analogous results for the square lattice and the hexagonal lattice are found in \cite{DPS2,MPS,Mainini-Stefanelli12}, where the different lattice periodicity is induced by the presence of a three-body potential. The emergence of a macroscopic Wulff shape as an effect of the surface tension is a common feature of these models. 

Unlike the classical anisotropic isoperimetric problem in $\mathbb R^d$, which admits  the Wulff shape as the unique solution \cite{DP,F,FonsecaMueller:91,H}, the $n$-points edge isoperimetric problem has many solutions in general. In two dimensions, optimal polyominoes  and lattice animals are  discussed in \cite{BCC,Harary}. Indeed, characterizing isoperimetrically optimal polyominoes and polycubes is a classical problem in discrete mathematics and it  also considered in \cite{CA,  G,EG, NB,VB}. We refer to \cite{Ahlswede, Bezrukov, Bollobas, Harper} for further results in combinatorics and for optimization problems on graphs. 

A peculiar feature of the $EIP^d$ problem is that for infinitely many specific values of $n$ the solution to $EIP^d(n)$ is -- up to translations -- unique (e.g., if $n = \ell^d$ for some $\ell \in \mathbb N$)\footnote{To see this, for each set $C \subset \Zz^d$ with $\# C = n$ let $V_C = \bigcup_{x \in C} (x+[-1/2,1/2]^d)$ with volume $|V_C| = n$ and surface area $\Theta_d(C) = \int_{\partial V_C} \| \nu \|_{L^1}$ ($\nu$ the unit outward normal to $V_C$). As the minimizer of $V \mapsto \int_{\partial^* V} \| \nu \|_{L^1}$ on sets of finite perimeter with volume $n$ is up to translations uniquely given by $[1/2, \ell + 1/2]^d$ (see, e.g., \cite{FonsecaMueller:91,Taylor:75}), every $EIP^d$ minimizer $C$ must satisfy $C = \{1, \ldots, \ell\}^d$ up to translation.}, while for general (infinitely many) $n$ we will see that there are many substantially different minimizers. Our main result Theorem~\ref{thm:main} will show that -- after a suitable translation -- each solution $C$ to $EIP^d(n)$ is close to the cubic Wulff shape $W_n = \{1, \ldots, \lfloor n^{1/d} \rfloor\}^d$ and provide a sharp scaling law for the symmetric distance $C \triangle W_n$ which measures the fluctuations around $W_n$. More precisely, our main result reads as follows. 

\begin{theorem}\label{thm:main} There is a constant $K_d > 0$ which only depends on the dimension $d$ such that 
\begin{itemize} 
\item[(i)] for every $n \in {\mathbb N}$ and each solution $C$ to $EIP^d(n)$ there is a translation vector $a \in \mathbb Z^d$ such that  
$$ \# (C - a) \triangle W_n \le K_d n^{(d-1+2^{1-d})/d}. $$ 

\item[(ii)] This estimate is sharp as for each $\eps > 0$ there are infinitely many $n \in {\mathbb N}$ for which a solution $C$ to $EIP^d(n)$ exists which satisfies the estimate 
$$ \inf_{a \in \mathbb Z^d}\# (C - a) \triangle W_n \ge (K_d - \eps) n^{(d-1+2^{1-d})/d}. $$ 
\end{itemize}
\end{theorem}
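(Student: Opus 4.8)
The plan is to combine a precise analysis of the function $n\mapsto EIP^d(n)$ with an induction on the dimension, using the slicing of a configuration into parallel hyperplane sections.

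\textbf{Step 1: slicing recursion and a sharp expansion of $EIP^d$.} Writing $C=\bigsqcup_j C_j\times\{j\}$ with $C_j\subset\mathbb Z^{d-1}$ one has $\#\Theta_d(C)=\sum_j\#\Theta_{d-1}(C_j)+2\#C-2\sum_j\#(C_j\cap C_{j+1})$; since the last term is maximised by nesting, every $EIP^d$ minimiser may be taken in the form $C_1\supseteq C_2\supseteq\cdots\supseteq C_m$ with each $C_j$ an $EIP^{d-1}$ minimiser realised as an initial segment of a fixed order on $\mathbb Z^{d-1}$ (the Bollob\'as--Leader normal form), whence $\#\Theta_d(C)=\sum_j\#\Theta_{d-1}(C_j)+2\#C_1$ and
\[
EIP^d(n)=\min\Big\{\,2n_1+\textstyle\sum_{j=1}^m EIP^{d-1}(n_j)\ :\ n_1\ge\cdots\ge n_m\ge 1,\ \textstyle\sum_j n_j=n\,\Big\},\qquad EIP^1\equiv 2.
\]
From this I would prove, by induction on $d$, the two--term expansion $EIP^d(n)=2d\lfloor n^{1/d}\rfloor^{d-1}+\rho_d(n)$ with $\rho_d(n)=O\big(n^{(d-2)/d}\big)$, together with the finer facts needed below: the restricted cost $g(m):=\min\{2n_1+\sum EIP^{d-1}(n_j):\ m\ \text{slices},\ \sum n_j=n\}$ is, near its minimiser $m\approx n^{1/d}$, flat to second order with curvature of order $\ell^{d-3}$ (here and below $\ell:=\lfloor n^{1/d}\rfloor$), and the optimal profile $(n_j)$ is almost constant in a quantitatively controlled sense. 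This is the technical heart; it is where the exponent $2^{1-d}$ enters, through balancing the second--order cost $\sim\ell^{d-3}\sum_j\delta_j^2$ of perturbing slice sizes and shapes by $\delta_j$ against the slack $\rho_d$ admits: the admissible per--direction deviation comes out to be $O(\ell^{\gamma_d})$ with $\gamma_d:=d\alpha_d-(d-1)=2^{1-d}$, i.e. $\gamma_d=\gamma_{d-1}/2$.

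\textbf{Step 2: the upper bound (i).} I would prove by induction on $d$ the sharper statement that after a translation $a$ every $EIP^d$ minimiser $C$ satisfies the sandwich
\[
\{1,\dots,\ell-c_d\ell^{\gamma_d}\}^d\ \subseteq\ C-a\ \subseteq\ \{1,\dots,\ell+c_d\ell^{\gamma_d}\}^d ,
\]
since the shell between these two cubes has at most $O(\ell^{d-1}\cdot\ell^{\gamma_d})=O(n^{\alpha_d})$ points, this gives $\#(C-a)\triangle W_n\le K_d n^{\alpha_d}$. For the induction, put $C$ into nested slice form; by Step 1 the number of slices is $m=\ell+O(\ell^{\gamma_d})$ and the slice sizes $n_j$ deviate from $\ell^{d-1}$ only so much that $|\lfloor n_j^{1/(d-1)}\rfloor-\ell|=O(\ell^{\gamma_d})$. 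By the inductive hypothesis each slice is, up to a translation $a_j$, sandwiched between $(d-1)$--cubes of side $\lfloor n_j^{1/(d-1)}\rfloor\mp c_{d-1}\ell^{\gamma_{d-1}}$; but $\gamma_{d-1}>\gamma_d$, so this is by itself too weak. The crucial point is that the slices are \emph{nested and nearly cubic}, so if some slice were stretched intrinsically by $\delta\sim\ell^{\gamma_{d-1}}$ in a coordinate direction then all nested slices below it would be stretched likewise, making $C$ a $d$--box stretched by $\delta$ in one direction, at a perimeter cost $\sim\delta^2\ell^{d-3}$ exceeding the available slack — hence $\delta=O(\ell^{\gamma_d})$, and likewise the $a_j$ can be taken within $O(\ell^{\gamma_d})$ of a common $a$. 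Assembling the slices with this common $a$, and absorbing the at most $O(\ell^{\gamma_d})$ missing or extra slices, yields the sandwich.

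\textbf{Step 3: sharpness (ii).} For infinitely many $n$ I would exhibit a minimiser attaining the bound: choose $n$ in the lower part of a step of $\rho_d$, so that the profile optimisation of Step 1 admits an optimal profile whose slice sizes force side--lengths $\ell\pm\Theta(\ell^{\gamma_d})$ (concretely a near--cube whose $d$ sides are $\ell+\Theta(\ell^{\gamma_d})$, $\ell+\Theta(\ell^{\gamma_d})$, $\ell-\Theta(\ell^{\gamma_d})$, completed by a thin lower--dimensional layer). One verifies with the formula of Step 1 that $\#\Theta_d(C)=EIP^d(\#C)$ exactly, and computes $\inf_a\#(C-a)\triangle W_n\ge (K_d-\eps)\,n^{\alpha_d}$, with the constant matching that of (i) because both are governed by the same extremal profile — which is why (i) and (ii) must be carried out with the \emph{same} explicit $c_d$, $K_d$.

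\textbf{Main obstacle.} The difficulty is concentrated in Steps 1--2: the recursion controls slice \emph{sizes} but not slice \emph{shapes}, so the induction must propagate a quantitative (second--order) stability estimate rather than the bare fluctuation bound, and the exponent is produced precisely by the halving $\gamma_d=\gamma_{d-1}/2$ of this slack at each dimensional step; getting the $\ell$--dependence in the sandwich sharp enough to close the induction, without losing a factor per step, is the crux, and obtaining the optimal constant requires tracking the extremal profile explicitly through both directions of the proof.
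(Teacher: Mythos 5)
There is a genuine gap, and it sits exactly where you locate the "crux". The inductive statement you propose to propagate in Step 2 --- the two-sided sandwich $\{1,\dots,\ell-c_d\ell^{\gamma_d}\}^d\subseteq C-a\subseteq\{1,\dots,\ell+c_d\ell^{\gamma_d}\}^d$ with $\gamma_d=2^{1-d}$ --- is false for $d\ge 3$, so the induction cannot close in that form. Concretely, for $d=3$ take $C=\{1,\dots,\ell\}^3\setminus T$ with $T=\{x\in\{1,\dots,\ell\}^3:\ x_1+x_2+x_3\ge 3\ell-t+1\}$ and $t$ the largest integer with $\binom{t+2}{3}\le \ell-1$, so $t\sim(6\ell)^{1/3}$. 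Every axis-parallel line meets $C$ in a nonempty prefix, hence $\#\Theta_3(C)=6\ell^2$; on the other hand $n=\#C>\ell^3-\ell$, and the daisy of this cardinality (the cube minus one layer, plus a top layer of $2$d-perimeter $4\ell$) also has perimeter $6\ell^2$, so $EIP^3(n)=6\ell^2$ and $C$ is a minimizer. But any translated cube of side $\ell-c$ contained in $C$ must have $c\ge t/3=\Theta(\ell^{1/3})$, which beats $c_3\ell^{1/4}$ for large $\ell$ (the same corner-simplex construction works in every $d\ge3$, with margin $\Theta(\ell^{1/d})\gg\ell^{2^{1-d}}$). This $C$ still satisfies the theorem's volume-type bound, which shows your inner inclusion is strictly stronger than what is true: only an outer (bounding-box) inclusion together with a cardinality count can be propagated, which is how the paper proceeds (all sides of the minimal rectangle $R(C)$ are $\ell+O(h_{\ell,d})$ and $\#R(C)-n=O(h_{\ell,d}\ell^{d-1})$).

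The second, related, gap is the mechanism you invoke for the exponent. Balancing the second-order cost $\delta^2\ell^{d-3}$ against "the slack $\rho_d$ admits", with $\rho_d(n)=O(n^{(d-2)/d})=O(\ell^{d-2})$, gives $\delta\lesssim\ell^{1/2}$ in every dimension, not $\ell^{2^{1-d}}$; writing $\gamma_d=d\alpha_d-(d-1)$ is reverse engineering from the answer rather than a derivation. The actual source of $2^{1-d}$ is a dimensional cascade that your sketch never produces: an elongation by $p$ of a $d$-dimensional box can be traded, by moving $(d-2)$-dimensional slices, for an elongation by roughly $p^2$ of a single $(d-1)$-dimensional face, and faces/sections of minimizers are again minimizers; iterating down to the $d=2$ threshold $\sqrt\ell$ yields both that $p\lesssim\ell^{2^{1-d}}$ is admissible (the paper's Lemma \ref{lemma:lb}, which already settles part (ii)) and that larger $p$ is excluded (Lemma \ref{converse} and Corollary \ref{cor:ad-est}). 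Without this squaring step the halving $\gamma_d=\gamma_{d-1}/2$ is an unsupported assertion. Finally, note that compressing $C$ into nested-slice form changes the configuration, so even a complete analysis of nested minimizers would not by itself bound the fluctuations of the original, arbitrary minimizer; you must carry an invariant of $C$ (the paper uses the longest side of its minimal rectangle) through all rearrangements, and likewise the claim that (i) and (ii) hold with the same constant $K_d$ is asserted but nothing in the sketch tracks constants.
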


We remark that, by way of contrast, the special solutions found  in \cite{Ahlswede, Bollobas} (cf.\ Theorems \ref{theorem:SpecialSolutions} and \ref{unique} below) differ from $W_n$ only on a single surface layer, their symmetric difference thus satisfy the (best possible) estimate of order $O(n^{(d-1)/d})$. 

Still the maximal fluctuations are of lower order than the number $n$ of particles, so that the macroscopic shape of an $EIP^d$ minimizer is close to the Wulff shape as the number of atoms grows. In the setting of Theorem \ref{thm:main} sharp estimates for this convergence can be given by considering the rescaled and translated empirical measure of a sequence $C_n$ of $EIP^d(n)$ minimizers. Rescaling with the edge length $n^{1/d}$, Theorem \ref{thm:main} shows that, for a suitable sequence of translation vectors $a_n$, $\mu_n = \frac{1}{n} \sum_{x \in C_n - a_n} \delta_{x/n^{1/d}}$ converges weakly to the uniform measure on the unit $d$-dimensional cube. Measuring the weak convergence of probability measures in terms of the bounded Lipschitz distance $d_{\rm BL}(\mu,\nu): = \sup_{\varphi \in {\rm Lip}_1} \int_{\mathbb R^d} \varphi \, d(\mu-\nu)$, where ${\rm Lip}_1$ is the space of Lipschitz functions that are bounded by $1$  and have  Lipschitz constant bounded by $1$ as well, Theorem \ref{thm:main} implies 
\[ d_{\rm BL}(\mu_n, \lambda^d|_{[0,1]^d}) \le C n^{(-1+2^{1-d})/d}, \] 
and this estimate is sharp. This convergence is crucial in the context of low temperature crystallization as it provides a theoretical justification for the formation of a deterministic droplet at the macroscopic scale.     

Yet, we also observe that the shape fluctuations at finite $n$ are substantial. Indeed, the non-uniqueness does not solely result from rearrangements of points on the surface. Such differences in `surface particles' would only be of order $O(n^{(d-1)/d})$. Instead, we observe differences of the order $O(n^{(d-1)/d} \cdot n^{2^{1-d}/d})$ which shows that -- in an averaged sense -- microscopic deviations, asymmetries and boundary defects may occur in a whole surface layer of depth $O(n^{2^{1-d}/d})$. (See also the construction in Lemma~\ref{lemma:lb}.) 

Scaling laws for fluctuations around the asymptotic Wulff shape have first been obtained for the planar triangular lattice in \cite{Schmidt}, also cp.\ the announcement in {\cite{AuYeung-et-al12}, and with optimal constants in \cite{DPS}. The square lattice and the hexagonal lattice, including optimal constants, are considered in \cite{MPS, MPS2}, respectively, \cite{DPS2}. More recently, also dimers have been analyzed, cf.\ \cite{FriedrichKreutz:19}. In all the two-dimensional systems an $n^{3/4}$ law was found to sharply describe fluctuations at finite $n$. Very recently, also within the technically much more demanding three-dimensional case a sharp scaling law could be established for the cubic lattice in \cite{MPSS}. Curiously, the same scaling $n^{3/4}$ was found to optimal. The only result in general dimensions appears to be the recent contribution \cite{CL}, which provides another relevant connection between the continuum and the discrete isoperimetric inequality. Indeed, it is shown in \cite{CL} that an estimate from above on the maximal deviation estimate from the Wulff shape in a crystalline system can be obtained through an application of the classical isoperimetric inequality. However, such estimates turn out to be sharp only in dimension $2$, as they provide a higher exponent as compared to the one we find in Theorem \ref{thm:main}. 

To the best of our knowledge, the result of Theorem \ref{thm:main} is the first characterization of the overall shape of edge isoperimetric sets in a higher-dimensional system, providing a sharp scaling law for fluctuations around the perfect cube. Moreover, it closes the analysis for the cubic lattice, clearly recovering the $n^{3/4}$ law in dimension $2$ and $3$. Starting from $d=2$, the sequence of optimal scaling exponents, according to Theorem \ref{thm:main}, turns out to be 
\[
\frac34,\;\;\frac34,\;\;\frac{25}{32},\;\;\frac{13}{16},\;\;\frac{161}{192},\;\;\frac{55}{64},\;\frac{897}{1024},\;\;\frac{683}{768},\;\;\frac{4609}{5120},\;\;\frac{931}{1024},\;\; \frac{22529}{24576},\;\;\ldots, \frac{d-1+2^{1-d}}{d},\;\;\ldots
\]
It is an increasing sequence that converges to $1$ as $d\to+\infty$, consistently with the fact that the number of surface points scales with $n^{(d-1)/d}$ and the total number of points $n$ have the same scaling exponent in the limit. The scaling exponent of the typical averaged width $n^{2^{1-d}/d}$ of surface layers in which boundary defects may occur is found to converge to $0$ geometrically fast as $d \to \infty$.

\subsection*{Plan of the paper} In Section \ref{D} we review the special solutions found in \cite{Ahlswede,Bollobas} and provide some alternative descriptions of such `daisies'. The construction of the lower bound, which is needed to prove Theorem \ref{thm:main}(ii), is given in Section \ref{lower}. The considerably more involved upper bound in Theorem \ref{thm:main}(i) is found in Section \ref{upper}. We close by summarizing our results in the proof of Theorem \ref{thm:main}.

\section{Daisies}\label{D}

We begin by reviewing the special solutions to the edge-perimeter minimization problem that were constructed in \cite{Ahlswede,Bollobas}, see also \cite[Chapter 7]{Harper}. These solutions are obtained by consecutively adding points on hyperplanes neighboring the faces of a cuboid. 

Algebraically, these special solutions are conveniently described in terms of a special order on $\mathbb N^d$. In the following definition we use this notation: for $x=(x_1,\ldots, x_d)\in\mathbb N^d$, we let $\max x:=\max_{i=1,\ldots, d}x_i$, we let $\tilde x=(\tilde x_1,\ldots, \tilde x_d)$, where $\tilde x_i=1$ if $x_i<\max x$ and $\tilde x_i=x_i$ if $x_i=\max x$. Moreover, we let $x_*\in\mathbb N^{d-k}$ be obtained from $x$ by dropping the $k\in\{1,\ldots, d\}$ components of $x$ that are equal to $\max x$.
Finally, we denote by $\prec_{R}$ the right-to-left strict lexicographic order in $\mathbb N^d$, i.e., $x\prec_{ R} y$ if for some $i\in\{1,\ldots, d\}$, there holds 
$$x_j=y_j\;\;\forall j\in\{i+1,\ldots, d\}\quad\mbox{and}\quad x_i<y_i.$$
  
\begin{definition}[Order on $\mathbb N^d$, see \cite{Ahlswede}]\label{ab}
We define a strict and total order relation $\prec$ in $\mathbb N^d$ as follows. For $x=(x_1,\ldots, x_d)\in\mathbb N^d$, $y=(y_1,\ldots, y_d)\in\mathbb N^d$, $x\neq y$, we say that
   $x\prec y$ if one of the following three instances occurs:
\begin{itemize}
\item[1)] $\max x<\max y$
\item[2)] $\max x=\max y$ and $\tilde x\prec_{ R}\tilde y$ %(reduced to $x\prec_R y$ if $\max x=\max y=2$)
\item[3)] $\max x=\max y>2$, $\tilde x=\tilde y$, $x_*\prec y_*$
\end{itemize} 
Of course, $x \preceq y$ means $x \prec y$ or $x = y$. 
\end{definition}

We note that in the third instance, since $\tilde x=\tilde y$, the value $\max x=\max y$ is found in $x$ and $y$ exactly at the same entries. If $k\in\{1,\ldots, d-1\}$ is the number of entries that realize such maximum, the relation $x_*\prec y_*$ is defined in the same way but in dimension $d-k$. Therefore the order $\prec$ is defined by induction, and in dimension one $x\prec y \iff x<y$. Given $x\in\mathbb N^d$, $y\in\mathbb N^d$, $x\neq y$, it is easy to check from the above definition that either $x\prec y$ or $y\prec x$, so that $\prec$ is a strict total order in $\mathbb N^d$.
 
\begin{theorem}[Special solutions, see \cite{Ahlswede}]\label{theorem:SpecialSolutions}
For each $n \in \mathbb N$ the string of the first $n$ elements in $\mathbb N^d$ with respect to the order $\prec$ is an $EIP^d$ minimizer. 
\end{theorem}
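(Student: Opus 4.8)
The plan is to argue by induction on the dimension $d$, following the compression method of \cite{Ahlswede,Bollobas} (see also \cite[Chapter~7]{Harper}). Since $\#\Theta_d(C)=2d\,\#C-2b(C)$, it suffices to show that the initial $\prec$-segment maximizes the bond number $b(\cdot)$ among $n$-point subsets of $\mathbb Z^d$; recall that $b$ is translation invariant and depends only on $C$. Write $D^k_m\subseteq\mathbb N^k$ for the string of the first $m$ elements of $\prec$ in dimension $k$, and set $\beta_k(m):=b(D^k_m)$; by construction $D^k_0\subseteq D^k_1\subseteq D^k_2\subseteq\cdots$. For $d=1$ the claim is immediate: every $n$-point $C\subseteq\mathbb Z$ satisfies $b(C)\le n-1$, with equality exactly for intervals, and $D^1_n=\{1,\dots,n\}$.

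Assume the statement in dimension $d-1$ and fix $C\subseteq\mathbb Z^d$ with $\#C=n$. Slicing along the last coordinate, $C^{(j)}:=\{x'\in\mathbb Z^{d-1}:(x',j)\in C\}$, one has
\[
b(C)=\sum_{j}b\bigl(C^{(j)}\bigr)+\sum_{j}\#\bigl(C^{(j)}\cap C^{(j+1)}\bigr),
\]
the first sum counting bonds inside slices, the second bonds between consecutive slices. Now perform a \emph{first compression}: replace each slice $C^{(j)}$ by $D^{d-1}_{m_j}$, where $m_j:=\#C^{(j)}$, and stack the nonempty slices at consecutive heights. By the inductive hypothesis $b(C^{(j)})\le\beta_{d-1}(m_j)$, and since the family $\{D^{d-1}_m\}_m$ is totally ordered by inclusion, $\#\bigl(D^{d-1}_{m_j}\cap D^{d-1}_{m_{j+1}}\bigr)=\min(m_j,m_{j+1})\ge\#\bigl(C^{(j)}\cap C^{(j+1)}\bigr)$. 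Hence the compressed configuration has the same cardinality and at least as many bonds. So among bond-maximizers we may assume every slice is a $(d-1)$-dimensional initial segment stacked at consecutive heights, and the bond number then depends only on the cardinality profile $\mathbf m=(m_j)_j$ (positive entries, $\sum_j m_j=n$) through
\[
F(\mathbf m):=\sum_j\beta_{d-1}(m_j)+\sum_j\min(m_j,m_{j+1}).
\]

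It remains to prove that $F$ is maximized by the profile $\mathbf m^\star$ obtained by slicing $D^d_n$ along $x_d$, and that $F(\mathbf m^\star)=\beta_d(n)$. First, permuting the entries of $\mathbf m$ into unimodal (increasing-then-decreasing) order leaves $\sum_j\beta_{d-1}(m_j)$ unchanged and does not decrease $\sum_j\min(m_j,m_{j+1})$: writing $\sum_j\min(m_j,m_{j+1})=\sum_{\lambda\ge1}\#\{j:m_j\ge\lambda,\ m_{j+1}\ge\lambda\}$, the inner count is largest when the super-level set $\{j:m_j\ge\lambda\}$ is an interval, and this happens simultaneously for all $\lambda$ exactly when $\mathbf m$ is unimodal. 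So one may restrict to unimodal profiles. On these, a finite sequence of single-point transfers between consecutive slices (moving one point from $D^{d-1}_m$ to $D^{d-1}_{m'}$ with $m'\le m$) connects an arbitrary unimodal profile to $\mathbf m^\star$, and one shows that along a suitable such path no transfer decreases $F$. Finally, one checks directly from the second and third cases of Definition~\ref{ab} that $\mathbf m^\star$ is indeed the $x_d$-slicing profile of $D^d_n$ — i.e.\ the sections of $D^d_n$ are nested $(d-1)$-dimensional initial segments — so that $F(\mathbf m^\star)=b(D^d_n)=\beta_d(n)$; this closes the induction.

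The step I expect to be the main obstacle is the control of the single-point transfers. The naive hope, that the marginal bond gains $\beta_{d-1}(m)-\beta_{d-1}(m-1)$ be nonincreasing in $m$ (which would make every balancing transfer immediately favorable), fails: these gains drop back each time a new facet of the $(d-1)$-dimensional daisy is opened, so one cannot rely on discrete concavity. Instead one organizes the transfers using the finer structure of $\prec$ — the right-to-left tie-break and the recursion on $x_*$ encode exactly the greedy rule ``add the next point so that every section of the configuration remains a lower-dimensional minimizer'' — together with the fact that a completed cube $\{1,\dots,\ell\}^{d-1}$ is a locally most bond-efficient block, so that transfers enlarging an under-full near-cube at the expense of a small slice, and transfers merging a small slice into the partially filled section, are carried out in the right order and never lose bonds.
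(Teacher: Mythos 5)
The first half of your argument is sound and standard: the slice-wise compression (replacing each section $C^{(j)}$ by the $(d-1)$-dimensional initial segment of the same cardinality, using the inductive hypothesis for the in-slice bonds and the nestedness of initial segments for the vertical ones) and the rearrangement to a monotone profile are exactly the kind of manipulations the paper itself performs later (cf.\ Proposition~\ref{rear} and Corollary~\ref{coro1}). Note, however, that the paper does not prove Theorem~\ref{theorem:SpecialSolutions} at all; it imports it from \cite{Ahlswede}, and the substance of that reference is precisely the step your sketch leaves open.

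The genuine gap is the optimization over profiles: you must show that among all admissible profiles $\mathbf m$ with $\sum_j m_j=n$ the functional $F(\mathbf m)=\sum_j\beta_{d-1}(m_j)+\sum_j\min(m_j,m_{j+1})$ is maximized by the slicing profile of $D^d_n$, and this is where the whole difficulty of the theorem sits. You correctly observe that the naive transfer argument fails because $m\mapsto\beta_{d-1}(m)-\beta_{d-1}(m-1)$ is not nonincreasing (it jumps down each time a new layer of the $(d-1)$-dimensional daisy is started), but the proposed remedy --- organizing single-point transfers ``in the right order'' guided by the structure of $\prec$ so that none loses bonds --- is a statement of intent rather than an argument: no path of transfers is exhibited, no potential or weight function is identified that certifies each step, and it is not clear that $\mathbf m^\star$ is reachable from an arbitrary monotone profile by moves that are individually non-decreasing for $F$ (in general one must pass through profiles where a transfer temporarily sacrifices a horizontal bond that is only recovered several steps later, so a purely greedy, per-step argument cannot close the induction). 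This is exactly where the published proofs invest their effort: \cite{Ahlswede} runs a delicate induction with explicit comparisons of bond counts of initial segments, and \cite{Bollobas} iterates compressions in all coordinate directions and then must separately analyze configurations that are compressed in every direction yet still fail to be initial segments. Until that step is supplied, your proposal reduces the theorem to an equivalent, still-unproved extremal statement rather than proving it.
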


So in particular one obtains a nested sequence of solutions for any given cardinality. Our first aim is to provide a more geometric characterization of these point sets which in the sequel we refer to as `daisies'. 

\begin{definition}[Perfect daisy]\label{rectdaisy}
Let $k\in\mathbb N$.
 A nonempty set $Q\subset \mathbb Z^k$ is a $k$-dimensional {\it perfect daisy} if it is of the form
$$
Q=\{1,\ldots, p_1^{(k)}\}\times\ldots\times\{1,\ldots, p_k^{(k)}\}
$$
for some natural numbers $p_i^{(k)}$ (called the coefficients of the daisy) such that the sequence $\{1,\ldots, k\}\ni i\mapsto p_i^{(k)}$ is nonincreasing and $p_1^{(k)}-p_k^{(k)}\in \{0,1\}$. 

Tuples $n = (n_1, \ldots, n_k) \in \mathbb N^k$ which are decreasing, i.e., $n_1 \ge \ldots \ge n_k$, and whose oscillation $n_1 - n_k$ is at most $1$ will sometimes be called {\it $DO1$-tuples}. 
We also introduce the {\it value-change position} $s\in\{1,\ldots,k\}$, corresponding to a value change in a $DO1$-tuple $n$, and precisely
\begin{equation}\label{valuechange}
s=s(n_1,\ldots,n_k):=
\left\{\begin{array}{cl}
\min\{j\in\{2,\ldots, k\}: n_j<n_{j-1}\}&\quad \mbox{ if $n_k-n_1=1$},\\
1&\quad\mbox{ if $n_k-n_1=0$}.
\end{array}\right.
\end{equation}
\end{definition}

\begin{definition}[Daisy]
\label{dai} Let $d\in\mathbb N$.
 A nonempty set $Q\subset \mathbb Z^d$ is a $d$-dimensional {\it daisy} if for some $h\in\{0,\ldots d-1\}$ it is of the form
$$
Q=Q^{(d)}\cup Q^{(d-1)}\cup\ldots\cup Q^{(d-h)}, \quad\mbox{where}
$$
1) $Q^{(d)}$ is a $d$-dimensional perfect daisy  (Definition {\rm \ref{rectdaisy}}), with coefficients ${q}_i^{(d)}$, $i\in\{1,\ldots, d\}$.

\noindent 2) A sequence $(s_k)\subset\{1,\ldots,d\}$  and the nonempty sets $Q^{(d-k)}$ are defined recursively for $k=1,\ldots, h$ as follows:  
$Q^{(d-k)}=A_1^{(d-k)}\times\ldots\times A_d^{(d-k)}$, $S_{d,0}:=\{1,\ldots, d\}$,
\[
A_j^{(d-k)}:=\left\{\begin{array}{ccc}\{1,\ldots, q_j^{(d-k)}\}&\quad\mbox{if $j\in S_{d,k}:=\{1,\ldots, d\}\setminus\{s_1,\ldots,s_{k}\}$}\\
\{q_j^{(d-r_{j,k})}+1\}&\quad \mbox{if $j\in\{s_1,\ldots, s_k\}$},
\end{array}\right.
\] 
where
$r_{j,k}:=\max\{n\in\{0,\ldots, k-1\}: j\in S_{d,n}\},$ and
\[
s_k:=\left\{
\begin{array}{lll}
\min S_{d,k-1}&\;\; \mbox{ if $q_i^{(d-k+1)}=q_j^{(d-k+1)}$   for any $i,j\in S_{d,k-1}$}
\\
\min\{j\in  S_{d,k-1}:  q_{\pi(j)}^{(d-k+1)}>q_j^{(d-k+1)}\}&\;\;\mbox{ otherwise},
\end{array}
\right.
\]
where, for $j\in S_{d,k}$ such that $j>\min S_{d,k}$, the notation is $\pi(j):=\max\{i\in S_{d,k}:i<j\}$. 
If $Q^{(1)}\neq \emptyset$ we also conventionally denote by $s_d$ the unique element of $S_{d,d-1}$. 

\noindent 3) For any $k=\{1,\ldots,h\}$, the  natural numbers $q_j^{(d-k)}$ are defined for $j\in S_{d,k}$ and satisfy
\begin{itemize}
\item[3.1)] for $i,j\in S_{d,k}$, there holds $i<j\Rightarrow q_i^{(d-k)}\ge q_j^{(d-k)}$,
\item[3.2)] for $J_1:=\min S_{d,k}$ and $J_2=\max S_{d,k}$, 
there holds $q_{J_1}^{(d-k)}-q_{J_2}^{(d-k)}\in\{0,1\}$, 
\item[3.3)] for all $i\in S_{d,k}$, there holds $q_i^{(d-k)}\le q_i^{(d-k+1)}$,
\item[3.4)] there exists $i\in S_{d,k}$ such that there holds    $q_i^{(d-k)}< q_i^{(d-k+1)}$.
\end{itemize}
\end{definition}

\begin{remark}\rm 
The sets $Q^{(d-k)}$, $k\in\{1,\ldots h\}$, from Definition \ref{dai} are all nonempty. However, we shall often denote a $d$-dimensional daisy $Q$ as $Q^{(d)}\cup\ldots\cup Q^{(1)}$ even if $h<d-1$ in Definition \ref{dai}.
In such case, it is understood that $Q=Q^{(d)}\cup\ldots\cup Q^{(d-h)}$, where $Q^{(d-k)}\neq\emptyset$ if $k\in\{1,\ldots h\}$ and $Q^{(d-k)}=\emptyset $ if $k\in\{h+1,\ldots, d-1\}$.
\end{remark}

The description in Definition \ref{dai} is rather involved mainly due to the fact that the precise description of the position of an individual constituent $Q^{(m)}$ which is merely an (isometric) copy of a perfect $m$-dimensional daisy is quite complicated. We therefore provide an alternative description in terms of a collection of perfect daisies with a compatibility condition.  

\begin{definition}[Larger sequences]\label{larger} Let $(a_1,\ldots, a_n) \in \mathbb N^n$ and $(b_1,\ldots, b_{n+1}) \in \mathbb N^{n+1}$ be $DO1$-tuples. We say that $(b_1,\ldots, b_{n+1})$ is {\it larger} than $(a_1,\ldots, a_n)$ and write $(a_1,\ldots, a_n)\sqsubset(b_1,\ldots,b_{n+1})$ if $a_i\le b_{f(i)}$ for any $i\in\{1,\ldots, n\}$ and strict inequality holds for at least one of the indices $i=1,\ldots, n$. Here, $f$ is the increasing bijection from $\{1,\ldots, n\}$ onto $\{1,\ldots, n+1\}\setminus\{s\}$, where $s\in\{1,\ldots, n+1\}$ is the position corresponding to a value change in the sequence $(b_1,\ldots,b_{n+1})$, which is defined as in \eqref{valuechange}. 
\end{definition}

\begin{proposition}\label{characterization}
 A  $d$-dimensional daisy $Q=Q^{(d)}\cup\ldots\cup Q^{(d-h)}$ identifies with a collection of $(d-k)$-dimensional perfect daisies (according to  {\rm Definition \ref{rectdaisy}}), still denoted by $Q^{(d-k)}$, $k= 0,\ldots,h$, with coefficients $p_i^{(d-k)}$, $i=1,\ldots,d-k$, such that 
$(p_1^{(d-k)},\ldots, p_{d-k}^{(d-k)})\sqsubset(p_1^{(d-k+1)},\ldots, p_{d-k+1}^{(d-k+1)})$ for any $k\in\{1,\ldots, h\}$ in the sense of {\rm Definition \ref{larger}}. 
\end{proposition}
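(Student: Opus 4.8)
The plan is to establish the equivalence of the two descriptions by a careful bookkeeping argument that tracks, for each lower-dimensional slab $Q^{(d-k)}$, how the ``frozen'' coordinates (those in $\{s_1,\ldots,s_k\}$) encode exactly the value-change data of the sequences of coefficients. First I would set up notation: given a daisy $Q$ as in Definition~\ref{dai}, for each $k \in \{0,\ldots,h\}$ the factor $Q^{(d-k)} = A_1^{(d-k)} \times \cdots \times A_d^{(d-k)}$ has $d-k$ of its factors equal to intervals $\{1,\ldots,q_j^{(d-k)}\}$ (for $j \in S_{d,k}$) and $k$ of its factors equal to singletons $\{q_j^{(d-r_{j,k})}+1\}$ (for $j \in \{s_1,\ldots,s_k\}$). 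Thus $Q^{(d-k)}$ is, up to the embedding $\mathbb Z^{d-k} \hookrightarrow \mathbb Z^d$ determined by the singleton positions, isometric to the $(d-k)$-dimensional perfect daisy with coefficients $p_i^{(d-k)} := q_{g_k(i)}^{(d-k)}$, where $g_k\colon\{1,\ldots,d-k\}\to S_{d,k}$ is the increasing bijection. Conditions (3.1) and (3.2) are then \emph{exactly} the statement that $(p_1^{(d-k)},\ldots,p_{d-k}^{(d-k)})$ is a $DO1$-tuple, i.e.\ a perfect daisy in the sense of Definition~\ref{rectdaisy}; this gives the first half of the claim essentially for free.

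The substantive part is to show that the compatibility $(p^{(d-k)}) \sqsubset (p^{(d-k+1)})$ holds, and conversely that any collection satisfying $\sqsubset$ arises from a daisy. The key step is the identification $s_k = g_{k-1}(s((p^{(d-k+1)})))$: that is, the index $s_k \in S_{d,k-1}$ selected by the rule in Definition~\ref{dai}(2) is precisely the image, under the increasing bijection $g_{k-1}$, of the value-change position of the $DO1$-tuple $(p_1^{(d-k+1)},\ldots,p_{d-k+1}^{(d-k+1)})$. To see this I would compare the two definitions of value-change position: in \eqref{valuechange}, $s$ is the first index where the tuple drops (if it oscillates) or $1$ (if constant); the rule for $s_k$ picks $\min S_{d,k-1}$ if all $q_i^{(d-k+1)}$, $i \in S_{d,k-1}$, agree (the constant case), and otherwise the smallest $j \in S_{d,k-1}$ with $q_{\pi(j)}^{(d-k+1)} > q_j^{(d-k+1)}$, i.e.\ the first place a drop occurs reading along $S_{d,k-1}$ in increasing order — which under $g_{k-1}$ translates verbatim to the first drop in the reindexed tuple. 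Once this identification is in place, the definition of $A_{s_k}^{(d-k)} = \{q_{s_k}^{(d-k+1)}+1\}$ matches the geometry of ``stacking $Q^{(d-k)}$ on top of $Q^{(d-k+1)}$ at the face orthogonal to the $s_k$-th axis,'' and the map $f$ of Definition~\ref{larger} (the increasing bijection onto $\{1,\ldots,n+1\}\setminus\{s\}$) coincides, after reindexing by the $g$'s, with the inclusion $S_{d,k}\hookrightarrow S_{d,k-1}$. Then (3.3) says $p_i^{(d-k)} \le p_{f(i)}^{(d-k+1)}$ for all $i$, and (3.4) says the inequality is strict somewhere — which is exactly the relation $\sqsubset$.

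For the converse direction, I would argue by induction on $k$ that starting from a collection of perfect daisies $Q^{(d)}, \ldots, Q^{(d-h)}$ linked by $\sqsubset$, one can reconstruct the sequence $(s_k)$, the sets $S_{d,k}$, and the singleton factors $A_{s_k}^{(d-k)}$ so that Definition~\ref{dai} is satisfied; at each step $s_k$ is forced to be $g_{k-1}(s((p^{(d-k+1)})))$ by the same matching, and conditions (3.1)--(3.4) are read off from the $DO1$ property and $\sqsubset$. The verification that the resulting union $Q^{(d)} \cup \cdots \cup Q^{(d-h)}$ is well-defined (the slabs are disjoint and correctly positioned) follows from $q_i^{(d-k)} \le q_i^{(d-k+1)}$ together with the fact that $A_{s_k}^{(d-k)}$ sits just outside the range of the previous factor in that coordinate.

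The main obstacle I anticipate is purely notational rather than conceptual: the indices $r_{j,k}$, $\pi(j)$, and the nested families $S_{d,k}$, $g_k$ must be juggled simultaneously, and one has to be careful that a coordinate $j$ which becomes ``frozen'' at step $k_0$ (i.e.\ $j = s_{k_0}$) retains the correct singleton value $\{q_j^{(d-r_{j,k})}+1\} = \{q_j^{(d-k_0)}+1\}$ for all later $k > k_0$ — this is what $r_{j,k} = k_0 - 1$ guarantees, but threading this through the induction while keeping the reindexing maps consistent is where the bulk of the careful writing lies. I expect no genuine difficulty once the single identification $s_k \leftrightarrow s((p^{(d-k+1)}))$ is nailed down, since everything else is a translation between ``singleton-factor positions in $\mathbb Z^d$'' and ``value-change positions of $DO1$-tuples.''
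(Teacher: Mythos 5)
Your proposal is correct and follows essentially the same route as the paper: identify the coefficients $p_i^{(d-k)}=q_{g_k(i)}^{(d-k)}$ via the increasing bijections, read the $DO1$ property off conditions 3.1)--3.2), obtain $\sqsubset$ from 3.3)--3.4) through the matching of $s_k$ with the value-change position of $(p_1^{(d-k+1)},\ldots,p_{d-k+1}^{(d-k+1)})$ (a step the paper leaves implicit but which you rightly isolate as the key identification), and reconstruct $s_k$, $S_{d,k}$ and the coefficients recursively for the converse, exactly as in the paper. Only note the harmless index slip in your final paragraph: for $j=s_{k_0}$ one has $r_{j,k}=k_0-1$, so the frozen singleton is $\{q_j^{(d-k_0+1)}+1\}$, not $\{q_j^{(d-k_0)}+1\}$.
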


\begin{proof} 
Given a daisy from Definition \ref{dai}, we introduce the increasing bijection $b:\{1,\ldots, d-k\}\to S_{d,k}$ and coefficients $p_i^{(d-k)}:=q_{b(i)}^{(d-k)}$, $i=1,\ldots, d-k$, so that
\[
\prod_{i\in S_{d,k}}\{1,\ldots, q_i^{(d-k)}\}=\prod_{i=1}^{d-k} \{1,\ldots,p_i^{(d-k)}\}.
\] 
For any $k\in\{0,\ldots, h\}$, the sequence $\{1,\ldots, d-k\}\ni i\mapsto p_i^{(d-k)}$ is $DO1$, thanks to properties 3.1) and 3.2) of Definition \ref{dai}. In other words, any layer $Q^{(d-k)}$ can be identified with a $(d-k)$-dimensional perfect daisy with coefficients $p_i^{(d-k)}$, $i=1,\ldots,d-k$, according to Definition \ref{rectdaisy}, by dropping from any point $z=(z_1,\ldots, z_d)\in Q^{(d-k)}$ all the components $z_i$ such that $i\notin S_{d,k}$. 
Moreover, by properties 3.3) and 3.4) of Definition \ref{dai} we infer that $(p_i^{(d-k)},\ldots, p_{d-k}^{(d-k)})\sqsubset(p_1^{(d-k+1)},\ldots, p_{d-k+1}^{(d-k+1)})$, for any $k\in\{1,\ldots, h\}$, in the sense of {\rm Definition \ref{larger}}. 

On the other hand, given $DO1$-sequences $\{1,\ldots, k\}\ni i\mapsto p_i^{(d-k)}$ for $k\in\{0,\ldots,h\}$, suppose that $(p_i^{(d-k)},\ldots, p_{d-k}^{(d-k)})\sqsubset(p_1^{(d-k+1)},\ldots, p_{d-k+1}^{(d-k+1)})$ for any $k\in\{1,\ldots, h\}$.  Then, the numbers $s_j$ from Definition \ref{dai} are uniquely identified in terms of the value-change positions of these sequences. Indeed, we define $Q^{(d)}$ as the perfect $d$-dimensional daisy with coefficients $\{p_1^{(d)},\ldots, p_d^{(d)}\}$, then we define  $s_1$ as the value-change position for the sequence $(p_1^{(d)},\ldots, p_d^{(d)})$ according to formula \eqref{valuechange}, $S_{d,1}:=\{1,\ldots,d\}\setminus\{s_1\}$ and we define for $i\in S_{d,1}$ the numbers $q^{(d-1)}_i:=p^{(d-1)}_{g_1(i)}$, where $g_1(i)$ is the increasing bijection of $S_{d,1}$ onto $\{1,\ldots,d-1\}$.
Then we define $s_2$ from $S_{d,1}$ and from the sequence, $(q_i^{(d-1)})_{i\in S_{d,1}}$ as done in Definition \ref{dai}. Therefore, we recursively define, for $k=2,\ldots, h$, the numbers $q_i^{(d-k)}:=p^{(d-k)}_{g_k(i)}$, where $g_k(i)$ is the increasing bijection of $S_{d,k}$ onto $\{1,\ldots, d-k\}$, and then $s_{k+1}$ from $S_{d,k} = \{1, \ldots, d\}\setminus\{s_1,\ldots, s_k\}$ and the coefficients $q_i^{(d-k)}$ as done in Definition \ref{dai}. The relation $\sqsubset$ between sequences $p_i^{(k)}$ ensures that properties 3.3) and 3.4) of Definition \ref{dai} are satisfied. 
\end{proof}

\begin{remark}\label{rmk:daisy-constituents}\rm
A $d$-dimensional daisy $Q=Q^{(d)}\cup\ldots\cup Q^{(1)}$ can be characterized either by the coefficients $q_i^{(k)}$ from Definition \ref{dai} or by the coefficients $p_i^{(k)}$ from Proposition \ref{characterization}. In the sequel we will also refer to a subset of $\mathbb Z^d$ which is an isometric copy of an $m$-dimensional daisy ($m \le d$) simply as a daisy (as, e.g., in Proposition \ref{daisysection} and Corollary \ref{coro1} below). In particular, the constituents $Q^{(m)}$ of $Q$ are $m$-dimensional daisies. 
\end{remark}

In order to see that daisies are in fact the solutions found in Theorem \ref{theorem:SpecialSolutions} we note that, in view of Definition \ref{dai} and Proposition \ref{characterization}, daisies can also be characterized by matrices. To this end, we let $\mathcal{A}$ be the set of $(h+1) \times d$ matrices $ A = (a_{i,j})_{1 \le i \le h+1 \atop 1 \le j \le d}$ with $h \le d-1$ whose entries consist of dots and numbers in the following way. The first line $(a_{1,1}, \ldots, a_{1,d})$ is a $DO1$-tuple. The second line has a dot at the value change position $s_1 = s(a_{1,1}, \ldots, a_{1,d})$ of the first line, i.e., $a_{2,s_1} = \cdot$, and $(a_{2,1}, \ldots, a_{2,s_1-1},a_{2,s_1+1},\ldots,a_{2,d})$ is $DO1$ with $(a_{2,1}, \ldots, a_{2,s_1-1},a_{2,s_1+1},\ldots,a_{2,d}) \sqsubset (a_{1,1}, \ldots, a_{1,d})$. In general, the $i$-th line consists of $i-1$ dots at the positions $s_1, \ldots, s_{i-1}$, where $s_{k}$ is the value change position of the sequence of numbers in the $k$-th line, $k = 1, \ldots, i-1$, and the tuple of numbers that is obtained by omitting these dots is a $(d-i+1)$-dimensional $DO1$-tuple which is smaller (wrt $\sqsubset$) than the sequence of numbers in the previous line. 

Note that the set of daisies is in one-to-one correspondence with the set $\mathcal{A}$: If we denote the sequence of numbers in the $i$-th line of $A \in \mathcal{A}$ by $(p^{(d-i+1)}_1, \ldots, p^{(d-i+1)}_{d-i+1})$, $A$ corresponds to the daisy $Q = Q^{(d)} \cup \ldots \cup Q^{(d-h)}$ with $Q^{(d-i+1)} = \{1, \ldots, p_1^{(d-i+1)}\} \times \ldots \times \{1, \ldots, p_{d-i+1}^{(d-i+1)}\}$, $i = 1, \ldots, h+1$, and, conversely, each daisy arises in such a way, see Proposition \ref{characterization}. With respect to the geometric position of the individual perfect daisy $Q^{(d-i+1)}$, as detailed in Definition \ref{dai}, we note that the numbers within the $i$-line are also the $q_i^{(d-k+1)}$ coefficients and dots occupy the positions $s_j$ for $j\in \{1,\ldots, {i-1}\}$. A number $a$ in the matrix corresponds to the factor $\{1,\ldots, a\}$, and any dot in a column corresponds to the factor $\{a+1\}$, where $a$ is the first number that is found going up in such column. Finally we observe that the cardinality of the daisy is just the line by line sum of the product of all the numbers in each line. 

\noindent {\em Example.} Two $5$-dimensional examples of $Q=\cup_{k=1}^5 Q^{(k)}$ are 
\begin{equation*}
\begin{pmatrix} 5 & 5 & 4 & 4 & 4\\
                             4 & 3 & \cdot  & 3 & 3\\
                             3 &  \cdot &   \cdot & 3 & 2\\
                             2 &  \cdot  &   \cdot & 2 &  \cdot \\
                              \cdot  &   \cdot &   \cdot & 1 &  \cdot 
\end{pmatrix}\qquad\qquad
\begin{pmatrix} 7 & 7 & 7 & 7 & 7\\
                             \cdot & 4 & 3  & 3 & 3\\
                             \cdot & 3 & \cdot &   3 & 2            
\end{pmatrix}
\end{equation*}
In the second example, $Q^{(4)}=Q^{(5)}=\emptyset$.

\noindent {\em Example.} Two-dimensional daisies are subsets of $\mathbb Z^2$ of the form
\begin{align}\label{eq:two-d-daisy} 
  D^{(2)}_{a,b,c}
  :=\left\{\begin{array}{ccc}(\{1,\ldots, a\}\times\{1,\ldots, b\})\cup(\{a+1\}\times\{1,\ldots, c\})&\quad\mbox{ if $b=a$,}\\
  (\{1,\ldots, a\}\times\{1,\ldots, b\})\cup(\{1,\ldots, c\}\times\{b+1\})&\quad\mbox{ if $b+1=a$,}
\end{array}\right.
\end{align}
for given $b\in\mathbb N$, $a\in\{b,b+1\}$ and $c\in\{0,\ldots,a-1\}$, where it is understood that $\{1,\ldots c\}=\emptyset$ in case $c=0$.

Indeed, we have $D^{(2)}_{a,b,c}=Q^{(2)}\cup Q^{(1)}$, with $q_1^{(2)}=a$ and $q^{(2)}_2=b$ representing the coefficients of the perfect daisy $Q^{(2)}$. Moreover, we have $S_{2,0}=\{1,2\}$, $S_{2,1}=S_{2,0}\setminus \{s_1\}$, where
\[s_1=\left\{
\begin{array}{ll}
1&\quad\mbox{if $b=a$}\\
2&\quad\mbox{if $b+1=a$},
\end{array}\right.
\]
and $Q^{(1)}=A^{(1)}_1\times A^{(1)}_2$, where
\[
A_1^{(1)}=\left\{\begin{array}{ll}\{1,\ldots, c\}&\quad \mbox{if $s_1=2$},\\
\{1+a\}&\quad\mbox{if $s_1=1$},
\end{array}\right.\qquad \quad
A_2^{(1)}=\left\{\begin{array}{ll}\{1,\ldots, c\}&\quad\mbox{if $s_1=1$},\\
\{1+b\}&\quad\mbox{if $s_1=2$}.
\end{array}\right.
\]
Or simply in matrix form 
\[\begin{pmatrix} a & b\\
                             \cdot & c \\
\end{pmatrix} \quad\mbox{ if $a=b$},\qquad\quad  
\begin{pmatrix} a & b\\
                             c & \cdot \\
\end{pmatrix} \quad\mbox{if $a=b+1$},  
\]
reduced to $(a\;\;b)$ if $c=0$ (i.e. $Q^{(2)}=\emptyset$).

\begin{theorem}[Daisies are unique and $EIP^d$ minimizers]\label{unique}
For $n,d\in\mathbb N$, there exists a unique $d$-dimensional daisy $Q$ such that $\#Q=n$. Moreover, it coincides with the string of the first $n$ elements in $\mathbb N^d$ with respect to the order $\prec$. In particular, $Q$ is an $EIP^d$ minimizer. 
\end{theorem}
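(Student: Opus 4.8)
The plan is to establish the two assertions — existence/uniqueness of a $d$-dimensional daisy of cardinality $n$, and its coincidence with the initial string of the order $\prec$ — by a single induction on $d$, exploiting the recursive structure of both Definition \ref{ab} and Definition \ref{dai}. The base case $d=1$ is immediate: the only $1$-dimensional daisy of cardinality $n$ is $\{1,\ldots,n\}$, which is also the initial segment of $\mathbb N$ under $\prec$ (which reduces to $<$ in dimension one), and it is trivially an $EIP^1$ minimizer. For the inductive step I would work with the matrix encoding from the set $\mathcal A$ described just before the statement, since it linearizes the layered construction: a daisy is a stack of rows, each row a $DO1$-tuple in one fewer free coordinate, successive rows related by $\sqsubset$, and the cardinality is the sum over rows of the product of that row's entries.

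First I would prove \emph{existence}. Given $n$, I construct the matrix greedily from the top: choose the first row to be the $DO1$-tuple $(p_1^{(d)},\ldots,p_d^{(d)})$ with $p_1^{(d)} = \lceil n^{1/d}\rceil$ and the multiplicities of $p_1^{(d)}$ versus $p_1^{(d)}-1$ chosen so that the product does not exceed $n$ but adding one more would; this is possible because the products of consecutive $DO1$-tuples of length $d$ run through all values and the gaps are controlled. Then the residual $n - \prod p_i^{(d)}$ is strictly less than $\prod p_i^{(d)}$ minus a comparable quantity, and in fact one checks it is at most the product of the $DO1$-tuple obtained from $(p_1^{(d)},\ldots,p_d^{(d)})$ by deleting the coordinate at the value-change position $s$ — which is exactly the "room" available in the next layer. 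One then recurses on the residual in dimension $d-1$ with the $\sqsubset$-constraint from the first row; properties 3.1)–3.4) of Definition \ref{dai} translate into: the new row is $DO1$, is $\sqsubset$ the old row, and is nonempty (i.e. is genuinely smaller somewhere). The recursion terminates after at most $d$ steps because the dimension drops each time. Simultaneously I would prove \emph{uniqueness}: if two daisies have the same cardinality $n$, their top rows must coincide, because the product of the top row is forced to lie in the half-open interval determined by $n$ (any strictly larger $DO1$-tuple has product exceeding $n$, and any strictly smaller one leaves a residual too large to be absorbed by a single $\sqsubset$-smaller layer and its descendants — here the quantitative bound "residual $<$ capacity of all lower layers" is the crux). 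Then the residual cardinality is the same, and by induction (in dimension $d-1$, under the fixed $\sqsubset$-constraint, for which the analogous statement must be phrased) the remaining rows coincide.

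Next I would identify this unique daisy with the initial $\prec$-string $S_n$ of cardinality $n$. By Theorem \ref{theorem:SpecialSolutions}, $S_n$ is an $EIP^d$ minimizer, and the daisies are a nested family, so it suffices to show $S_n$ \emph{is} a daisy. I would read off the structure of $S_n$ directly from Definition \ref{ab}: instance 1) groups $\mathbb N^d$ into "shells" $\{\max x = m\}$; within a shell, instance 2) orders by the right-to-left lexicographic order on $\tilde x$, which is precisely the order in which one fills the $d$ faces (hyperplanes $x_i = m$) of the growing cube; and instance 3) says that on the "corner" pieces where several coordinates equal $m$, one recurses with the same order in lower dimension. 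Unwinding this, the first $n$ elements form a full sub-cube $\{1,\ldots,m-1\}^d$ (for the appropriate $m$) together with a partially filled system of faces and lower-dimensional daisies on the faces — which is exactly the layered description of Definition \ref{dai}, with the $\prec$-order dictating exactly the $DO1$ and $\sqsubset$ monotonicity. Hence $S_n$ is a daisy; by uniqueness it is \emph{the} daisy of cardinality $n$, and by Theorem \ref{theorem:SpecialSolutions} it is an $EIP^d$ minimizer.

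The main obstacle is the quantitative matching in existence and uniqueness: one must show that the residual cardinality after stripping the optimally-chosen top layer is neither too large (so that it still fits under a single $\sqsubset$-smaller layer plus its own descendants) nor negative, and that this pins the top layer \emph{uniquely}. This is a numerical estimate on products of $DO1$-tuples of a given length and their "deficits" when one coordinate is removed; it should follow from elementary inequalities of the type $\ell^{d} \le \prod p_i^{(d)} < (\ell+1)^d$ together with the observation that consecutive $DO1$-tuples change the product by at most a factor $(\ell+1)/\ell$, but getting the bookkeeping exactly right across the recursion — especially carrying the $\sqsubset$-constraint into the sub-problem and verifying property 3.4) (genuine strict decrease, i.e. nonemptiness of every claimed layer) — is where the real work lies. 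A secondary, more bureaucratic difficulty is making the induction hypothesis in dimension $d-1$ strong enough: it must be a statement about daisies constrained to lie below a fixed $DO1$-tuple, so the theorem as I induct on it is slightly more general than as stated, and I would phrase that generalized hypothesis carefully at the outset.
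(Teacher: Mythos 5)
Your plan is correct in outline, but it takes a genuinely different route from the paper. You propose to get existence and uniqueness by a greedy counting argument (top row $=$ the $\prec$-largest $DO1$-tuple whose product is at most $n$, recurse on the residual, the crux being the ``residual versus capacity of the lower layers'' estimate, with a strengthened induction hypothesis for constrained sub-daisies), and then, separately, to show that initial segments of $\prec$ are daisies by unwinding Definition \ref{ab}. The paper instead constructs an explicit bijection $\Phi$ between the matrix encodings $\mathcal{A}$ of daisies and $\mathbb N^d$ (replace each dot in the last row by the first number above it plus one; the inverse $\Psi$ strips off the largest dominated $DO1$-row, puts a dot at the rightmost maximum, and recurses) and verifies that the daisy encoded by $A$ is exactly $\{m\in\mathbb N^d:\ m\preceq\Phi(A)\}$. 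Since for each $n$ there is exactly one initial $\prec$-segment of cardinality $n$, existence, uniqueness and the identification with the $\prec$-string all follow at once, with no cardinality estimates whatsoever; the greedy construction you propose appears in the paper only afterwards, as Remark \ref{rmk:daisy-construction}. Your route buys an explicit formula for the daisy of cardinality $n$; the paper's route buys the elimination of all numerical bookkeeping in favour of a purely order-theoretic check.

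Two details of your sketch need repair, though both are fixable. First, $p_1^{(d)}=\lceil n^{1/d}\rceil$ is not always the correct leading coefficient: for $n=\ell^d+1$ the greedy top row is $(\ell,\ldots,\ell)$, since no $DO1$-tuple with an entry $\ell+1$ has product $\le n$; the right formulation is simply ``the $\prec$-largest $DO1$-tuple of length $d$ with product $\le n$''. Second, the residual is not absorbed by the next layer alone: if the top row has value-change position $s$ and minimal entry $p$, the residual is at most $(p+1)^{s-1}p^{d-s}-1$, whereas a single $\sqsubset$-smaller layer holds strictly less; it is the next layer \emph{together with all its descendants} whose total capacity equals $(p+1)^{s-1}p^{d-s}-1$, exactly the gap to the product of the next $DO1$-tuple -- this telescoping identity is what makes the cardinality intervals tile $\mathbb N$ and pins the top row uniquely. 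With these corrections your scheme closes, so I regard it as a viable, if heavier, alternative proof.
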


\begin{proof}
In view of Theorem \ref{theorem:SpecialSolutions} and our identification of daisies with matrices in $\mathcal{A}$, it suffices to show that there is a bijective mapping $\Phi : \mathcal{A} \to \mathbb N^d$ such that the daisy corresponding to $A \in \mathcal{A}$ is given by $\{ m \in \mathbb N^d : m \preceq \Phi(A) \}$. 

To define such $\Phi$ consider the last row $(a_{h+1,1}, \ldots, a_{h+1,d})$ of the daisy matrix $A = (a_{ij})_{1 \le i \le h+1 \atop 1 \le j \le d} \in \mathcal{A}$ and replace each dot $a_{h+1,j}$ with $a_{i,j}+1$ if $a_{i,j}$ is the first number that is found going up in column $j$. We define $n = \Phi(A) \in \mathbb N^d$ to be the $d$-tuple thus obtained. 

Conversely, suppose a tuple $n = (n_1, \ldots, n_d) \in \mathbb N^d$ is given. We define an $A = \Psi(n) \in \mathcal{A}$ by induction on the lines of $A$. If $n$ is a $DO1$-sequence, we stop and set $A = n$ (a perfect daisy). If $n$ is not a $DO1$-sequence, we consider the rightmost entry $n_j$ for which the maximum is attained, i.e., $n_j = \max\{n_1, \ldots, n_d\} > n_{j+1}, \ldots, n_{d}$ and let $a_{1,1} = \ldots = a_{1,j-1} = n_j$,  $a_{1,j} = \ldots = a_{1,d} = n_j-1$. (This is the largest $DO1$-sequence which is dominated by $n$.) We also fill the rest of the $j$-th column with dots. If $n' = (n_1, \ldots, n_{j-1}, n_{j+1}, \ldots, n_d)$ is a $DO1$-sequence, we set $(a_{21}, \ldots, a_{2,j-1}, a_{2,j+1}, \ldots a_{2,d}) = n'$ and stop (obtaining a daisy with $h = 1$). If not, we continue this procedure until a $DO1$-sequence is reached. Note that our choice of the rightmost maximal entry as the value-change position for the constructed $DO1$-sequence guarantees that indeed the sequence of numbers in a line of $A$ is always larger than the sequence of numbers in the next line of $A$. 

The assertion of Theorem \ref{unique} now follows from the following two observations: $\Phi$ and $\Psi$ are inverse to each other and the daisy decried by an $A \in \mathcal{A}$ is given by $\{ m \in \mathbb N^d : m \preceq \Phi(A) \}$. 

In order to see that $\Psi \circ \Phi = \mathrm{id}$ consider $A \in \mathcal{A}$ and set $n = \Phi(A)$. We observe that since the $DO1$-sequences of numbers within the lines of $A$ are ordered wrt $\sqsubset$, the index $s_1$ of the rightmost maximum of $n$ is the value change position of the first line and its value $n_{s_1}$ is given by $a_{1,s_1}+1$. This shows that the first line of $\Psi \circ \Phi(A)$ is indeed $(a_{1,1}, \ldots, a_{1,d})$. Now deleting the first line and $s_1$-th column, the same argument for the remaining part shows that the second line is reproduced correctly as well. Continuing in this way, wee indeed get that $\Psi \circ \Phi = \mathrm{id}$. 

To prove that also $\Phi \circ \Psi = \mathrm{id}$ we start with $n \in \mathbb N^d$ and set $A = \Psi(n)$. If $n$ is a $DO1$-sequence, clearly $\Phi(A) = n$. If not, then by $j$ denoting the largest index for which $n_j = \max\{n_1, \ldots, n_d\}$, we have $a_{1j} = n_j-1$ and $a_{ij} = \cdot$ if $j \ge 2$. By definition of $\Phi$ this gives $(\Phi(A))_j = n_j$. If $n' = (n_1, \ldots, n_{j-1}, n_{j+1}, \ldots, n_d)$ is a $DO1$-sequence, we also have set $(a_{21}, \ldots, a_{2,j-1}, a_{2,j+1}, \ldots a_{2,d}) = n'$ and so $\Phi(A) = n$. If not, we continue repeating the above step to finally obtain that indeed $\Phi(A) = n$.

Now suppose $A \in \mathcal{A}$ representing a daisy $Q = Q^{(d)} \cup \ldots \cup Q^{(h)}$ is given. We define $\tilde{A} = (\tilde{a}_{i,j})_{1 \le i \le h+1 \atop 1 \le j \le d}$ $(h \le d-1)$ by replacing each dot in $A$ with the coordinate it represents: For each column $j$, if $a_{1,j}, \ldots, a_{i,j} \neq \cdot$ and $a_{i+1,j} = \ldots = a_{h,j} = \cdot$, then $\tilde{a}_{i+1,j} = \ldots = \tilde{a}_{h,j} = a_{i,j} + 1$ while $\tilde{a}_{k,j} = a_{k,j}$ for $1 \le k \le i$. Recall that here $j$ is a value-change position of the $i$-line. So in fact the lines of $\tilde{A}$ are increasing with respect to $\prec$: $(\tilde{a}_{11}, \ldots, \tilde{a}_{1d}) \prec \ldots \prec (\tilde{a}_{h1}, \ldots, \tilde{a}_{hd})$. Also, by construction each perfect daisy $Q^{(k)}$ consists of precisely those points $m \in \mathbb N^d$ which satisfy $(\tilde{a}_{k-1,1}, \ldots, \tilde{a}_{k-1,d}) \prec m \preceq (\tilde{a}_{k,1}, \ldots, \tilde{a}_{k,d})$. Thus, $Q = \{ m \in \mathbb N^d : m \preceq \Phi(A) \}$.
\end{proof}

\begin{remark}[Explicit construction of daisies]\label{rmk:daisy-construction}\rm 
Explicitly, one finds the coefficients $p_i^{(d-k)}$, $i=1,\ldots,d-k$, $k=0,\ldots, h$ of a daisy $Q=Q^{(d)}\cup\ldots\cup Q^{(d-h)}$ of given cardinality $n$ inductively: $(p_1^{(d)},\ldots, p_{d}^{(d)})$ is the largest $DO1$-tuple wrt $\prec$ of length $d$ such that $p_1^{(d)} \cdot \ldots \cdot p_{d}^{(d)} \le n$ and, for $k \ge 1$, $(p_1^{(d-k)},\ldots, p_{d-k}^{(d-k)})$ is the largest $DO1$-tuple wrt $\prec$ of length $d-k$ such that $p_1^{(d-k)} \cdot \ldots \cdot p_{d-k}^{(d-k)} \le n - \# Q^{(d)} - \ldots - \# Q^{(d-k-1)}$ as long as this number is not zero. If it is zero for the first time, let $h = k+1$. Note that indeed 
\[ (p_1^{(d-k)},\ldots, p_{d-k}^{(d-k)})\sqsubset(p_1^{(d-k+1)},\ldots, p_{d-k+1}^{(d-k+1)}) \] 
for any $k\in\{1,\ldots, h\}$ since by construction, if $s(p_1^{(d-k+1)},\ldots, p_{d-k+1}^{(d-k+1)}) = s$ and $p_{d-k+1}^{(d-k+1)} =: p$, then 
\[ p_1^{(d-k)} \cdot \ldots \cdot  p_{d-k}^{(d-k)}) 
   < (p+1)^s p^{d-s} - (p+1)^{s-1} p^{d-s+1} 
   = (p+1)^{s-1} p^{d-s} \] 
and so $(p_1^{(d-k)} \cdot \ldots \cdot  p_{d-k}^{(d-k)}) \prec (p_1^{(d-k+1)},\ldots, p_{s-1}^{(d-k+1)}, p_{s+1}^{(d-k+1)} \ldots, p_{d-k+1}^{(d-k+1)})$. 
\end{remark} 

We conclude this section with a property of faces and sections of daisies. There is a similar result for general $EIP^d$ minimizers, see Corollary \ref{coro1}.

\begin{definition}[Sections]\label{sect}
Let $C\subset\mathbb Z^d$ be a nonempty set. For $s\in\{1,\ldots, d\}$ and $k\in\mathbb Z$ we define the $(d-1)$-dimensional {\it section} $S_{s,k}(C) := \{x\in C:\mathbf e_s\cdot x=k\}$ of $C$. 
\end{definition}

\begin{definition}[Faces]\label{faces}
If $\emptyset \neq C\subset\mathbb Z^d$, any nonempty $(d-1)$-dimensional section $S_{s,k}(C)$ for which $S_{s,k+1}(C) = \emptyset$ or $S_{s,k-1}(C) = \emptyset$ is called a {\it (lateral) face} of $C$ (with normal $\mathbf e_s$). If $P$ is a perfect $d$-dimensional daisy and $m \in \{0, \ldots, d-2\}$, we also define an {\it $m$-dimensional face} of $P$ to be any (nonempty) subset of the form $L_1 \cap \ldots \cap L_{d-m}$, where $L_i$ is a lateral face of $P$ with normal $\mathbf e_{s_i}$ and $1 \le s_1 < \ldots < s_{d-m} \le d$.  
\end{definition}

\begin{proposition}\label{daisysection}
Each $(d-1)$-dimensional section of a $d$-dimensional daisy is a $(d-1)$-dimensional daisy. 
\end{proposition}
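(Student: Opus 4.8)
The plan is to argue by induction on the dimension $d$, using the matrix description of daisies from Section \ref{D} together with Proposition \ref{characterization}. Fix a $d$-dimensional daisy $Q = Q^{(d)} \cup \ldots \cup Q^{(d-h)}$, a direction $s \in \{1, \ldots, d\}$ and a level $k \in \mathbb Z$ with $S_{s,k}(Q) \ne \emptyset$; by translation invariance we may take $Q \subset \mathbb N^d$ and $k \ge 1$. The key observation is that sectioning commutes with the union decomposition: $S_{s,k}(Q) = \bigcup_{m=d-h}^{d} S_{s,k}(Q^{(m)})$, so it suffices to understand how a section cuts through each constituent. Each $Q^{(m)}$ is, by Remark \ref{rmk:daisy-constituents}, an isometric copy of an $m$-dimensional perfect daisy, placed in $\mathbb Z^d$ along a coordinate subspace spanned by $\{\mathbf e_j : j \in S_{d, d-m}\}$ with the remaining coordinates pinned to fixed singletons $\{q_j^{(\cdot)} + 1\}$. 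Thus $S_{s,k}(Q^{(m)})$ is either (a) all of $Q^{(m)}$, if $s \notin S_{d, d-m}$ and the pinned $s$-coordinate equals $k$; (b) empty, if $s \notin S_{d, d-m}$ and the pinned coordinate differs from $k$; or (c) a section of the perfect daisy $Q^{(m)}$ in one of the directions in which it genuinely extends, if $s \in S_{d, d-m}$.

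For case (c) I would first dispatch the elementary sub-claim that a section of a \emph{perfect} $m$-dimensional daisy $P = \{1, \ldots, p_1\} \times \ldots \times \{1, \ldots, p_m\}$ (with the $p_i$ a $DO1$-tuple) in a coordinate direction is again a perfect daisy of dimension $m-1$: indeed $S_{s,k}(P)$ for $1 \le k \le p_s$ is just the product $\prod_{i \ne s} \{1, \ldots, p_i\}$, and deleting one entry from a $DO1$-tuple leaves a $DO1$-tuple (the oscillation cannot increase). So each nonempty $S_{s,k}(Q^{(m)})$ is, up to isometry, a perfect daisy of dimension $m - \mathbf 1[s \in S_{d,d-m}]$. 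The heart of the matter is then to check that the collection of these surviving pieces, ordered by decreasing dimension, again satisfies the compatibility condition $\sqsubset$ of Definition \ref{larger} between consecutive layers, so that Proposition \ref{characterization} identifies the union as a genuine $(d-1)$-dimensional daisy. Concretely, if $s$ is the direction being cut, I expect the cleanest route is to read this off the daisy matrix $A \in \mathcal A$: deleting column $s$ (and tracking which rows of $A$ have a dot there, those being the rows whose perfect-daisy constituent does \emph{not} see direction $s$ and hence survive only if $k$ matches the pinned value) produces either the daisy matrix of an $EIP^{d-1}$ minimizer directly, or one must insert the section level $k$ appropriately; in all cases the row-wise $DO1$ property and the $\sqsubset$ chain are inherited from those of $A$, because deleting a coordinate from each of a $\sqsubset$-chain of $DO1$-tuples preserves the chain (one checks the increasing bijection $f$ of Definition \ref{larger} restricts correctly once the value-change positions are re-identified after deletion).

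The main obstacle I anticipate is precisely this bookkeeping in the last step: when we cut in direction $s$, the value-change positions $s_1, \ldots, s_h$ of the various rows of $A$ may or may not coincide with $s$, and removing column $s$ can shift these positions and alter which row first exhibits a value change — so one has to verify that the re-identified value-change positions of the sectioned layers are consistent with Definition \ref{larger} applied to the \emph{deleted} tuples, rather than the original ones. A careful case split on whether $s < s_i$, $s = s_i$, or $s > s_i$ for each relevant layer, combined with the fact that a $DO1$-tuple with one entry deleted has oscillation at most $1$ and hence a well-defined (possibly trivial) value-change position, should close this; the cases where $s$ lies on the "short" part versus the "long" part of a $DO1$-row are the ones to handle with care. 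Once the $\sqsubset$-chain is confirmed, Proposition \ref{characterization} finishes the proof.
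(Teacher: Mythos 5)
Your route (section each perfect-daisy constituent, observe that a coordinate section of a perfect daisy is again a perfect daisy, then recover the daisy structure of the union by checking the $\sqsubset$-chain and invoking Proposition \ref{characterization}) is different from the paper's, but the step you defer as ``bookkeeping'' is not just bookkeeping: the key claim that ``deleting a coordinate from each of a $\sqsubset$-chain of $DO1$-tuples preserves the chain'' is false, because the strict inequality required in Definition \ref{larger} may be carried only by the deleted coordinate. Concretely, take $d=3$ and the daisy $Q=\{1,2\}^3\cup\bigl(\{3\}\times\{1,2\}\times\{1\}\bigr)$, i.e.\ the matrix with rows $(2\;\,2\;\,2)$ and $(\cdot\;\,2\;\,1)$. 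The section $S_{3,1}(Q)$ projects to $\{1,2\}^2\cup\bigl(\{3\}\times\{1,2\}\bigr)=\{1,2,3\}\times\{1,2\}$. This is indeed a daisy, but only as a single perfect daisy with coefficients $(3,2)$: the decomposition inherited from the constituents consists of the tuples $(2,2)$ and $(2)$, and $(2)\sqsubset(2,2)$ fails for lack of a strict inequality (equivalently, property 3.4 of Definition \ref{dai} is violated), so Proposition \ref{characterization} cannot be applied to the inherited pieces. Repairing this requires a genuinely new ingredient, namely recognizing when sectioned pieces coalesce and re-deriving the canonical decomposition. Further unaddressed points in the same spirit: a constituent with $s$ among its free directions can have empty section (when $k$ exceeds its extent), so one must rule out ``dimension gaps'' between the sectioned prefix of constituents and the pinned constituents that survive in full, and one must verify the $\sqsubset$-relation between a sectioned layer and the first fully surviving layer below a dead one --- a relation that is not literally present in the original matrix. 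These facts are true, but each needs an argument using the strictness in $\sqsubset$, and none is supplied.

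For comparison, the paper's proof avoids the entire structural case analysis by using the order-theoretic characterization: by Theorem \ref{unique}, a daisy is exactly the set of the first $n$ points of $(\mathbb N^d,\prec)$. If $v\in S_{s,k}(Q)$ and $w\in S_{s,k}(\mathbb N^d\setminus Q)$, then $v\prec w$, and since $v_s=w_s=k$ one checks from Definition \ref{ab} that dropping the $s$-th coordinate preserves this relation; hence the projected section is an initial segment of $(\mathbb N^{d-1},\prec)$ and is therefore a $(d-1)$-dimensional daisy, again by Theorem \ref{unique}. If you wish to keep your matrix-based approach, you must add the merging/renormalization step above and the missing case analysis; otherwise the order-theoretic argument is both shorter and safer.
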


\begin{proof}
Let $Q$ be a $d$-dimensional daisy and wlog assume that that $S_{s,k}(Q)\neq\emptyset$. Let $P : S_{s,k}(\mathbb N^d) \to \mathbb N^{d-1}$ be the bijective mapping $P(z_1, \ldots, z_{s-1}, k, z_{s+1}, \ldots z_d) = (z_1, \ldots, z_{s-1}, z_{s+1}, \ldots z_d)$.  We identify $S_{s,k}(Q)$ with $\mathcal S := P(S_{s,k}(Q))$. Now observe that each point of $\mathcal S$ can be written as $P(v)$ for some $v\in S_{s,k}(Q)\subseteq Q$ and each point in $\mathbb N^{d-1}\setminus \mathcal S$ can be written as $P(w)$ for some  $w\in S_{s,k}(\mathbb N^{d}\setminus Q)\subseteq \mathbb N^d\setminus Q$. Therefore, we have $v\prec w$ by Theorem \ref{unique}. Since $w_s = k = v_s$ this also gives $ \mathcal S \ni P(v)\prec P(w)\notin \mathcal S $. We have thus proven that for any $x\in\mathcal S$ and any $y\notin\mathcal S$, there holds, $x\prec y$. This shows that $\mathcal S$ is the string of the first $\#\mathcal S$ points of $\mathbb N^{d-1}$ with respect to the order relation $\prec$. By Theorem \ref{unique}, $\mathcal S$ is a daisy.
\end{proof}

\section{Lower bound}\label{lower}

\begin{definition}[Scaling parameter]\label{parameter}
For $\ell\in\mathbb N$, $d\in\mathbb N$ we define $h_{\ell,d}:=\ell^{\,2^{1-d}}$.
\end{definition}

The next statement makes use of the notation of Definition \ref{sect}. It extends some rearrangement procedures that have already been introduced in \cite{MPS, MPSS}, whose main property is the monotonicity of the edge perimeter. 

\begin{proposition}[Decreasing rearrangement]\label{rear}
Let $C\in\mathbb Z^d$ be a bounded nonempty set. Let $s\in\{1,\ldots, d\}$ and $k\in\mathbb Z$. Let $K_s:=\{k_1,\ldots, k_n\}$ denote the  finite strictly increasing sequence of  integers such that $S_{s,k}(C)\neq\emptyset\iff k\in K_s$. Let $\sigma:\{1,\ldots,n\}\to K_s$ be a bijection such that  $\#S_{s,\sigma(i)}(C)\ge \#S_{s,\sigma(j)}(C)$ for any $1\le i\le j\le n$.
Let $D^{(d-1)}_{s,k}$ be the $(d-1)$-dimensional daisy with the same cardinality as $S_{s,k}(C)$. Finally, let $C_s\subset \mathbb Z^d$ denote the decreasing rearrangement of $C$ in the $\mathbf e_s$ direction, i.e., the unique configuration whose nonempty sections orthogonal to $\mathbf e_s$ are given by $P S_{s,k}(C_s)=D^{(d-1)}_{s,\sigma(k)}$, $k=1,\ldots, n$, where $P(z_1, \ldots, z_d) = (z_1, \ldots, z_{s-1}, z_{s+1}, \ldots z_d)$. Then $\#\Theta_d(C_s)\le \#\Theta_d(C)$.
\end{proposition}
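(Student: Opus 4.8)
The plan is to establish the edge-perimeter inequality $\#\Theta_d(C_s) \le \#\Theta_d(C)$ by decomposing the edge boundary into a contribution from bonds parallel to $\mathbf e_s$ and a contribution from bonds orthogonal to $\mathbf e_s$, and to control each of these separately under the rearrangement. Write $\Theta_d^{\parallel}(C)$ for the set of boundary edges $(x,y)$ with $x-y = \pm\mathbf e_s$ and $\Theta_d^{\perp}(C)$ for the rest, so that $\#\Theta_d(C) = \#\Theta_d^{\parallel}(C) + \#\Theta_d^{\perp}(C)$, and similarly for $C_s$. The first key observation is that the orthogonal boundary decouples over slices: for each $k$, the contribution of $S_{s,k}$ to $\#\Theta_d^{\perp}$ is exactly the $(d-1)$-dimensional edge perimeter $\#\Theta_{d-1}(P S_{s,k}(C))$. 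Since $PS_{s,k}(C_s) = D^{(d-1)}_{s,\sigma(k)}$ is a $(d-1)$-dimensional daisy, hence an $EIP^{d-1}$ minimizer by Theorem \ref{unique}, with the same cardinality as one of the slices of $C$, summing over $k$ and using that $\sigma$ is a bijection of slices gives $\#\Theta_d^{\perp}(C_s) \le \#\Theta_d^{\perp}(C)$ immediately.

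\textbf{The main work} is the parallel part. Here one counts, for each column $\{te_s + w : t \in \mathbb Z\}$ with $w$ ranging over $\mathbb Z^{d-1}$, the number of times the column enters or leaves $C$. For $C_s$, whose slices are nested daisies stacked in decreasing order of cardinality, each column that meets $C_s$ meets it in a single interval $\{1, \ldots, m(w)\}$ (this monotone-stacking property is exactly what the decreasing rearrangement buys us), so it contributes exactly $2$ to $\#\Theta_d^{\parallel}(C_s)$ if it is nonempty and $0$ otherwise; hence $\#\Theta_d^{\parallel}(C_s) = 2\,\#\{w : \text{column through } w \text{ meets } C_s\}$. For the original set $C$ each nonempty column contributes at least $2$, so it suffices to show that the number of nonempty columns of $C_s$ is at most the number of nonempty columns of $C$. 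A column through $w$ is nonempty for $C_s$ iff $w$ lies in $PS_{s,\sigma(i)}(C_s) = D^{(d-1)}_{s,\sigma(i)}$ for the largest $i$, i.e.\ iff $w \in D^{(d-1)}_{s,\sigma(1)}$ (the biggest daisy, which contains all the smaller nested ones); so the number of nonempty columns of $C_s$ equals $\max_k \# S_{s,k}(C)$. For $C$, the union $\bigcup_k PS_{s,k}(C)$ has cardinality at least $\max_k \#S_{s,k}(C)$, which is the desired bound.

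\textbf{The step I expect to be the main obstacle} is verifying the monotone-stacking / nestedness claim for $C_s$, namely that the daisies $D^{(d-1)}_{s,\sigma(1)} \supseteq D^{(d-1)}_{s,\sigma(2)} \supseteq \ldots$ are genuinely nested as subsets of $\mathbb Z^{d-1}$ (after the identification $P$), and correspondingly that $C_s$ meets each column in a single contiguous block starting at the bottom slice. Nestedness of daisies of non-increasing cardinality is not obvious from Definition \ref{rectdaisy}/\ref{dai} alone — one needs that the daisy of cardinality $n$ is always contained in the daisy of cardinality $n'$ whenever $n \le n'$, which does follow from the fact (Theorem \ref{unique}) that the daisy of cardinality $n$ is the string of the first $n$ elements of $\mathbb N^{d-1}$ with respect to the fixed total order $\prec$, so the sequence of daisies is nested by construction. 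Once this is in hand, the contiguity of columns of $C_s$ and the two perimeter inequalities above combine to give $\#\Theta_d(C_s) = \#\Theta_d^{\parallel}(C_s) + \#\Theta_d^{\perp}(C_s) \le \#\Theta_d^{\parallel}(C) + \#\Theta_d^{\perp}(C) = \#\Theta_d(C)$, completing the proof. A minor technical point is handling the indexing convention in the statement (the sections of $C_s$ are placed at levels $1, \ldots, n$ rather than at the original levels $k_1, \ldots, k_n$), but since edge perimeter is translation-invariant in each coordinate and only the multiset of slice-cardinalities and their adjacency structure matters, this causes no difficulty.
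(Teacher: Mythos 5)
Your proof is correct and takes essentially the same route as the paper: both split the edge boundary into contributions orthogonal and parallel to $\mathbf e_s$, handle the orthogonal part slice-by-slice using that the $(d-1)$-dimensional daisies are $EIP^{d-1}$ minimizers of the same cardinality, and reduce the parallel part to the fact that for the rearranged set only the largest slice contributes. The paper phrases the parallel estimate in terms of bonds between consecutive sections, bounding them by $\min\{f(k_{i-1}),f(k_i)\}$ and dropping the maximal slice, whereas you count entry/exit boundary edges along columns; the two are equivalent via $\#\Theta^{\parallel}_d(C)+2b_s(C)=2\#C$, and your explicit appeal to Theorem \ref{unique} for the nestedness of the stacked daisies correctly supplies the fact the paper uses implicitly when evaluating $b_s(C_s)$.
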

\begin{proof}
For any $k\in K_s$, we look at  $(d-1)$-dimensional configurations and we have $b(D^{(d-1)}_{s,k})\ge b(S_{s,k}(C))$, since daisies minimize the edge perimeter and maximize the number of bonds. This shows that the total number of bonds in directions that are orthogonal to $\mathbf e_s$ does not increase after the rearrangement. If $n=1$, the proof is concluded. Suppose instead that $n>1$, and we are left to check the number $b_s(\cdot)$ of bonds  in the direction of $\mathbf e_s$. For $k\in K_s$ we use  the shorthand $f(k):=\#S_{s,k}(C)=\# D^{(d-1)}_{s,k}$. Moreover, we define $I \in\{ 1, \ldots, n \}$ such that $k_I = \sigma(1)$ so that $f(k_I)\ge f(k_i)$ for any $i\in\{1,\ldots, n\}$.
 By counting the bonds in the $\mathbf e_s$ direction
as sum of bonds between couples of consecutive sections, 
 we have
\[\begin{aligned}
b_s(C)&\le \sum_{i=2}^n \min\{f(k_{i-1}), f(k_i)\}\le 
%\sum_{i\in\{2,\ldots,n\}\setminus\{I,I+1\}} \min\{f(k_{i-1}), f(k_i)\}+ \min\{f(k_{I-1}), f(k_I)\}+\min\{f(k_{I+1}), f(k_I)\}%+\sum_{i=I+2}^n \min\{f(k_{i-1}), f(k_i)\}
 \sum_{i\in\{1,\ldots, n\}\setminus\{I\}} f(k_i)=\sum_{i=2}^n f(k_{\sigma(i)})=b_s(C_s),
%\sum_{k\in K} \min\{ \#D^{(d-1)}_{s,\pi(k)},\#D^{(d-1)}_{s,k}\}
%\\&\le \sum_{j=1}^{\# K} \min\{\# D^{(d-1)}_{s,\pi(\sigma(j))}, \#D^{(d-1)}_{s,\sigma(j)}\}\le
%\sum_{j=1}^{\# K} \#D^{(d-1)}_{s,\sigma(j)}=b_s(C_s),
\end{aligned}
\]
where the second inequality is obtained by using $\min\{f(k_{i-1}), f(k_i)\}\le f(k_{i-1})$ for $i\in\{2,\ldots, I\}$ (only in case $I>1$) and $\min\{f(k_{i-1}), f(k_i)\}\le f(k_i)$ if $i\in\{I+1,\ldots, n\}$.
The proof is concluded.
\end{proof}

Arguing by contradiction we deduce the following result (whose converse is false as seen already in dimension $2$ by taking 
a configuration such as $\{(1,1),(1,2),\ldots,(1,n)\}$, $n\in\mathbb N, n\ge 4$).
\begin{corollary}\label{coro1}
Let $C$ be an $EIP^d$ minimizer. Then each $(d-1)$-dimensional section is an $EIP^{d-1}$ minimizer.
\end{corollary}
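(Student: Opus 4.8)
\textbf{Proof plan for Corollary \ref{coro1}.}

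The plan is to argue by contradiction using the decreasing rearrangement of Proposition \ref{rear}. Suppose $C$ is an $EIP^d$ minimizer but some $(d-1)$-dimensional section is \emph{not} an $EIP^{d-1}$ minimizer. First I would fix the direction: after relabelling, assume there exist $s\in\{1,\ldots,d\}$ and $k_0\in\mathbb Z$ such that $S_{s,k_0}(C)\neq\emptyset$ and $\#\Theta_{d-1}\big(PS_{s,k_0}(C)\big) > EIP^{d-1}(\#S_{s,k_0}(C))$, where $P$ is the coordinate-dropping projection from Proposition \ref{rear}. In other words, replacing $S_{s,k_0}(C)$ by a $(d-1)$-dimensional daisy $D^{(d-1)}_{s,k_0}$ of the same cardinality strictly decreases its edge perimeter.

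Next I would track exactly what the rearrangement $C\mapsto C_s$ does to the count of bonds. As in the proof of Proposition \ref{rear}, split $b(C)=b_s(C)+b_s^\perp(C)$, where $b_s^\perp$ counts bonds in directions orthogonal to $\mathbf e_s$; this equals $\sum_{k\in K_s}b\big(S_{s,k}(C)\big)$, and similarly $b_s^\perp(C_s)=\sum_{k}b\big(D^{(d-1)}_{s,\sigma(k)}\big)$. Since daisies are $EIP^{d-1}$ minimizers (Theorem \ref{unique}), each summand is nondecreased, and by our assumption the summand coming from $k_0$ \emph{strictly} increases: $b\big(D^{(d-1)}_{s,k_0}\big) > b\big(S_{s,k_0}(C)\big)$. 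Hence $b_s^\perp(C_s) > b_s^\perp(C)$, i.e. the orthogonal bond count strictly increases. Meanwhile, the argument already carried out in Proposition \ref{rear} shows $b_s(C)\le b_s(C_s)$, so the bonds in the $\mathbf e_s$ direction do not decrease either. Adding the two, $b(C_s) > b(C)$, equivalently $\#\Theta_d(C_s) < \#\Theta_d(C)$ (using $\#\Theta_d+2b = 2d\,\#C$ with $\#C_s = \#C$). This contradicts the minimality of $C$, and the corollary follows.

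The one point requiring a little care — and the main (mild) obstacle — is that the rearrangement simultaneously \emph{permutes} the sections along $\mathbf e_s$ (via $\sigma$) and \emph{replaces each section by a daisy}, so I must make sure the two effects do not interact adversely: the permutation $\sigma$ is chosen so the section cardinalities become nonincreasing, which is precisely what makes the telescoping bound $b_s(C)\le\sum_{i=2}^n\min\{f(k_{i-1}),f(k_i)\}\le b_s(C_s)$ in Proposition \ref{rear} work, and the daisy replacement only affects $b_s^\perp$; the two contributions to $b$ are genuinely additive because a bond either lies in a single section or joins two consecutive sections. A second small point is the edge case: if $C$ has only one nonempty section orthogonal to $\mathbf e_s$ (i.e. $n=1$ in the notation of Proposition \ref{rear}), there are no $\mathbf e_s$-bonds to worry about, $C$ is itself (a translate of) its unique section together with the trivial structure, and the claim reduces directly to the assertion that that section minimizes $EIP^{d-1}$, which is again immediate from the strict-decrease assumption yielding a contradiction.
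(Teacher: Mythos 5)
Your proposal is correct and takes essentially the same route as the paper: you argue by contradiction, observing that replacing the defective section by a daisy strictly increases the orthogonal bond count while the bonds in the $\mathbf e_s$ direction do not decrease by the argument of Proposition \ref{rear}, so $\#\Theta_d(C_s)<\#\Theta_d(C)$, contradicting minimality. This is precisely the paper's own proof.
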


\begin{proof}
If $S_{s,k}(C)$ were not an $EIP^{d-1}$ minimizer, then $\#\Theta_{d-1}(S_{s,k}(C)) > \#\Theta_{d-1}(D_{s,k}^{(d-1)})$ and the above proof shows $\#\Theta_{d}(C_s) < \#\Theta_{d}(C)$. 
\end{proof}

\begin{lemma}\label{easy2}
Let $\ell\in\mathbb N$.
Let $p\in\mathbb N$ be such that $p<\ell$. Suppose that
\[
M:=\{1,\ldots, \ell-p\}\times\{1,\ldots,\ell\}^{d-2}\times\{1,\ldots, \ell+p\}
\]
is an $EIP^d$ minimizer. Then
\[
Q:=\{1,\ldots, \ell-p\}\times\{1,\ldots,\ell\}^{d-1}
\]
is an $EIP^d$ minimizer as well.
\end{lemma}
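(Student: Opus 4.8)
The plan is to relate the edge perimeters of $M$ and $Q$ by comparing them with suitable competitors, exploiting the fact that $Q$ is obtained from $M$ by ``cutting off'' the extra slices in the $\mathbf e_d$ direction. First I would compute $\#\Theta_d(M)$ explicitly: since $M$ is a box $\{1,\ldots,\ell-p\}\times\{1,\ldots,\ell\}^{d-2}\times\{1,\ldots,\ell+p\}$, its edge perimeter is $2$ times the sum over each coordinate direction $j$ of the product of the side lengths in the other $d-1$ directions. Likewise $\#\Theta_d(Q)$ is computed from the box $\{1,\ldots,\ell-p\}\times\{1,\ldots,\ell\}^{d-1}$. The difference $\#\Theta_d(M) - \#\Theta_d(Q)$ is then an explicit polynomial expression in $\ell$ and $p$.

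Next I would argue by contradiction: suppose $Q$ is \emph{not} an $EIP^d$ minimizer, so there is a set $Q'$ with $\#Q' = \#Q = (\ell-p)\ell^{d-1}$ and $\#\Theta_d(Q') < \#\Theta_d(Q)$. From $Q'$ I want to build a competitor $M'$ for the isoperimetric problem at cardinality $\#M = (\ell-p)\ell^{d-2}(\ell+p)$ with $\#\Theta_d(M') \le \#\Theta_d(Q') + \big(\#\Theta_d(M) - \#\Theta_d(Q)\big)$, which would contradict minimality of $M$. The natural construction is to stack $2p$ copies of a $(d-1)$-dimensional section of $Q$ (or rather of an appropriate $(d-1)$-dimensional minimizer) on top of $Q'$ in the $\mathbf e_d$ direction. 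More precisely: take the decreasing rearrangement $(Q')_d$ of $Q'$ in direction $\mathbf e_d$ (which does not increase the edge perimeter by Proposition~\ref{rear}), so that its sections are nested daisies; its largest section $S$ has cardinality at most... here I need a lower bound on the number of nonempty sections. Since $\#Q' = (\ell-p)\ell^{d-1}$, either some section has cardinality $\ge (\ell-p)\ell^{d-2}$ (then stacking extra copies of that section is cheap) — the bookkeeping has to be arranged so the added perimeter matches the box computation.

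A cleaner route, which I would actually pursue, is the following. Use Corollary~\ref{coro1}: since $M$ is an $EIP^d$ minimizer, each of its $(d-1)$-dimensional sections orthogonal to $\mathbf e_d$ is an $EIP^{d-1}$ minimizer, and these sections are all equal to the box $\{1,\ldots,\ell-p\}\times\{1,\ldots,\ell\}^{d-2}$, which is therefore an $EIP^{d-1}$ minimizer of its cardinality $(\ell-p)\ell^{d-2}$. Now, given a hypothetical better competitor $Q'$ for $Q$, apply Proposition~\ref{rear} in direction $\mathbf e_d$ to get $(Q')_d$ with sections that are $(d-1)$-dimensional daisies $D_1 \supseteq D_2 \supseteq \cdots$ of total cardinality $(\ell-p)\ell^{d-1}$; the edge perimeter splits as the ``horizontal'' part $\sum_i \#\Theta_{d-1}(D_i)$ plus the ``vertical'' part $2\#D_1$ (telescoping the $\mathbf e_d$-bonds between nested sections). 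Build $M'$ by inserting, right after the top section $D_1$, exactly $2p$ additional copies of $D_1$ — wait, that changes cardinality by $2p\,\#D_1$, not by the needed $2p(\ell-p)\ell^{d-2}$ unless $\#D_1 = (\ell-p)\ell^{d-2}$. So instead I would replace $D_1$ by a full box slice: since $D_1$ is a daisy with $\#D_1 \le$ the needed amount is not guaranteed... The resolution is to take the competitor for $M$ to be $Q'$ with a \emph{box} $\{1,\ldots,\ell-p\}\times\{1,\ldots,\ell\}^{d-2}\times\{1,\ldots,2p\}$ glued on along a matched face (possible after translating and using that a box section is an $EIP^{d-1}$ minimizer, so it fits together with low interface cost), and bound the total perimeter; the extra cost over $\#\Theta_d(Q')$ is at most the perimeter contribution one computes for passing from $Q$ to $M$.

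\textbf{Main obstacle.} The technical heart is the gluing/bookkeeping step: one must exhibit a single competitor $M'$ of the correct cardinality $(\ell-p)\ell^{d-2}(\ell+p)$ whose edge perimeter exceeds $\#\Theta_d(Q')$ by \emph{at most} $\#\Theta_d(M)-\#\Theta_d(Q)$, so that $\#\Theta_d(M') < \#\Theta_d(M)$ follows from the assumed strict inequality $\#\Theta_d(Q') < \#\Theta_d(Q)$. Getting this inequality tight requires choosing the attached ``collar'' of $2p$ slices so that (a) each slice is itself a $(d-1)$-dimensional $EIP^{d-1}$ minimizer of cardinality $(\ell-p)\ell^{d-2}$ (a box, hence by Corollary~\ref{coro1} consistent with what appears in $M$), and (b) the interface between $Q'$ and the collar, and between consecutive collar slices, is paid for correctly — which is where one uses that after decreasing rearrangement the top section of $Q'$ can be taken to have cardinality at least $(\ell-p)\ell^{d-2}$ (else $Q'$ would have too many sections and one re-runs the estimate). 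Handling the boundary case $\#D_1 < (\ell-p)\ell^{d-2}$ and making the vertical-bond count telescope exactly is the step I expect to require the most care.
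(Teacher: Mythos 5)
Your overall strategy (argue by contradiction, attach to a better competitor for $Q$ a collar of the same cardinality as $M\setminus Q$, and do bond bookkeeping so that the interface is paid in full) is exactly the paper's strategy, but the step you yourself single out as the ``main obstacle'' is precisely the step the paper resolves with an idea that is missing from your proposal, so as written there is a genuine gap. You work with an arbitrary better competitor $Q'$ and its decreasing rearrangement, and you cannot rule out that its top (or facing) section is too small to absorb all $(\ell-p)\ell^{d-2}$ interface bonds; your suggestion to ``re-run the estimate'' in that case is not an argument. The paper instead replaces $Q'$ by the \emph{daisy} $D$ with $\#D=\#Q=(\ell-p)\ell^{d-1}$ and $b(D)>b(Q)$: since $D\subseteq\{1,\ldots,\ell\}^d$ it has at most $\ell$ nonempty sections orthogonal to $\mathbf e_d$, and since sections of a daisy are nonincreasing in cardinality, $\#D\le \ell\,\#S_{d,1}(D)$ forces $\#S_{d,1}(D)\ge(\ell-p)\ell^{d-2}$. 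Moreover, the collar is not a box slice as in your proposal: the paper first takes the decreasing rearrangement $\overline M$ of $M$ in direction $\mathbf e_d$, so that the $p$ detached layers are $(d-1)$-dimensional daisies of cardinality $(\ell-p)\ell^{d-2}$; because daisies are totally ordered by inclusion, each point of the facing collar slice then automatically has its neighbor in $S_{d,1}(D)$, so exactly $(\ell-p)\ell^{d-2}$ interface bonds are restored and $b(D\cup T^*)>b(\overline M)$, contradicting minimality. Without these two ingredients (comparison with the daisy of cardinality $\#Q$ rather than an arbitrary competitor, and daisy-shaped collar slices so that the nesting of daisies guarantees a full interface), your box-shaped collar need not be contained in the facing section even when that section is large, and the tight inequality you need does not follow.

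Two smaller points: the collar has $p$ layers, not $2p$ (the cardinality deficit is $\#M-\#Q=p(\ell-p)\ell^{d-2}$), so your repeated use of $2p$ slices is an arithmetic slip; and your appeal to Corollary~\ref{coro1} to note that $\{1,\ldots,\ell-p\}\times\{1,\ldots,\ell\}^{d-2}$ is an $EIP^{d-1}$ minimizer is correct but plays no role in closing the gap above.
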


\begin{proof}
We observe that $M=Q\cup T$, where $T:=\{1,\ldots,\ell-p\}\times\{1,\ldots,\ell\}^{d-2}\times\{\ell+1,\ldots,\ell+p\}.$
The number of bonds connecting these two blocks is $(\ell-p)\ell^{d-2}$. 

We take the decreasing rearrangement (see Proposition \ref{rear}) of $M$ in the direction of $\mathbf e_d$. We get a configuration $\overline{M}$ whose sections $S_{d,k}(\overline M)$ are nonempty for $k=1,\ldots, \ell+p$ so that $\overline M=\bigcup_{k=1}^{\ell+p} S_{d,k}(\overline M)$. By considering $S_{d,k}(\mathbb Z^d)$ as a copy of $\mathbb Z^{d-1}$, each of such sections identifies with the $(d-1)$-dimensional daisy of cardinality  $(\ell-p)\ell^{d-2}$. Since $M$ is an $EIP^d$ minimizer, then $\overline{M}$ is an $EIP^d$ minimizer as well by Proposition \ref{rear}, and it is itself a union of  two blocks  $\overline Q$ and $\overline T$, where
\[
\overline Q:= \bigcup_{k=1}^{\ell} S_{d,k}(\overline M),\qquad \overline T:= \bigcup_{k=\ell+1}^{\ell+p} S_{d,k}(\overline M),
\]
with $\# Q=\#\overline Q$, $b(Q)=b(\overline Q)$, $\# T=\#\overline T$, $b(T)=b(\overline T)$, and 
\begin{equation}\label{qt}
b(\overline M)=b(\overline T)+b(\overline Q) + (\ell-p)\ell^{d-2}
\end{equation}

Now, assuming  that $Q$ is not an $EIP^d$ minimizer, we shall prove that $\overline M$ is not an $EIP^d$ minimizer either, thus reaching a contradiction and concluding the proof. Indeed, if $Q$ is not an $EIP^d$ minimizer, we consider the daisy $D$ with the same cardinality so that 
\begin{equation}\label{DQ} 
   (\ell-p)\ell^{d-1}=\#D=\#Q=\#{\overline Q}
\end{equation} 
and 
\begin{equation}\label{bDQ}
   b(D) > b(Q)=b(\overline Q).
\end{equation} 
$D$ is of course contained in the daisy $\{1,\ldots, \ell\}^d$ whose cardinality is larger, since daisies are ordered by cardinality, see Theorem \ref{unique}. In particular, by looking at its sections in the direction of $\mathbf e_d$, we see that for some $1 \leq h \leq \ell$ we have $S_{d,k}(D)\neq\emptyset$ if and only if $k\in\{1,\ldots, h\}$. Moreover, each nonempty section $S_{d,k}(D)$  identifies with $EIP^{d-1}$ minimizers (see Corollary \ref{coro1}). We claim that  $S_{d,1}(D)$ identifies with a $(d-1)$-dimensional daisy and  $\#S_{d,1}(D)\ge(\ell-p)\ell^{d-2}$. 
Indeed, the fact that $S_{d,1}(D)$ is a $(d-1)$-dimensional  daisy comes from Proposition \ref{daisysection}.
Moreover, from Definition \ref{dai} it is possible to see that $\#S_{d,i}(D)\ge \#S_{d,j}(D)$ if $1\le i\le j$: this fact can be alternatively deduced from  Theorem \ref{unique}, since Definition \ref{ab} readily implies that if $x=(x_1,\ldots, x_{d})\in D$,  then  $(x_1,\ldots, x_{d-1}, y_d)\prec x$ for any $y_d\in\{1,\ldots, x_{d-1}\}$.
 Therefore $\#D\le h\, \#S_{d,1}(D)$, so that if $\#S_{d,1}(D)<(\ell-p)\ell^{d-2}$ were true it would lead to $\#D<(\ell-p)\ell^{d-1}$, which is against \eqref{DQ}. The claim is proved.

We take
a rigid motion of $\overline T$ in the direction of $\mathbf e_d$, i.e., we introduce $T^*:= \overline T-(\ell+p)\mathbf e_d$, so that
\[
T^*=\bigcup_{k=1-p}^{0} S_{d,k}( T^*)
\]
Then we let $ M^*:=D\cup T^*$. The cardinality of $ M^*$ is that of $\overline M$, since \eqref{DQ} holds and since obviously $\# T^*=\#\overline T$.  %Moreover $b(D)<b(Q)=b(\tilde Q)$ as already observed. 
Similarly, $b(T^*)=b(\overline T)$. Most importantly, $$\mathrm{dist}(D, T^*) =\mathrm{dist}(S_{d,1}(D),S_{d,0}( T^*))=1$$ and the number of bonds connecting $D$ and $T^*$ is equal to $\# S_{d,0}( T^*)$: indeed, each point of the form $S_{d,0}( T^*)+\mathbf e_d$ belongs to $S_{d,1}(D)$, because we have already proven that $S_{d,1}(D)$ identifies with a $(d-1)$-dimensional daisy whose  cardinality is larger than $(\ell-p)\ell^{d-2}$, while $S_{d,0}(T^*)$ identifies with  a $(d-1)$-dimensional daisy of cardinality $(\ell-p)\ell^{d-2}$ (and we use the fact that daisies are ordered by cardinality). This allows to conclude, together with \eqref{qt} and \eqref{bDQ}, that
\[
b(M^*)=(\ell-p)\ell^{d-2}+b( T^*)+ b(D) > (\ell-p)\ell^{d-2}+b(\overline T)+ b(\overline Q)=b(\overline M),
\]
contradicting the fact that $\overline M$ is a $EIP^d$ minimizer and thus concluding the proof. 
\end{proof}

The next lemma provides the lower bound. 

\begin{lemma}\label{lemma:lb}
Let $d\in\{2,3,\ldots\}$. Let $\ell\in\mathbb N$.
The configuration 
\[
P_{\ell,d,p}:=\{1,\ldots, \ell-p\}\times\{1,\ldots,\ell\}^{d-1}
\]
is an $EIP^d$ minimizer for  any $p\in\mathbb N$ such that $p\le \lfloor h_{\ell,d}\rfloor$.
\end{lemma}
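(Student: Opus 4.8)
The plan is to prove Lemma~\ref{lemma:lb} by induction on the dimension $d$, using Lemma~\ref{easy2} as the reduction step that strips off the last factor. The base case $d=2$ is essentially the known $n^{3/4}$ construction for the square lattice (cf.\ the references to \cite{MPS}), where $h_{\ell,2}=\ell^{1/2}$ and one verifies directly that the rectangle $\{1,\ldots,\ell-p\}\times\{1,\ldots,\ell\}$ is an $EIP^2$ minimizer whenever $p\le\lfloor\ell^{1/2}\rfloor$: comparing the number of bonds of this rectangle with that of the competing daisy of the same cardinality shows the rectangle wins precisely in this range of $p$. For the inductive step, I would assume the statement in dimension $d-1$ and try to produce, for each admissible $p\le\lfloor h_{\ell,d}\rfloor$, a configuration of the form $M=\{1,\ldots,\ell-p\}\times\{1,\ldots,\ell\}^{d-2}\times\{1,\ldots,\ell+p'\}$ that is an $EIP^d$ minimizer, and then invoke Lemma~\ref{easy2} to conclude that $Q=\{1,\ldots,\ell-p\}\times\{1,\ldots,\ell\}^{d-1}=P_{\ell,d,p}$ is an $EIP^d$ minimizer too.

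\textbf{Key steps.} First I would make precise the relation between the exponents: $h_{\ell,d}=\ell^{2^{1-d}}$ and $h_{\ell,d-1}=\ell^{2^{2-d}}=h_{\ell,d}^2$, so that a box of edge-length roughly $\ell$ in dimension $d$ whose ``defect'' has depth $p\le\lfloor h_{\ell,d}\rfloor$ corresponds, after one application of Lemma~\ref{easy2}, to a box in dimension $d-1$ whose relevant side lengths differ by something of order $p^2\lesssim h_{\ell,d}^2=h_{\ell,d-1}$ — which is exactly the regime the inductive hypothesis covers. Concretely, to certify that $M$ above is an $EIP^d$ minimizer I would take its decreasing rearrangement in the $\mathbf e_d$ direction (Proposition~\ref{rear}), reducing to counting bonds section-by-section, where each section is a $(d-1)$-dimensional box of the form handled by the induction hypothesis (via Corollary~\ref{coro1}, each section of a minimizer is an $EIP^{d-1}$ minimizer, and conversely if all sections are $EIP^{d-1}$ minimizers and the stacking is bond-optimal the whole set is an $EIP^d$ minimizer). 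Thus the combinatorial heart is: show that for $p$ in the stated range, the ``slab'' configuration $M$ beats every competing daisy of the same cardinality in the number of bonds. This is a direct computation comparing $b(M)$ against $b(D)$ for the daisy $D$ with $\#D=\#M$, using the explicit form of $D$ (its $\mathbf e_d$-sections are $(d-1)$-daisies of nonincreasing cardinality, all contained in $\{1,\ldots,\ell+1\}^d$ or so), and the estimate hinges on $p^2$ being absorbed by the surface terms of order $\ell^{d-1}$ — i.e.\ $p\le\ell^{2^{1-d}}$ is exactly the threshold at which the quadratic-in-$p$ loss of the box is still cheaper than the perimeter cost of reshaping into a daisy.

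\textbf{Main obstacle.} The hard part will be the bond-counting comparison establishing that $M$ (equivalently, after rearrangement, a suitably stacked tower of $(d-1)$-daisies) is genuinely optimal for its cardinality, and in particular pinning down that the cutoff is $\lfloor h_{\ell,d}\rfloor$ and not some other power of $\ell$. One must balance the gain from having a large ``base'' section $(\ell-p)\ell^{d-2}$ repeated $\ell+p$ times against the competing daisy's strategy of spreading mass more isotropically; the net difference is a polynomial in $p$ whose leading behaviour is quadratic, set against the $O(\ell^{d-1})$ perimeter, so the break-even point is $p\sim\ell^{(d-1)/2}\cdot(\text{something})$ — I would need to track the constants carefully through the recursion to see this collapses to the clean exponent $2^{1-d}$ after $d-2$ squarings starting from the base case $d=2$. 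A secondary technical point is checking the hypotheses of Lemma~\ref{easy2} hold for the $M$ we construct (in particular that the auxiliary parameter there, which I called $p'$, can be chosen so that $\ell-p'$ and $\ell+p'$ match the side lengths of $M$ while $M$ remains a minimizer), and verifying the edge cases where $p=0$ or $p=\lfloor h_{\ell,d}\rfloor$ saturates.
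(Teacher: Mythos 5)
Your overall architecture matches the paper's: induction on the dimension, reduction via Lemma~\ref{easy2} to showing that a slab of the form $\{1,\ldots,\ell-p\}\times\{1,\ldots,\ell+p\}\times\{1,\ldots,\ell\}^{d-2}$ is an $EIP^d$ minimizer, and the observation that the parameter squares under the recursion, $p^2\lesssim h_{\ell,d}^2=h_{\ell,d-1}$, which is exactly how the threshold $2^{1-d}$ arises. (Note that Lemma~\ref{easy2} as stated forces the same $p$ on both factors, so your auxiliary $p'$ must simply be $p$; also, in the base case $d=2$ the rectangle cannot ``win'' against the daisy of the same cardinality --- daisies are minimizers --- it can only tie, which is what the comparison $\Theta_2(D)=4\ell-2p=\Theta_2(P_{\ell,2,p})$ gives.)

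However, the step you propose for certifying that $M$ is an $EIP^d$ minimizer contains a genuine gap. You invoke the converse of Corollary~\ref{coro1}: ``if all sections are $EIP^{d-1}$ minimizers and the stacking is bond-optimal, the whole set is an $EIP^d$ minimizer.'' This is false, as the paper itself points out right before Corollary~\ref{coro1}: the column $\{(1,1),\ldots,(1,n)\}$, $n\ge 4$, has one-point sections (trivially optimal) stacked with the maximal number of connecting bonds, yet is not an $EIP^2$ minimizer. So rearranging $M$ in the $\mathbf e_d$ direction and checking sections cannot by itself certify minimality. The alternative you sketch --- a direct comparison of $b(M)$ with $b(D)$ for the daisy $D$ of equal cardinality, balancing a ``quadratic-in-$p$ loss'' against an $O(\ell^{d-1})$ surface term --- is precisely what you flag as the main obstacle, and no workable route is given; moreover the actual proof does not proceed by any asymptotic balancing. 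What is missing is the paper's key construction: two explicit bond-preserving rearrangements of $M_{\ell,d,p}$, first shifting the overhanging block to obtain an L-shaped set, then moving $p(p-1)$ many $(d-2)$-dimensional sections from one lateral face into the notch, which turns $M_{\ell,d,p}$ into a perfect daisy together with a single residual face $\{1,\ldots,\ell-p^2\}\times\{\ell\}\times\{1,\ldots,\ell\}^{d-2}$. Minimality of $M_{\ell,d,p}$ then follows because this face is an $EIP^{d-1}$ minimizer by the induction hypothesis (using $p^2\le\lfloor h_{\ell,d}\rfloor^2\le\lfloor h_{\ell,d-1}\rfloor$), so it can be replaced by a $(d-1)$-dimensional daisy without decreasing bonds, yielding a $d$-dimensional daisy of the same cardinality and the same bond count. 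Without this (or an equivalent exact construction), the inductive step is not established.
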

\begin{proof}
The statement holds if $d=2$. Indeed, the configuration $\{1,\ldots,\ell-p\}\times\{1,\ldots,\ell \}$ is an $EIP^2$ minimizer for any $p\in\{1,\ldots,\lfloor\sqrt{\ell}\rfloor \}$ as shown in \cite[Lemma 4.1]{MPSS}. We include a short alternative argument here: Wlog assume that $p \ge 2$ (and $\ell \ge 4$) since otherwise the claim follows from $P_{\ell,2,p}$ being a daisy. Then $D = D^{(2)}_{a,b,c}$ with $a = \ell - \lceil \frac{p}{2} \rceil$, $b = \ell - \lfloor \frac{p}{2} \rfloor -1$ and $c = \ell - \lceil \frac{p}{2} \rceil (\lfloor \frac{p}{2} \rfloor + 1)$ is a two-dimensional daisy (see \eqref{eq:two-d-daisy}) with $p \ge 2$ guaranteeing $c \le a-1$ and $c \ge \ell - ((\frac{p}{2})^2 + \frac{p}{2}+1) \ge \ell - \frac{p^2}{2} -1 \ge 1$ as $p \le \sqrt{\ell}$. The assertion then follows from $\#D = \ell^2 - \ell p = \# P_{\ell,2,p}$ and $\Theta_2 (D) = 4 \ell - 2p = \Theta_2(P_{\ell,2,p})$. 

Let $d\ge 3$. We  prove the statement by induction on the dimension: we assume that $P_{\ell,d-1,p}$ is an $EIP^{d-1}$ minimizer for any $p\le \lfloor  h_{\ell,d-1}  \rfloor$ and we aim at showing that $P_{\ell,d,p}$ is  an $EIP^d$ minimizer for any $p\le\lfloor h_{\ell,d}\rfloor$.
Thanks to Lemma \ref{easy2}, it is enough to show that
\[
M_{\ell,d,p}:=\{1,\ldots, \ell-p\}\times\{1,\ldots, \ell+p\}\times\{1,\ldots,\ell\}^{d-2} 
\]
is an $EIP^d$ minimizer for any $p\le\lfloor h_{\ell,d}\rfloor$. In order to check this, we rearrange $M_{\ell,d,p}$, without losing bonds, to
\[
\widetilde M_{\ell,d,p}
:=(\{1,\ldots,\ell\}\times\{1,\ldots,\ell-p\}\times\{1,\ldots,\ell\}^{d-2})\;\cup\;
(\{1,\ldots,\ell-p\}\times\{\ell-p+1,\ldots,\ell\}\times\{1,\ldots,\ell\}^{d-2} ).
\]
From the latter configuration, for any $i=1,\ldots, p$ and any $j\in 1,\ldots, p-1$ we fill the  $(d-2)$-dimensional section 
\begin{equation*}
U^{i,j}:=\{\ell-p+i\}\times\{\ell-p+j\}\times\{1,\ldots,\ell\}^{d-2}
\end{equation*}
by recursively rigidly moving the $(d-2)$-dimensional section 
$$\{\ell-p-k + 1\}\times\{\ell\}\times\{1,\ldots, \ell\}^{d-2},\qquad k=1,\ldots, p(p-1)$$ 
and filling the sets $U^{i,j}$ following the order $(i,j) \prec_R (i',j')\iff$ [($j<j'$) or ($j=j'$ and $i<i'$)], 
thus recursively emptying a $(d-1)$ dimensional face of $\widetilde M_{\ell,d,p}$, so that we get, 
\[
Q_{\ell,d,p}:=(\{1,\ldots,\ell\}\times\{1,\ldots,\ell-1\}\times\{1,\ldots,\ell\}^{d-2}) \;\cup\;(\{1,\ldots,\ell-p^2\}\times\{\ell\}\times\{1,\ldots,\ell\}^{d-2}).
\] 
(This is possible since $p^2 \le \lfloor h_{\ell,d}\rfloor^2 < \ell$ for $d \ge 3$.) We notice that $Q_{\ell,d,p}$ is a rearrangement of $\widetilde M_{\ell,d,p}$, with the same number of bonds.
By Definition \ref{rectdaisy}, 
$$ Q_{\ell,d,p}\setminus (\{1,\ldots,\ell-p^2\}\times\{\ell\}\times\{1,\ldots, \ell\}^{d-2})$$ 
is (up to a coordinate relabeling) a perfect daisy. Therefore, $Q_{\ell,d,p}$ is an $EIP^d$ minimizer as soon as 
$$
\{1,\ldots,\ell-p^2\}\times\{1,\ldots, \ell\}^{d-2}
$$
is an $EIP^{d-1}$ minimizer for then this set can be replaced by a $(d-1)$-dimensional daisy in $S_{2,\ell}(\mathbb Z^d)$ of cardinality $(\ell-p^2) \ell^{d-2}$ without decreasing the total number of bonds. The resulting configuration is (up to coordinate relabeling) a $d$-dimensional daisy and, thus, an $EIP^d$ minimizer. Assuming $p\le\lfloor h_{\ell,d}\rfloor$, by the elementary inequality $\lfloor x\rfloor^2\le\lfloor x^2\rfloor$ and by Definition \ref{parameter} we obtain
\[
p^2\le\lfloor h_{\ell,d}\rfloor^2\le \lfloor h_{\ell,d}^2\rfloor=\lfloor h_{\ell,{d-1}}\rfloor,
\]
which allows to conclude, by the induction assumption, 
that 
$$
\{1,\ldots,\ell-p^2\}\times\{1,\ldots, \ell\}^{d-2}
$$
is indeed an $EIP^{d-1}$ minimizer.
%as $h_{\ell,d}^2= h_{\ell,d-1}$.
Therefore $Q_{\ell,d,p}$, $\widetilde M_{\ell,d,p}$ and $M_{\ell,d,p}$ are $EIP^d$ minimizers, as desired, for any $p\le\lfloor h_{\ell,d}\rfloor$.
\end{proof}

We shall later need the following converse statement.

\begin{lemma}\label{converse}
Let $d\in\{2,3,\ldots\}$. Let $\ell\in\mathbb N$ and $j \in \{0, \ldots, d-1\}$.
The configuration 
\[ 
P_{\ell,j,d,2p}:=\{1,\ldots, \ell-2p\}\times\{1,\ldots,\ell+1\}^{j}\times\{1,\ldots,\ell\}^{d-1-j} 
\]
is not an $EIP^d$ minimizer if $p\in\mathbb N$ is such that $2p\ge 4^{c_d}\,  h_{\ell,d}$, where $c_d:=1-2^{1-d}$. 
\end{lemma}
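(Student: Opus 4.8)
The plan is to show that $P_{\ell,j,d,2p}$ carries strictly fewer bonds than a daisy of the same cardinality, which by Theorem \ref{unique} forces it to fail $EIP^d$ minimality. First I would compute exactly the number of bonds (equivalently, the edge perimeter) of the cuboid $P_{\ell,j,d,2p} = \{1,\ldots,\ell-2p\}\times\{1,\ldots,\ell+1\}^j\times\{1,\ldots,\ell\}^{d-1-j}$: its edge perimeter is $2\bigl[(\ell+1)^j\ell^{d-1-j} + (\ell-2p)\bigl((\ell+1)^{j-1}\ell^{d-1-j}\cdot j + (\ell+1)^j\ell^{d-2-j}\cdot(d-1-j)\bigr)\bigr]$, and its cardinality is $N := (\ell-2p)(\ell+1)^j\ell^{d-1-j}$. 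The natural competitor is the daisy of cardinality $N$: since $N$ is slightly below the full perfect cube $\{1,\ldots,\ell\}^d$ (because $2p$ is only of order $h_{\ell,d}=\ell^{2^{1-d}}$, which is $o(\ell)$), this daisy consists of the perfect cube $\{1,\ldots,\ell\}^d$ with roughly $2p\,\ell^{d-1}$ points removed in an isoperimetrically optimal way — essentially a near-complete surface layer of a $(d-1)$-dimensional daisy stripped off one face. I would invoke the explicit construction of daisies (Remark \ref{rmk:daisy-construction}) and the fact from Lemma \ref{lemma:lb} that removing a slab of thickness $p' \le \lfloor h_{\ell,d}\rfloor$ from a cube still gives a minimizer, so that the minimal edge perimeter at cardinality close to $\ell^d$ behaves like $2d\ell^{d-1}$ plus a correction that grows only like the square root of the removed volume per unit depth.

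The key quantitative point is a comparison between two ways of removing a prescribed number of points from near a perfect cube: removing them as a "thin deep slab" across an entire face (as in $P_{\ell,j,d,2p}$, depth $2p$ over a full $(\ell+1)^j\ell^{d-1-j}$ cross-section, or after the elongation the relevant removed volume is $\sim 2p\,\ell^{d-1}$) versus removing them as a "shallow wide layer" of optimal daisy-type shape. The second option is cheaper: by the induction-type estimates underlying Lemma \ref{lemma:lb} (specifically that a slab of depth $q$ can be removed from a $d$-cube at perimeter cost scaling like the $(d-1)$-dimensional isoperimetric profile evaluated at the cross-sectional volume, which iterates down to a cost of order $\ell^{d-1} + (\text{lower order in the removed volume})$ when $q\le h_{\ell,d}$), the daisy removing the same number of points uses depth at most of order $h_{\ell,d}$ and so only changes the edge perimeter by $O(\ell^{d-2}h_{\ell,d})$ relative to the cube, whereas $P_{\ell,j,d,2p}$ with $2p \ge 4^{c_d}h_{\ell,d}$ changes it by the larger amount $2\cdot 2p\cdot(\text{codimension-2 face count}) \gtrsim 2p\,(d-1)\ell^{d-2}$. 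Comparing the two corrections and using $2p \ge 4^{c_d}h_{\ell,d} = 4\,h_{\ell,d}\cdot 4^{-2^{1-d}}$ — where the exponent $c_d = 1-2^{1-d}$ is calibrated precisely so that $4^{c_d}h_{\ell,d}$ exceeds twice the depth of the optimal daisy (which can be bounded via $\lfloor h_{\ell,d}\rfloor^2 \le \lfloor h_{\ell,d-1}\rfloor$ as in Lemma \ref{lemma:lb}) plus the slack needed to beat the lower-order terms — yields $b(D) > b(P_{\ell,j,d,2p})$ for the daisy $D$ of cardinality $N$, hence the non-minimality.

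Concretely I would proceed in these steps: (1) write down $\#P_{\ell,j,d,2p}$ and $\Theta_d(P_{\ell,j,d,2p})$ in closed form; (2) using Remark \ref{rmk:daisy-construction} and Proposition \ref{daisysection}, describe the daisy $D$ of the same cardinality as a perfect cube $\{1,\ldots,\ell'\}^{\,?}$-type base with a controlled number of surface layers of total "depth" $\delta$, and bound $\delta$ from above in terms of $h_{\ell,d}$ by the same slab-removal argument as in Lemma \ref{lemma:lb} (invoking the elementary inequality $\lfloor x\rfloor^2\le\lfloor x^2\rfloor$ and Definition \ref{parameter}); (3) bound $\Theta_d(D)$ from above by $2d\ell^{d-1} + C_d\,\ell^{d-2}\,h_{\ell,d}$; (4) bound $\Theta_d(P_{\ell,j,d,2p})$ from below, isolating the term $2(d-1-?)\,2p\,\ell^{d-2}$ coming from the two long lateral faces in the elongated directions; (5) check the inequality $\Theta_d(D) < \Theta_d(P_{\ell,j,d,2p})$ under the hypothesis $2p\ge 4^{c_d}h_{\ell,d}$, where the constant $4^{c_d}$ is exactly what makes the comparison work after the depth of $D$ is controlled. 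The main obstacle is step (2)–(3): pinning down the edge perimeter of the extremal daisy $D$ near the perfect cube with an error term no worse than $O(\ell^{d-2}h_{\ell,d})$, which requires a careful bookkeeping of the nested perfect-daisy layers (their cross-sections are themselves lower-dimensional near-cubes by Proposition \ref{daisysection}) and an inductive control of how their coefficients differ from $\ell$; the $d=2$ base case is immediate from \eqref{eq:two-d-daisy} by the same explicit $D^{(2)}_{a,b,c}$ computation used in Lemma \ref{lemma:lb}.
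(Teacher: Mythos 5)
There is a genuine gap: the global perimeter comparison you propose cannot reach the threshold $2p\sim h_{\ell,d}=\ell^{2^{1-d}}$ for $d\ge 3$, because your bookkeeping of the two sides is off at the order that matters. At equal cardinality $N=(\ell-2p)(\ell+1)^j\ell^{d-1-j}$, the slab does \emph{not} have perimeter exceeding $2d\ell^{d-1}$ by a term $\sim 4(d-1)p\,\ell^{d-2}$; on the contrary, for $j=0$ one computes $\Theta_d(P_{\ell,0,d,2p}) = 2d\ell^{d-1}-4(d-1)p\,\ell^{d-2}$, and the daisy $D$ of the same cardinality (which is \emph{not} the $\ell$-cube minus a shallow layer on one face, but essentially a balanced near-cube of side $\approx \ell-2p/d$ plus at most $d-1$ partial layers, since $2p\ell^{d-1}$ is many full layers' worth of points) has perimeter with exactly the same first-order term. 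The two leading corrections cancel, and the true excess of the slab over the isoperimetric value is only of order $p^2\ell^{d-3}$ (the second-order anisotropy penalty), while any layer-by-layer upper bound for $\Theta_d(D)$ of the type you describe carries an error of order $\ell^{d-2}$ (or worse, $\ell^{d-2}h_{\ell,d}$). Hence your step (5) inequality closes only when $p^2\gtrsim \ell$, i.e.\ $p\gtrsim \sqrt\ell$, which is the $d=2$ exponent; for $d\ge3$ the regime $4^{c_d}h_{\ell,d}\le 2p\ll \sqrt\ell$ is nonempty and your plan gives no contradiction there. There is no calibration of the constant $4^{c_d}$ that rescues a fixed-dimension comparison with $O(\ell^{d-2})$-precision errors.

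The mechanism that actually produces the exponent $2^{1-d}$ is structural and recursive, and it is absent from your plan. The paper's proof performs two bond-preserving rearrangements of $P_{\ell,j,d,2p}$ (cutting a block of depth $p$ and redistributing $(d-2)$-dimensional slices, as in Lemma \ref{lemma:lb}) so as to concentrate the deficit into a single lateral face of the form $\{1,\ldots,\ell-2p-p(p-1)\}\times H$; by Corollary \ref{coro1} a minimizer must have all $(d-1)$-dimensional sections be $EIP^{d-1}$ minimizers, and the induction hypothesis rules this out as soon as $p(p+1)\ge 4^{c_{d-1}}h_{\ell,d-1}$, which follows from $2p\ge 4^{c_d}h_{\ell,d}$ via $c_{d-1}+1=2c_d$ and $h_{\ell,d-1}=h_{\ell,d}^2$. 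It is this squaring of $p$ at each dimensional reduction — not any estimate on the perimeter of the extremal daisy — that generates the geometric decay of the exponent; to repair your argument you would need to replace the global comparison by such a sectioning-plus-induction step (or compute $\Theta_d(D)$ with error $o(p^2\ell^{d-3})$, which your outline does not come close to achieving).
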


\begin{proof} 
Let $\tilde{\ell} = \ell+1$ if $j \ge 1$ and $\tilde{\ell} = \ell$ in case $j=0$. The result is true if $d=2$, as a consequence of \cite[Lemma 4.1]{MPSS}. It also directly follows by comparing with $D = \{1, \ldots, \ell - p \} \times \{1, \ldots, \tilde{\ell} -p - 1\}$ which for $p \ge \sqrt{\ell}$ satisfies $\#D = (\ell - p)(\tilde{\ell} - p - 1) \ge (\ell - 2p) \tilde{\ell} = \#P_{\ell,j,2,2p}$ while $\Theta_2(D) = 2 \ell + 2 \tilde{\ell} - 4p - 2 < 2 \ell - 2 \tilde{\ell} - 4p = \Theta_2(P_{\ell,j,2,2p})$. We prove the statement by induction. We consider the following two subsequent, edge-perimeter preserving rearrangements of $P_{\ell,j,d,2p}$:
\begin{align*}
P'
  &=\big( \{1,\ldots, \ell-2p\}\times\{1,\ldots, \tilde{\ell}-p\}
    \cup\{\ell-2p+1,\ldots, \ell-p\}\times\{1,\ldots,\ell-2p\} \big) \times H, \\ 
P''
  &=\Big(\big( \{1,\ldots, \ell-2p\}\times\{1,\ldots,\tilde{\ell}-p-1\} 
    \cup \{\ell-2p+1,\ldots, \ell-p\}\times\{1,\ldots,\ell-p-1\} \big) \times H \Big)\\ 
  &\qquad \cup\;\Big( \{1,\ldots, \ell-2p-p(p-1)\}\times \{\tilde{\ell}-p\} \times H\Big), 
\end{align*}
where we have set $H = \{1,\ldots,\ell+1\}^{j-1}\times\{1,\ldots,\ell\}^{d-1-j}$ if $j \ge 1$ and $H = \{1,\ldots,\ell\}^{d-2}$ if $j = 0$.

Here $P''$ is obtained from $P'$ by successively moving $d-2$ dimensional slices similarly as in the proof of Lemma \ref{lemma:lb}. We may assume without loss of generality that $\ell-2p-p(p-1) \ge 1$ for otherwise this process would terminate with an empty layer at the level $\mathbb Z \times \{\tilde{\ell} - p\} \times \mathbb Z^{d-2}$, i.e., at some point we are moving the $(d-2)$-dimensional section $\{1\}\times\{\tilde{\ell}-p\}\times H$, which would be the only remaining set of points with second component equal to $\tilde{\ell}-p$, to a position $\{\ell-2p+i\}\times\{\ell-2p+j\}\times H$ for some $i\in\{1,\ldots,p\}, j\in\{1,\ldots p-1\}$. This would strictly increase the number of bonds, which directly shows that $P'$ and thus $P_{\ell,j,d,2p}$ cannot be $EIP^{d-1}$ minimizers. 

In particular, by Corollary \ref{coro1} $P''$ (and thus $P_{\ell,j,d,2p}$) is not an $EIP^d$ minimizer if its face 
\[ \{1,\ldots, \ell-2p-p(p-1)\}\times H\] 
is not an 
$EIP^{d-1}$ minimizer. We make use of the induction assumption: the configuration $\{1,\ldots, \ell-2q\}\times H$ is not an $EIP^{d-1}$ minimizer if $2q\ge 4^{c_{d-1}}h_{\ell,d-1}$.
Therefore, the face $ \{1,\ldots, \ell-2p-p(p-1)\}\times H$ is not an $EIP^{d-1}$ minimizer (and thus $P_{\ell,j,d,2p}$ is not an $EIP^d$ minimizer), if 
\begin{equation}\label{kk}
  p(p+1) \ge 4^{c_{d-1}}h_{\ell,d-1}.
\end{equation}
The latter is  implied by $2p\ge 4^{c_d}h_{\ell,d}$: indeed, since $c_{d-1} + 1 = 2 c_d$ and $h_{\ell,d-1}=h_{\ell,d}^2$, we have
\[ (2p)^2/4 
   \ge (2^{2c_d} h_{\ell,d})^2/4 
   = 2^{2c_{d-1}+2} h_{\ell,d}^2/4 
   = 4^{c_{d-1}}h_{\ell,d-1}, \]
which readily implies \eqref{kk}.
 Therefore, if $2p\ge 4^{c_d}h_{\ell,d}$, we obtain that $P_{\ell,j,d,2p}$ is not an $EIP^d$ minimizer. 
\end{proof}

\section{Upper bound}\label{upper}

We introduce the notion of defects of a daisy, which is crucial for the  rearrangement procedures that will lead to the proof of the upper bound. In the following definition, we will consider a $d$-dimensional daisy $P=P^{(d)}\cup\ldots\cup P^{(1)}$. In order to define defects of lower-dimensional layers, given $m\in\{2,\ldots, d\}$, we recall that the set $P^{(m-1)}\cup\ldots\cup P^{(1)}$ is a copy of an $(m-1)$-dimensional daisy, through the identification provided by {\rm Proposition \ref{characterization}}.

\begin{definition}[Defects]\label{defect}
Let $P=P^{(d)}\cup P^{(d-1)}\cup\ldots\cup P^{(1)}$ be a $d$-dimensional daisy. 
\begin{itemize}
\item[i)]
 Let $R$ be a $d$-dimensional perfect daisy. We say that $P$ has a ($(d-1)$-dimensional) {\it defect} with respect to $R$ if 
a $(d-1)$-dimensional nonempty section $S=S_{s,j}(R)$ of $R$ (see Definition \ref{sect}) exists such that $\mathrm{dist}(S,P)=1$. In such case, the set $D:=\{y\in S:\mathrm{dist}(y, P)=1\}$ is the defect.
\item[ii)] 
Given $m\in\{2,\ldots, d\}$,
we say that the $(m-1)$-dimensional daisy $P^{(m-1)}\cup\ldots\cup P^{(1)}$ has an ($(m-2)$-dimensional) {\it defect} with respect to $P^{(m)}$ if it has a defect, according to  point i), with respect to the $(m-1)$-dimensional perfect daisy $$\qquad\quad Q^{(m-1)}:=\{1,\ldots, p_1^{(m)}\}\times\ldots\times\{1,\ldots, p_{z_{m}-1}^{(m)}\}\times \{1,\ldots, p_{z_{m}+1}^{(m)} \}\times\ldots\times\{1,\ldots, p_{m}^{(m)}\},$$ where $\{p_1^{(m)},\ldots, p_m^{(m)}\}$ are the coefficients of the perfect $m$-dimensional daisy $P^{(m)}$ and
 $z_m$ is the corresponding  value-change position, see Definition \ref{rectdaisy}. 
\end{itemize}
\end{definition}

\begin{remark}\label{defectremark}\rm We note that a $d$-dimensional daisy $P=P^{(d)}\cup\ldots\cup P^{(1)}$ has a defect with respect to the perfect  $d$-dimensional daisy $R$ if and only if $R\supsetneqq Q$, where $Q$ is the smallest perfect $d$-dimensional daisy such that $P\subseteq Q$. 
In particular, $Q$ also has  a defect with respect to  $R$. Moreover, the definition of daisy implies that if  $P^{(1)}\neq\emptyset$, then $P^{(1)}$ has necessarily a defect with respect to $P^{(2)}$ (we stress that by a ($0$-dimensional) defect for $P^{(1)}$ wrt $P^{(2)}$ we just mean a point). More generally, if $P^{(m-1)}\neq\emptyset$ and $P^{(m-2)}=\emptyset$, then $P^{(m-1)}$ has a defect wrt $P^{(m)}$. In particular, if $P=P^{(d)}\cup\ldots\cup P^{(1)}$ is a $d$-dimensional daisy and it is not perfect, then there exists $m\in \{2,\ldots, d\}$ such that $P^{(m-1)}\cup\ldots\cup P^{(1)}$ is not empty and has a defect with respect to $P^{(m)}$.
\end{remark}

Following Definition \ref{defect}, the first properties of defects are contained in the following

\begin{proposition}\label{defectpro}
If $P^{(m-1)}\cup\ldots\cup P^{(1)}$  has a defect with respect to $P^{(m)}$, then the defect contains a set $F$ which is a copy of the smallest $(m-2)$-dimensional face of $P^{(m-1)}$, and any point of $F$ has distance $1$ from $P^{(m-1)}$.
\end{proposition}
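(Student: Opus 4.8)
The plan is to unpack the two inductive layers of Definition \ref{dai}/Definition \ref{defect} and reduce everything to the planar case $m = 2$, where the statement is essentially the observation that a non-perfect two-dimensional daisy $D^{(2)}_{a,b,c}$ with $c \ge 1$ contains at least one point at distance $1$ from its perfect part $\{1,\ldots,a\}\times\{1,\ldots,b\}$, and that single point is precisely a copy of the $0$-dimensional face of the $1$-dimensional daisy $P^{(1)}$. First I would fix $m \in \{2,\ldots,d\}$ with $P^{(m-1)}\cup\ldots\cup P^{(1)}$ nonempty and having a defect with respect to $P^{(m)}$, and write $P' := P^{(m-1)}\cup\ldots\cup P^{(1)}$, viewed through Proposition \ref{characterization} as an $(m-1)$-dimensional daisy with perfect part $P^{(m-1)}$ and coefficients $p_i^{(m-1)}$, $i = 1,\ldots,m-1$. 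By Remark \ref{defectremark}, the defect is taken with respect to the $(m-1)$-dimensional perfect daisy $Q^{(m-1)}$ of Definition \ref{defect}(ii), and the smallest perfect $(m-1)$-dimensional daisy $Q$ containing $P'$ satisfies $Q \subsetneqq Q^{(m-1)}$; hence $Q$ also has a defect with respect to $Q^{(m-1)}$. The key point is that $Q^{(m-1)}$ is obtained from $P^{(m)}$ by deleting exactly the $z_m$-th factor, so $\mathrm{dist}(S,P') = 1$ forces the defining section $S$ to be the face of $Q^{(m-1)}$ "closest" to where the next layer $P^{(m-1)}$ sits — i.e.\ $S$ lies on a lateral face of $Q^{(m-1)}$ with a normal direction in which $P^{(m-1)}$ is one unit shorter than $Q^{(m-1)}$ — and properties 3.1)–3.4) of Definition \ref{dai} (equivalently, the $\sqsubset$ relation of Definition \ref{larger}) pin down which one.

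Next I would carry out the induction on $m$. For the base case $m = 2$: $P^{(1)}$ is a nonempty $1$-dimensional daisy, so it is a segment $\{1,\ldots,c\}$ (with $c = p_1^{(1)} \ge 1$) placed, by Definition \ref{dai}, at a position of the form $\{p^{(2)}_{z_2}+1\}\times\{1,\ldots,c\}$ or $\{1,\ldots,c\}\times\{p^{(2)}_{z_2}+1\}$ relative to the perfect daisy $P^{(2)}$; the value change position $z_2$ is exactly the coordinate in which $P^{(2)}$ is $1$ unit longer, and Definition \ref{larger} applied to $(p^{(1)}_1)\sqsubset(p^{(2)}_1,p^{(2)}_2)$ gives $c \le p^{(2)}_{f(1)}$ where $f$ skips the value-change slot — in particular the segment $P^{(1)}$, translated by $-\mathbf e_{z_2}$, lands inside the corresponding lateral face of $P^{(2)}$, so every point of $P^{(1)}$ has distance $1$ from $P^{(2)}$. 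The "smallest $0$-dimensional face of $P^{(1)}$" is a single point, and $F = P^{(1)} \cap (\text{an endpoint hyperplane})$ — one point — serves as the required copy; this is trivial here but sets up the shape of the argument. For the inductive step, assume the proposition holds in dimension $m-1$ and suppose $P' = P^{(m-1)}\cup\ldots\cup P^{(1)}$ has a defect with respect to $P^{(m)}$. Write the defect $D = \{y \in S : \mathrm{dist}(y,P') = 1\} \subseteq S = S_{s,j}(Q^{(m-1)})$. I would argue that $S$, being a section of a perfect daisy, is itself a perfect $(m-2)$-dimensional daisy (Proposition \ref{daisysection}), and that $D$ is the subset of $S$ of points adjacent to $P^{(m-1)}$ — which, because $P^{(m-1)}$ differs from $Q^{(m-1)}$ by being shorter in the directions recorded by $\sqsubset$, is exactly a translated copy of $P^{(m-1)}$ itself intersected with $S$'s ambient slab, i.e.\ contains a copy of a lateral face of $P^{(m-1)}$. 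Intersecting down $d - m$ further times along the shortest remaining directions (using that the $q^{(k)}$ coefficients are $DO1$, property 3.2, so the "short" directions are well defined and nested) produces a copy of the smallest $(m-2)$-dimensional face of $P^{(m-1)}$ inside $D$, and every point of it is adjacent to $P^{(m-1)} \subseteq P'$ by the same length comparison. The inductive hypothesis is used to handle the case where $P^{(m-1)}$ is itself not perfect: then its own lower layers $P^{(m-2)}\cup\ldots$ contribute, but the defect of $P'$ wrt $P^{(m)}$ still contains the defect of $P^{(m-1)}$'s completion, and chasing Definition \ref{defect}(ii) one dimension down gives a face of $P^{(m-1)}$ of the asserted minimal type.

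The main obstacle I anticipate is bookkeeping the geometric placement: Definition \ref{dai} specifies the location of each $P^{(m-1)}$ through the recursively defined sequence $(s_k)$ and the sets $S_{d,k}$, and one must verify that the value-change position $z_m$ of $P^{(m)}$ (Definition \ref{defect}(ii)) indeed coincides with the direction in which the defect section $S$ abuts $P^{(m-1)}$, and that "smallest face" is consistent between $P^{(m-1)}$ (an $(m-1)$-dimensional daisy sitting in $\mathbb Z^d$) and its perfect part. Concretely, the danger is an off-by-one in whether $F$ has distance exactly $1$ (not $0$ or $2$) from $P^{(m-1)}$; this is controlled by property 3.3)–3.4) of Definition \ref{dai}, which guarantee $p^{(m-1)}_i \le p^{(m)}_i$ with strict inequality somewhere, so $Q^{(m-1)} \supsetneqq Q$ and $S$ sits one lattice step outside $Q \supseteq P^{(m-1)}$. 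I expect the proof to amount to making these identifications precise, with the substantive content being the length comparison delivered by the $\sqsubset$ relation, and the induction itself being a clean descent on $m$.
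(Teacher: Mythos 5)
Your proposal has a genuine gap at its core: the claim that $\mathrm{dist}(S,P')=1$ ``forces'' the defining section $S$ to abut $P^{(m-1)}$, and that the defect is then a translated copy of (a face of) $P^{(m-1)}$, is false. The defect is not unique, and for some admissible sections the conclusion fails. Concretely, take $d=m=3$, $P^{(3)}=\{1,\ldots,5\}^3$, and let $P^{(2)}\cup P^{(1)}$ identify (via Proposition \ref{characterization}) with the planar daisy $\{1,2,3\}^2\cup\bigl(\{4\}\times\{1,2\}\bigr)$ inside the reference perfect daisy $\{1,\ldots,5\}^2$ of Definition \ref{defect}(ii). The section $\{5\}\times\{1,\ldots,5\}$ is at distance $1$ from $P^{(2)}\cup P^{(1)}$, but its defect is $\{(5,1),(5,2)\}$, which contains no copy of the smallest $1$-dimensional face of $P^{(2)}$ (which has three points) and no point at distance $1$ from $P^{(2)}$. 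So a correct proof must \emph{construct} a suitable section and a suitable $F$ inside it (this existential reading is how the proposition is used later), and the construction genuinely depends on whether $P^{(m-2)}=\emptyset$: when $P^{(m-2)}\neq\emptyset$ it already occupies the hyperplane beyond the value-change face of $P^{(m-1)}$, and the hyperplane beyond the smallest face of $P^{(m-1)}$ need not even lie inside the reference daisy $Q^{(m-1)}$ (e.g.\ when the smallest perfect daisy $\hat Q$ containing the layer is $\{1,\ldots,t+1\}^{j+1}\times\{1,\ldots,t\}^{m-2-j}$ with $j\le m-3$ and $Q^{(m-1)}=\{1,\ldots,t+1\}^{m-1}$, there is no room at the level $t+2$). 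The paper resolves exactly this by passing to $\hat Q$ and its successor $\tilde Q$ in the daisy order, noting $\tilde Q\setminus\hat Q$ lies in a section of $Q^{(m-1)}$, and exhibiting $F\subseteq\tilde Q\setminus\hat Q$ explicitly in each case; your sketch contains no substitute for this placement argument.

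Moreover, the proposed induction on $m$ does no work. Each layer $P^{(m-1)}$ is by definition a copy of a \emph{perfect} daisy, so the scenario ``$P^{(m-1)}$ is itself not perfect,'' for which you invoke the inductive hypothesis, never occurs; and the hypothesis in dimension $m-1$ would concern the smallest $(m-3)$-dimensional face of $P^{(m-2)}$, which is not what the statement needs. The phrase ``intersecting down $d-m$ further times'' is also dimensionally off: the smallest $(m-2)$-dimensional face of the perfect $(m-1)$-dimensional daisy $P^{(m-1)}$ is simply its smallest lateral face (drop one coordinate carrying the larger coefficient), with no further intersections. Finally, in your base case $m=2$ the set $F$ must be a point of the defect, i.e.\ a point \emph{outside} $P^{(1)}$ at distance $1$ from it, not a point of $P^{(1)}$ itself. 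In short, the entire difficulty --- choosing the hyperplane, verifying that a copy of the smallest face fits inside the reference daisy there, is disjoint from $P'$, and is adjacent to $P^{(m-1)}$ --- is left unproven; the paper's argument is a direct, non-inductive construction doing precisely this.
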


\begin{proof}
By assumption, $P^{(m-1)}\cup\ldots\cup P^{(1)}$ has a defect wrt the perfect $(m-1)$-dimensional daisy
$Q:=\{1,\ldots, p_1^{(m)}\}\times\ldots\times\{1,\ldots, p_{z_{m}-1}^{(m)}\}\times \{1,\ldots, p_{z_{m}+1}^{(m)} \}\times\ldots\times\{1,\ldots, p_{m}^{(m)}\}$. By Remark \ref{defectremark}, also the smallest perfect $(m-1)$-dimensional daisy $\hat Q$ containing $P^{(m-1)}\cup\ldots\cup P^{(1)}$ is strictly contained in $Q$ and has a defect wrt to $Q$. If $\tilde Q$ is the perfect $(m-1)$-dimensional daisy that follows $\hat Q$ in the order $\prec$, then the set $\tilde Q\setminus \hat Q$ is contained in (a section of) $Q$. Moreover, $\tilde Q\setminus \hat Q$ contains a set $F$ with the desired properties. 
More explicitly we define $F$ as follows.  Suppose $P^{(m-1)}$ is the perfect daisy $\{1,\ldots, t+1\}^j\times\{1,\ldots, t\}^{m-1-j}$ for suitable $t\in\mathbb N$ and $j\in\{0,\ldots, m-2\}$. If $P^{(m-2)} = \emptyset$, then $\hat{Q} = P^{(m-1)}$, $\tilde{Q} = \{1,\ldots, t+1\}^{j+1}\times\{1,\ldots, t\}^{m-2-j}$, and we set 
\[ F := 
   \begin{cases}
      \{1,\ldots, t+1\}^{j-1}\times\{1,\ldots, t\}\times\{t+1\}\times\{1,\ldots, t\}^{m-2-j} & \mbox{if } j \ge 1, \\ 
      \{t+1\}\times\{1,\ldots, t\}^{m-2} & \mbox{if } j =0. 
   \end{cases} \] 
In case $P^{(m-2)} \neq \emptyset$ (in particular $m \ge 3$), and so $\hat{Q} = \{1,\ldots, t+1\}^{j+1}\times\{1,\ldots, t\}^{m-2-j}$ and 
\[ \tilde{Q} = 
   \begin{cases}
       \{1,\ldots, t+1\}^{j+2}\times\{1,\ldots, t\}^{m-3-j} & \mbox{if } j \le m-3, \\
       \{1,\ldots, t+2\}\times\{1,\ldots, t+1\}^{m-2} & \mbox{if } j = m-2, 
   \end{cases} \] 
we set 
\[ F :=
   \begin{cases}
       \{1,\ldots, t\}\times \{t+1\}\times\{1,\ldots, t\}^{m-3} & \mbox{if } j = 0, \\  
       \{1,\ldots, t+1\}^{j-1}\times\{1,\ldots, t\}^2\times\{t+1\}\times\{1,\ldots,t\}^{m-3-j} & \mbox{if } 1 \le j \le m-3, \\ 
       \{t+2\}\times \{1,\ldots,t+1\}^{m-3}\times\{1,\ldots, t\} & \mbox{if } j = m-2. 
   \end{cases} \] 
We see that $F$ is a copy of $\{1,\ldots,t+1\}^{j-1}\times \{1,\ldots, t\}^{m-1-j}$ which is a smallest $(m-2)$-dimensional face of $P^{(m-1)}$ and that any point of $F$ has distance $1$ from $P^{(m-1)}$.
\end{proof}

A stronger statement holds: 

\begin{proposition}\label{defectpro2}
If $P^{(m-1)}\cup\ldots\cup P^{(1)}$  has a defect with respect to $P^{(m)}$ (or in general with respect to a perfect $(m-1)$-dimensional daisy), then the defect contains  a copy of $P^{(m-2)}\cup\ldots\cup P^{(1)}$.
\end{proposition}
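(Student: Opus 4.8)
The plan is to reduce the statement to a claim about $(m-2)$-dimensional \emph{sections} of the $(m-1)$-dimensional daisy $\mathcal P:=P^{(m-1)}\cup\ldots\cup P^{(1)}$, and then to estimate the cardinality of the relevant section. I may assume $\mathcal P':=P^{(m-2)}\cup\ldots\cup P^{(1)}\ne\emptyset$, i.e.\ that $\mathcal P$ is not a perfect daisy, since otherwise the statement is vacuous. Write the top perfect layer as $P^{(m-1)}=\{1,\ldots,t+1\}^j\times\{1,\ldots,t\}^{m-1-j}$ for suitable $t\in\mathbb N$, $j\in\{0,\ldots,m-2\}$, and let $\hat Q$ be the smallest perfect $(m-1)$-dimensional daisy with $\mathcal P\subseteq\hat Q$. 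The first step is to pin down $\hat Q$: since $P^{(m-1)}$ is the largest $DO1$-tuple of length $m-1$ (wrt $\prec$) with product $\le\#\mathcal P$, and the next $DO1$-tuple simply has one further coordinate raised to $t+1$ (cf.\ Remark~\ref{rmk:daisy-construction}), one obtains $\hat Q=\{1,\ldots,t+1\}^{j+1}\times\{1,\ldots,t\}^{m-2-j}$; in particular every coefficient of $\hat Q$ is at most $t+1$ and $\#\hat Q-\#P^{(m-1)}=\#\hat Q/(t+1)$, equivalently $\#P^{(m-1)}=\tfrac{t}{t+1}\#\hat Q$.

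Next I would analyse the defect itself. By Remark~\ref{defectremark} the hypothesis means precisely that the perfect $(m-1)$-dimensional daisy $R$ with respect to which the defect is formed (it is $Q^{(m-1)}$ in the statement, but only its being a perfect $(m-1)$-dimensional daisy is used, which covers the ``in general'' case as well) satisfies $R\supsetneqq\hat Q$, and the defect is $D=\{y\in S:\mathrm{dist}(y,\mathcal P)=1\}$ for some section $S=S_{\tau,k}(R)$ of $R$ with $\mathrm{dist}(S,\mathcal P)=1$. Since $\mathcal P\subseteq\hat Q$, the $\tau$-th coordinate never exceeds the $\tau$-th coefficient $\hat q_\tau$ of $\hat Q$ on $\mathcal P$; hence for a point $y\in S$ its only lattice neighbour that can possibly lie in $\mathcal P$ is $y-\mathbf e_\tau$. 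A short discussion of this constraint, together with $S\cap\mathcal P=\emptyset$ (forced by $\mathrm{dist}(S,\mathcal P)=1$), shows that necessarily $k=\hat q_\tau+1$ and that
\[ D=\{y\in S:\ y-\mathbf e_\tau\in\mathcal P\}=\mathbf e_\tau+S_{\tau,\hat q_\tau}(\mathcal P). \]
In other words, whichever admissible section is used, the defect is a translated copy of the section of $\mathcal P$ cut at the largest value attained by the $\tau$-th coordinate on $\hat Q$.

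It then remains to bound $\#S_{\tau,\hat q_\tau}(\mathcal P)$ from below by $\#\mathcal P'$. This follows by counting: since $\mathcal P\subseteq\hat Q$, each of the $\hat q_\tau$ sections $S_{\tau,k}(\mathcal P)$, $k=1,\ldots,\hat q_\tau$, has at most $\#\hat Q/\hat q_\tau$ points and they sum to $\#\mathcal P$, so
\[ \#S_{\tau,\hat q_\tau}(\mathcal P)\ \ge\ \#\mathcal P-(\hat q_\tau-1)\tfrac{\#\hat Q}{\hat q_\tau}\ =\ \#\mathcal P-\#\hat Q+\tfrac{\#\hat Q}{\hat q_\tau}\ \ge\ \#\mathcal P-\#\hat Q+\tfrac{\#\hat Q}{t+1}\ =\ \#\mathcal P-\#P^{(m-1)}\ =\ \#\mathcal P', \]
where I used $\hat q_\tau\le t+1$ and the identity $\#P^{(m-1)}=\tfrac{t}{t+1}\#\hat Q$ from the first step. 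Finally, $S_{\tau,\hat q_\tau}(\mathcal P)$ is an $(m-2)$-dimensional daisy by Proposition~\ref{daisysection}, while $\mathcal P'$ is exactly the $(m-2)$-dimensional daisy of cardinality $\#\mathcal P'$ (Proposition~\ref{characterization} and Theorem~\ref{unique}); since by Theorem~\ref{unique} the $(m-2)$-dimensional daisy of cardinality $N$ is the string of the first $N$ elements of $\mathbb N^{m-2}$, it contains a copy of the daisy of cardinality $N'$ whenever $N'\le N$. Hence $S_{\tau,\hat q_\tau}(\mathcal P)$, having at least $\#\mathcal P'$ points, contains a copy of $\mathcal P'$, and therefore so does $D$.

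The step I expect to be the main obstacle is the bookkeeping in the first two paragraphs: identifying $\hat Q$ precisely and, above all, verifying that whichever admissible section $S=S_{\tau,k}(R)$ is used one necessarily has $k=\hat q_\tau+1$, so that $D$ is a genuine translated copy of $S_{\tau,\hat q_\tau}(\mathcal P)$ and not some smaller set. This requires tracking which coordinate directions of $\hat Q$ can be ``extended'' inside $R$ and ruling out the directions in which $\mathcal P$ already reaches the boundary of $\hat Q$ or $R$; it is elementary, but it is where the geometry of daisies (and the order $\prec$) must be used with care. By contrast, the final counting estimate is robust and works uniformly over all admissible directions $\tau$ once the two numerical relations for $\hat Q$ are in place.
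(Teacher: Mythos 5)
Your proposal is correct, and although it follows the paper's overall skeleton --- identify the defect as a translated top section (lateral face) of $\mathcal P:=P^{(m-1)}\cup\ldots\cup P^{(1)}$, reduce to a cardinality bound for that face, and conclude via Proposition \ref{daisysection} together with the nestedness of daisies from Theorem \ref{unique} --- your proof of the key cardinality bound is genuinely different. The paper compares the face $L$ directly with $P_2=P^{(m-2)}\cup\ldots\cup P^{(1)}$ inside the slab at the value-change position, estimating $\#(P_2\setminus L)$ and $\#(L\setminus P_2)$ coefficient by coefficient; you instead pin down the minimal enclosing perfect daisy $\hat Q$ as the successor of $P^{(m-1)}$ --- this is the one step that truly needs Remark \ref{rmk:daisy-construction} combined with Theorem \ref{unique}, since it amounts to the nontrivial fact that the lower layers never protrude beyond the successor daisy, i.e.\ $\#(P^{(m-2)}\cup\ldots\cup P^{(1)})\le(t+1)^jt^{m-2-j}$ --- and then run an averaging estimate over the $\hat q_\tau$ sections of $\hat Q$, using $\#P^{(m-1)}=\tfrac{t}{t+1}\#\hat Q$ and $\hat q_\tau\le t+1$. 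This buys uniformity in the direction $\tau$ and, pleasantly, it also closes the structural point you flagged as the main obstacle: the direct discussion only gives $k=M_\tau+1$ and $D=\mathbf e_\tau+S_{\tau,M_\tau}(\mathcal P)$ with $M_\tau:=\max_{x\in\mathcal P}x_\tau$ (here you should note that $\mathcal P\subseteq\hat Q\subsetneqq R$ by Remark \ref{defectremark}, so $S\cap\mathcal P=\emptyset$ forces $\mathcal P$ to miss the whole level $x_\tau=k$, whence $k=M_\tau+1$ and only the neighbour $y-\mathbf e_\tau$ can lie in $\mathcal P$); but your section estimate applied at level $\hat q_\tau$ already yields $\#S_{\tau,\hat q_\tau}(\mathcal P)\ge\#(P^{(m-2)}\cup\ldots\cup P^{(1)})>0$, hence $M_\tau=\hat q_\tau$ automatically, and no separate case analysis for directions in which $\mathcal P$ touches the boundary of $\hat Q$ or $R$ is needed (if $M_\tau=r_\tau$ there simply is no admissible section in that direction). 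By contrast, the paper's comparison avoids identifying $\hat Q$ altogether at the price of the explicit bookkeeping with the value-change position; both routes are sound, and yours is arguably the more global one.
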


\begin{proof}
By its definition, a defect is contained in an $(m-2)$-dimensional hyperplane that has distance $1$ from one of the lateral faces $L$ of  $P^{(m-1)}\cup\ldots\cup P^{(1)}$ (cf.\ Definition \ref{faces}) and it is made by all the points in such hyperplane whose distance from $L$ is $1$. Since $L$ identifies with an $(m-2)$-dimensional daisy by Proposition \ref{daisysection}, and since daisies are ordered by cardinality (see Theorem \ref{unique}), it is enough to show that $\#L\ge \#(P^{(m-2)}\cup\ldots\cup P^{(1)})$. 

Through the rest of the proof we make use of the notation
\[
P_1:= P^{(m-1)}\cup\ldots\cup P^{(1)},\qquad 
P_2:= P_1 \setminus P^{(m-1)}, 
\]
so that $P_2$ identifies with the $(m-2)$-dimensional daisy $P^{(m-2)}\cup\ldots\cup P^{(1)}$. Let $(p_1^{(m-1)},\ldots, p_{m-1}^{(m-1)})$ be the coefficients of the perfect $(m-1)$-dimensional daisy $P^{(m-1)}$ and let $z_{m-1}$ be the corresponding value-change position. By the definition of a daisy, we have
$P_2\subsetneqq Z$, where 
$$ Z:= 
   \{1,\ldots,p_1^{(m-1)}\}\times\ldots\times\{1,\ldots, p_{z_{m-1}-1}^{(m-1)}\}\times\{p_{z_{m-1}}^{(m-1)}+1\}\times\{1,\ldots, p_{z_{m-1}+1}^{(m-1)}\}\times\ldots\times\{p_{m-1}^{(m-1)}\},$$
and $P_2$ coincides with the  ($(m-2)$-dimensional) lateral face of $P_1$ that is made by all those points $z$ of $P_1$ whose $(z_{m-1})$-th component is $p_{z_{m-1}}^{(m-1)}+1$.
 If $L=P_2$ we are done, therefore from now we assume $L\neq P_2$. 
We notice that being $L$ another lateral face of $P_1$, we have
\[
L\setminus P_2=\{1,\ldots, p_1^{(m-1)}\}\times\ldots\times\{p_{j}^{(m-1)}\}\times\ldots\times\{1,\ldots, p_{m-1}^{(m-1)}\}
\]
for some $j\in\{1,\ldots, m-1\}\setminus\{z_{m-1}\}$, hence
\begin{equation}\label{eq1}
\#(L\setminus P_2)=\prod_{i\in\{1,\ldots, m-1\}\setminus\{j\}} p_i^{(m-1)}.
\end{equation}
Let  $W:=\{y=(y_1,\ldots, y_{m-1})\in \mathbb N^{m-1} : y_{z_{m-1}}= p_{z_{m-1}}^{(m-1)}+1,\ y_j=p_j^{(m-1)}\}$. Since $L$ coincides with the set of all the points of $P_1$ whose $j$-th coordinate is $p_{j}^{(m-1)}$ and since $P_2\subsetneqq Z$, we have
\begin{equation}\label{eq2}
P_2\setminus L= P_2\setminus W\subseteq Z\setminus W.
\end{equation}
But we notice that
\begin{equation}\label{eq3}
\#(Z\setminus W)=(p_j^{(m-1)}-1)\prod_{i\in\{1,\ldots, m-1\}\setminus\{z_{m-1},j\}} p_i^{(m-1)}.
\end{equation}
Thanks to \eqref{eq1}, \eqref{eq2} and \eqref{eq3}, we obtain
\[\begin{aligned}
\#P_2-\# L&=\#(P_2\setminus L)-\#(L\setminus P_2)\le \#(Z\setminus W)-\#(L\setminus P_2)\\
&= (p_j^{(m-1)}-1)\prod_{i\in\{1,\ldots, m-1\}\setminus\{z_{m-1},j\}} p_i^{(m-1)}\;\;-\prod_{i\in\{1,\ldots, m-1\}\setminus\{j\}} p_i^{(m-1)}\\&=(p_j^{(m-1)}-1-p_{z_{m-1}}^{(m-1)})\prod_{i\in\{1,\ldots, m-1\}\setminus\{z_{m-1},j\}}p_i^{(m-1)}\le 0,
\end{aligned}
\]
where the last inequality is due to the fact that $p_j^{(m-1)}-p_{z_{m-1}}^{(m-1)}\in\{-1,0,1\}$, by the definition of a daisy.
\end{proof}

\begin{definition}[Defect filling] \label{filling} 
Let $P=P^{(d)}\cup P^{(d-1)}\cup\ldots\cup P^{(1)}$ be a $d$-dimensional daisy. Let $m\in\{2,\ldots, d\}$.
Suppose that $D$ is a defect of the $(m-1)$-dimensional daisy $P^{(m-1)}\cup\ldots\cup P^{(1)}$ wrt $P^{(m)}$ (resp. wrt  a perfect $(m-1)$-dimensional daisy) according to point ii) of Definition \ref{defect} (resp. according to point i) of Definition \ref{defect}).
The {\it defect is filled} if a new configuration $P'_{m-1}$ is obtained from $P^{(m-1)}\cup\ldots\cup P^{(1)}$ by adding a nonempty subset $D'$ of $D$. The construction of $P'_{m-1}$ from $P^{(m-1)}\cup\ldots\cup P^{(1)}$ is therefore called a {\it defect filling}. Notice that each point of $D'$ has one and only one bond with $P^{(m-1)}\cup\ldots\cup P^{(1)}$. 
\end{definition}

\begin{definition}[Minimal rectangle]\label{mr}
Let $d\in\mathbb N$. Let $C\subset \mathbb N^d$ be a finite set. We define the {\it minimal rectangle} of $C$ as the smallest subset $R(C)$ of $\mathbb N^d$ such that $C \subseteq R(C)$ and such that $R(C)= x_0 + \{1,\ldots, a_1\}\times\ldots\times \{1,\ldots, a_d\}$ for some $x_0 \in \mathbb Z^d$ and $a_1,\ldots, a_d\in\mathbb N$.
\end{definition}

We are ready for the proof of the key statement.

\begin{lemma}\label{key}
Let $d\in\{2,3,\ldots\}$.
Let $C$ be an $EIP^d$ minimizer with minimal rectangle $R(C)$ according to {\rm Definition \ref{mr}}, and assume wlog that $x_0 = 0$ and $a_d\ge a_j$ for any $j=1,\ldots, d$.
Then there exists another $EIP^d$ minimizer $\bar C$ such that $\#C=\#\bar C$ and 
\begin{equation}\label{quasi}
\bar C =\{1,\ldots,\ell_1\}\times\ldots\times\{1,\ldots,\ell_{d-1}\}\times\{1,\ldots,a_d-1\}\cup F_1\cup F_2, 
\end{equation}
where $(\ell_1, \ldots, \ell_{d-1})$ is a $DO1$-tuple, $F_1$ is (a translate of) a $(d-1)$-dimensional daisy that is contained in the hyperplane $\{x \cdot \mathbf e_d=a_d\}$ and $F_2$ is a configuration contained in the hyperplane $\{ x\cdot \mathbf e_j=\ell_{j}+1\}$ for some $j\in\{1,\ldots, d-1\}$. 
\end{lemma}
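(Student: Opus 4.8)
The plan is to reach the normal form \eqref{quasi} by one decreasing rearrangement followed by a structural analysis of the resulting nested daisy sections, with minimality entering only through a single defect argument.

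First I would apply the decreasing rearrangement of Proposition~\ref{rear} to $C$ in the direction $\mathbf e_d$. This yields an $EIP^d$ minimizer $C'$ with $\#C'=\#C$ whose nonempty $\mathbf e_d$-sections, identified with subsets of $\mathbb Z^{d-1}$, are the $(d-1)$-dimensional daisies $D_1,\dots,D_{a_d}$ listed in nonincreasing order of cardinality. Since a daisy of smaller cardinality sits inside one of larger cardinality (Theorem~\ref{unique}), these sections are nested, $D_1\supseteq\cdots\supseteq D_{a_d}$, and their number --- hence the $\mathbf e_d$-extent $a_d$ --- is unchanged. I will show that $\bar C:=C'$ works.

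Next I would isolate the core box. Let $B_{d-1}$ be the largest perfect $(d-1)$-dimensional daisy contained in the lowest bottom section $D_{a_d-1}$; its coefficients $\ell_1\ge\cdots\ge\ell_{d-1}$ form a $DO1$-tuple by Definition~\ref{rectdaisy}, and $B_{d-1}\subseteq D_{a_d-1}\subseteq\cdots\subseteq D_1$. Put $F_1:=D_{a_d}$, which lies in $\{x\cdot\mathbf e_d=a_d\}$ and is a $(d-1)$-dimensional daisy, and $F_2:=\bigcup_{k=1}^{a_d-1}\big((D_k\setminus B_{d-1})\times\{k\}\big)$. Then $\bar C=B_{d-1}\times\{1,\dots,a_d-1\}\cup F_1\cup F_2$ by construction, and every assertion in \eqref{quasi} is immediate except that $F_2$ lies in a single lateral hyperplane $\{x\cdot\mathbf e_j=\ell_j+1\}$ with $j\le d-1$. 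Because $D_k\setminus B_{d-1}\subseteq D_1\setminus B_{d-1}$ for all $k$, it suffices to show that $D_1$ is contained in $B_{d-1}$ together with (a partially filled copy of) one of its lateral faces; equivalently, using the nesting, that $\#D_1,\dots,\#D_{a_d-1}$ spread over at most one face of $B_{d-1}$ and that the surplus points of each $D_k$ are attached to a common face.

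This last point is the crux and, I expect, the main obstacle; it is also where minimality is really used. The idea is a contradiction argument in the spirit of Corollary~\ref{coro1}. If $D_1$ protruded from $B_{d-1}$ beyond a single lateral face --- equivalently, if some section $D_k$ had, relative to the perfect daisy $B_{d-1}$, a defect sitting past the first face in the sense of Definitions~\ref{defect} and~\ref{filling} --- then one transfers boundary points of $C'$ from its upper part into that defect: by Propositions~\ref{defectpro} and~\ref{defectpro2} the defect is large enough to absorb them, each transferred point gains exactly one bond by Definition~\ref{filling}, and, since $\mathbf e_d$ is the largest extent so that consecutive lower sections overlap completely and all sections are daisies, the displaced points lose no bond --- this is exactly the monotone situation of Proposition~\ref{rear}. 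The bond count then strictly increases, contradicting the minimality of $C'$. Should this defect bookkeeping fail to pin the protrusion down to exactly one face, one instead performs one further bond-preserving rearrangement that folds the protruding part into a single hyperplane, as in the slice-moving construction of Lemma~\ref{lemma:lb}, using the lower-dimensional non-optimality estimate of Lemma~\ref{converse} to bound how far a protrusion can reach. Once $D_1\subseteq B_{d-1}\cup\{x\cdot\mathbf e_j=\ell_j+1\}$ is secured, the nesting forces the same $j$ for every $D_k$, so $F_2\subseteq\{x\cdot\mathbf e_j=\ell_j+1\}$, and \eqref{quasi} follows.
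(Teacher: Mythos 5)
There is a genuine gap, and it sits exactly where you locate "the crux". Your plan is to take $\bar C:=C'$, the decreasing rearrangement itself, and to argue that minimality forces the surplus of the sections over the core box $B_{d-1}$ to lie in a single lateral hyperplane, deriving a strict bond gain (hence a contradiction) if it does not. This is false: an $EIP^d$ minimizer can have a bond-neutral "staircase" spread over several layers and several columns, so no bond-increasing move exists and no contradiction can be reached, even though $C'$ is not of the form \eqref{quasi}. Concretely, for $d=2$ take $n=73$ and the Young-diagram configuration whose rows (from bottom) have lengths $9,9,9,9,9,9,9,7,3$. Its edge perimeter is $2(9+9)=36=EIP^2(73)$, so it is a minimizer; it coincides with its own decreasing rearrangement in the $\mathbf e_2$ direction, and with your definitions $B_1=\{1,\ldots,7\}$, so the surplus of the first seven rows occupies the two columns $\{x_1=8\}$ and $\{x_1=9\}$, not one hyperplane. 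The lemma is nevertheless true for this $n$ because a \emph{different} minimizer (the daisy: the $9\times 8$ rectangle plus one point) has the form \eqref{quasi}; but your argument, which tries to show that $C'$ itself already has the structure or else minimality is violated, cannot produce it. Your fallback sentence ("fold the protruding part into a single hyperplane as in Lemma \ref{lemma:lb}, using Lemma \ref{converse}") is precisely the missing content, not a proof: the whole difficulty is to exhibit bond-\emph{preserving} (not bond-gaining) moves that transfer points from the top layer into the deficient lower layers without ever losing a bond, and to show the top layer is not exhausted so the process can be iterated.

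This is what the paper's proof actually does: after the rearrangement it compares the daisy $Q$ at the first deficient level with the daisy $\hat Q$ at the top level, and distinguishes whether some constituent of $\hat Q$ is strictly larger than the corresponding one of $Q$ (then it swaps whole groups of constituents between the two levels) or not (then it runs a defect-finding algorithm through the constituents of $Q$, filling defects with $\hat Q^{(k-1)}\cup\ldots\cup\hat Q^{(1)}$, with a face $\hat S_{k-1}$ or $\hat S_{h-1}$ of $\hat Q^{(k)}$, or exchanging $Q^{(i)}\cup\ldots\cup Q^{(h)}$ with $\hat S_i$), each step being checked to conserve the number of bonds via Propositions \ref{defectpro} and \ref{defectpro2}. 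Minimality is used only to rule out degenerate sub-cases (e.g.\ $\#\hat Q^{(k)}=1$), not to force the structure of $C'$ directly. Moreover, the lemma's set $F_2$ does not arise as "all surpluses over the core box" as in your construction, but from the separate treatment of the case in which the bottom section $P_1$ is not a perfect daisy: the column of points sitting above $P_1\setminus P_1^{(d-1)}$ is what becomes $F_2$. Without an explicit bond-preserving transfer mechanism of this kind, your proposal does not prove the statement.
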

\begin{proof}
Since $C$ is an $EIP^d$ minimizer, we may assume that it contains a point of the form $(i_1,\ldots, i_d)$ for any $i_d=1,\ldots, a_d$. Let $C'$ be the decreasing rearrangement of $C$ in the direction $\mathbf e_d$, see Proposition \ref{rear}. In particular, for any $j=1,\ldots, a_d$, we denote by $P_j$ the section $S_{d,j}(C')$ of $C'$ (see Definition \ref{sect}) and we say that  $P_j$ is the $j$-level of $C'$. We notice that the $j$-level $P_j$ identifies  with a $(d-1)$-dimensional daisy for any $j=1,\ldots,a_d$ and we have $P_j\subseteq P_{j-1}$ for any $j\in\{2,\ldots, a_d\}$, as a byproduct of the rearrangement definition.

We assume that $P_1$ is a $(d-1)$-dimensional perfect daisy (we shall get rid of this assumption at the end of the proof). We will show that, whenever the inclusion $P_j\subset P_1$ is strict (for some $j=2,\ldots, a_d-1$), then it is possible to move a point from the $a_d$-level  to the $j$-th level, obtaining another $EIP^d$ minimizer.
Therefore, the major issue is to show that this is possible without losing bonds. 

Suppose  that $j$ is the minimal natural number such that the inclusion $P_j\subset P_1$ is strict (in particular, $P_{j-1}=P_1$).
We denote by $Q$ the $(d-1)$-dimensional daisy at the $j$-level and by $\hat Q$ the $(d-1)$-dimensional daisy at the the top level $a_d$.
We introduce the usual daisy notation 
\[
P_j=Q=Q^{(d-1)}\cup Q^{(d-2)}\cup\ldots \cup Q^{(1)},\qquad 
P_{a_d}=\hat Q=\hat Q^{(d-1)}\cup\hat Q^{(d-2)}\cup\ldots\cup\hat Q^{(1)}.
\]
We also denote by $p_i^{(k)}$ and $\hat p_i^{(k)}$ the coefficients of such daisies from Proposition \ref{characterization}.
Recalling that daisies are identified by their cardinality (see Theorem \ref{unique}), we have $\hat Q\subseteq Q$, 
and then we split the proof in the following two possible cases:
\medskip 

\noindent \underline{Case 1: There exists $m\in\{1,\ldots, d-1\}$ such that for some $i\in\{1,\ldots, m\}$ there holds $p_i^{(m)}<\hat p_i^{(m)}$}. 

In this case, let $\bar m$ be the maximal of such $m$'s, so
 that
 \begin{equation}
\label{null} 
p_i^{(\bar m)}<\hat p_i^{(\bar m)}\;\mbox{ for some $i\in\{1,\ldots,\bar m\}$ }
\end{equation}
and
\begin{equation}\label{eins}
p_i^{(\bar m+1)}\ge\hat p_i^{(\bar m+1)}\;\mbox{ for all $i\in\{1,\ldots, \bar m+1\}$}.
\end{equation}
Note that $\bar{m} \leq d-2$. By the monotonicity of the sequences $\{1,\ldots, \bar{m}\}\ni i\mapsto p_i^{(\bar{m})}$ and $\{1,\ldots, \bar{m}\}\ni i\mapsto \hat p_i^{(\bar{m})}$ and the fact that their oscillation is at most $1$ (see Definition \ref{rectdaisy} and Proposition \ref{characterization}), we get
\begin{equation}\label{zwei}
p_i^{(\bar{m})}\le \hat p_i^{(\bar{m})} \;\mbox{ for all $i\in\{1,\ldots, \bar{m}\}$}.
\end{equation}
We consider the following two sets, obtained from $P_j$ and $P_{a_d}$ by exchanging the layers from $\bar m$ to $1$:
\[
\widetilde{P_j}=Q^{(d-1)}\cup\ldots\cup Q^{(\bar m+1)}\cup \hat Q^{(\bar m)}\cup\ldots\cup\hat Q^{(1)},
\]
\[
\widetilde{P_{a_d}}= \hat Q^{(d-1)}\cup\ldots\cup \hat Q^{(\bar m+1)}\cup  Q^{(\bar m)}\cup\ldots\cup Q^{(1)}.
\]
We claim that in view of Proposition \ref{characterization} these two new configurations are both daisies. Indeed, the claim is obvious for $\widetilde{P_{a_d}}$, since \eqref{zwei} implies $Q^{(\bar m)}\subset \hat Q^{(\bar m)}$ (and by \eqref{null} the inclusion is strict).
On the other hand in order to see that $\widetilde{P_j}$ is a daisy, we need to check that the sequence $i\mapsto p_i^{(\bar m+1)}$ is larger than the sequence $i\mapsto \hat p_i^{(\bar m)}$ in the sense of Definition \ref{larger}. But this is a direct consequence of \eqref{eins} and $(\hat p_1^{(\bar m+1)}, \ldots, \hat p_{\bar m+1}^{(\bar m+1)}) \sqsupset (\hat p_1^{(\bar m)}, \ldots, \hat p_{\bar m}^{(\bar m)})$. Therefore, $\widetilde{P_j}$ and  $\widetilde{P_{a_d}}$ satisfy all the assumptions in Definition \ref{dai} and the claim follows. We now consider the new configuration that arises from $C'$ by substituting $P_{a_d}$ with $\widetilde{P_{a_d}}$ and $P_j$ with $\widetilde{P_j}$. It has the same cardinality as $C'$ but a smaller upper face since \eqref{null} and \eqref{zwei} imply $\#\widetilde{P_j}>\#P_j$. In fact, it is also an $EIP^{d}$ minimizer, as desired, because the total number of bonds does not change: For the bonds perpendicular to $\mathbf e_d$ we have 
\begin{align*} 
  &b(\widetilde{P_j}) + b(\widetilde{P_{a_d}}) \\ 
  &\quad = b(Q^{(d-1)}\cup\ldots\cup Q^{(\bar m+1)}) + b(\hat Q^{(\bar m)}\cup\ldots\cup\hat Q^{(1)}) + (d-1-\bar m) \# \hat Q^{(\bar m)}\cup\ldots\cup\hat Q^{(1)} \\
	&\quad \quad + b(\hat Q^{(d-1)}\cup\ldots\cup \hat Q^{(\bar m+1)}) + b(Q^{(\bar m)}\cup\ldots\cup Q^{(1)}) + (d-1-\bar m) \# Q^{(\bar m)}\cup\ldots\cup Q^{(1)} \\ 
	&\quad =b(P_j) + b(P_{a_d}). 
\end{align*}
Also the number of bonds in the $\mathbf e_d$ direction is conserved as lost bonds between the $a_d$ and $a_{d}-1$ layer are restored as new bonds between the $j$-th layer and the perfect daisy $P_{j-1}$. 
\medskip 

\noindent \underline{Case 2: For all  $m\in\{1,\ldots, d-1\}$, the inequality $p_i^{(m)}\ge \hat p_i^{(m)}$ holds for all $i\in\{1,\ldots, m\}$}.

This means that for any $m\in\{1,\ldots, d-1\}$,  $\hat Q^{(m)}$ is a subset (possibly not strict) of $Q^{(m)}$. In order to show that it is possible to move  points from the $a_d$-level  to the $j$-th level
we provide an iteration algorithm.

Before introducing the full algorithm, let us start by discussing the basic instance. If $Q$ has a defect with respect to the perfect $(d-1)$-dimensional daisy $P_{j-1}$, and if $\hat Q$ is not perfect, i.e.\ if $\hat Q^{(d-2)}\neq\emptyset$, we remove $\hat Q^{(d-2)}\cup\ldots\cup \hat Q^{(1)}$ from the top layer and use it to fill the defect (see Definition \ref{filling}). Indeed, $\hat Q^{(m)} \subseteq Q^{(m)}$ for all $m$ implies that $\hat Q^{(d-2)}\cup\ldots\cup \hat Q^{(1)}$ is, after a rigid motion, a subset of the defect thanks to Proposition \ref{defectpro2}. Thereby the total number of bonds is unchanged, as all the bonds of $\hat Q^{(d-2)}\cup\ldots\cup \hat Q^{(1)}$  with $\hat Q^{(d-1)}$ (whose number is $n:=\#(\hat Q^{(d-2)}\cup\ldots\cup \hat Q^{(1)}) $)
 are restored as bonds with $Q$. Also, the $n$ bonds of $\hat Q^{(d-2)}\cup\ldots\cup \hat Q^{(1)}$  with $P_{a_d-1}$
 are all replaced with bonds connecting to the larger daisy $P_{j-1}$.

Let us now introduce the algorithm. Starting from $k=d-1$ and decreasing $k\ge2$, we perform the following iteration procedure: 
\begin{center}{\it
if $ Q^{(k)}\cup\ldots\cup Q^{(1)}$ does not have a defect with respect to $Q^{(k+1)}$ and $\hat Q^{(k-1)}\neq \emptyset $, \\ proceed to check $Q^{(k-1)}\cup\ldots\cup Q^{(1)}$ and $\hat Q^{(k-2)}$.}
\end{center}
Here $Q^{(d)}$, which occurs if $k = d-1$, is understood as $P_{j-1}$. We have three possible situations:
\begin{itemize}
\item[A)] The procedure does not stop and reaches $k=2$, with no defects in $Q^{(2)}\cup Q^{(1)}$ (wrt $Q^{(3)}$) and $\hat Q^{(1)}\neq\emptyset$.
In such case $Q^{(1)}$ is nonempty and has a ($0$-dimensional) defect, see Remark \ref{defectremark}. Therefore we take a corner point from $\hat Q$ which has $d$ bonds to other points, to fill this defect without reducing the total number of bonds.

\item[B)] The procedure stops at some $k\ge 2$ with a defect in $ Q^{(k)}\cup\ldots\cup Q^{(1)}$ (wrt $Q^{(k+1)}$) and nonempty $\hat Q^{(k-1)}$.
As $ \hat Q^{(k-1)}\cup\ldots\cup \hat Q^{(1)}$ is nonempty and $p_i^{(m)} \ge \hat p_i^{(m)}$ for all $i \in \{1, \ldots, m\}$, $m \in \{1, \ldots, k-1\}$, we can proceed as above to fill a defect of $Q^{(k)}$ with a copy of $\hat Q^{(k-1)}\cup\ldots\cup \hat Q^{(1)}$. Here, removing such a portion from the top layer destroys $\bar n(d-k+1)$ bonds, where $\bar n = \# \hat Q^{(k-1)}\cup\ldots\cup \hat Q^{(1)}$, while filling the defect restores the same number of bonds. 

\item[C)] The procedure stops at some $k\ge 2$ with $\hat Q^{(k-1)}=\emptyset$. 

We will define $\hat S_{k-1}$ as one of the smallest $(k-1)$-dimensional faces of $\hat Q^{(k)}$. $\hat S_{k-1}$ identifies with  a $(k-1)$-dimensional daisy thanks to Proposition \ref{daisysection}, and in fact with a perfect daisy since $\hat Q^{(k)}$ is a perfect daisy. More precisely and more generally, for the perfect $k$-dimensional daisy $\hat Q^{(k)}$ and  for $j\in\{0,\ldots, k\}$ we define $\hat S_{k-j}$ as the set that is obtained by taking all the points $z=(z_1,\ldots, z_k)\in\hat Q^{(k)}$ and by freezing the first $j$ entries of $z$ to their maximal value. Then $\hat S_{k-j}$ is a perfect $(k-j)$-dimensional daisy and a copy of the smallest $(k-j)$ dimensional face of $\hat Q^{(k)}$ (in particular, $\hat S_k=\hat Q^{(k)}$ and $\hat S_0$ is a single corner point of $\hat Q^{(k)}$). We stress that each point of $\hat S_{k-1}$ has one bond with a point of $\hat Q^{(k)}\setminus \hat S_{k-1}$, unless $\hat Q^{(k)}$ is made of a single point (which is the only situation yielding $\hat S_{k-1}=\hat Q^{(k)}$). 
\begin{itemize}
\item[C1)]
If there are defects in $Q^{(k)}\cup\ldots\cup Q^{(1)}$, by Proposition \ref{defectpro} the defect contains a copy of  $S_{k-1}$, the smallest $(k-1)$-dimensional face of $Q^{(k)}$. But $\hat Q^{(k)}\subseteq Q^{(k)}$ implies $\hat S_{k-1}\subseteq S_{k-1}$. Therefore we can move $\hat S_{k-1}$ to fill the defect, as soon as $\hat Q^{(k)}$ is not made by a single point, since each of the bonds of $\hat S_{k-1}$ with $\hat Q^{(k)}\setminus \hat S_{k-1}$ is restored as a bond with $Q^{(k)}\cup\ldots\cup Q^{(1)}$ through this defect filling (Definition \ref{filling}). Also the lost bonds with $\hat Q^{(d-1)}\cup\ldots\cup \hat Q^{(k+1)}$ and with the $a_d-1$ layer are restored. Now note that $\#\hat S_{k-1}=\#\hat Q^{(k)}=1$ is not possible, since otherwise the defect filling would increase the number of bonds and contradict the minimality of $C'$. In particular, this defect filling does not exhaust $\hat Q^{(k)}$.

\item[C2)]
Assume now there are no defects in $Q^{(k)}\cup\ldots\cup Q^{(1)} = Q^{(k)}\cup\ldots\cup Q^{(h)}$, where $h\in \{1,\ldots k-1\}$ is such that $Q^{(h)} \neq \emptyset$ and  and $Q^{(h-1)} = \emptyset$. 

Suppose first that $\hat S_h \subseteq Q^{(h)}$. Since $Q^{(h)}$ has a defect due to Remark \ref{defectremark}, by Proposition \ref{defectpro} this defect contains a copy of the smallest $(h-1)$-dimensional face of $Q^{(h)}$. Since $\hat S_h \subseteq Q^{(h)}$, then the defect also contains a copy of $\hat S_{h-1}$. We can thus remove $\hat S_{h-1}$ from the top level and use it to fill the defect. Similarly as above,  each point of $\hat S_{h-1}$ has one bond with $\hat S_h\setminus \hat S_{h-1}$ unless the latter is empty, and these bonds are restored as bonds with $Q^{(h)}$ through the defect filling. Again, $\hat S_{h}\setminus\hat S_{h-1}=\emptyset$ is not possible (because the defect filling would create new bonds, contradicting the minimality of $C'$), so that $\hat Q^{(k)}$ is not exhausted.

Now suppose that, on the contrary, $\hat S_h \supsetneqq Q^{(h)}$. Since $Q^{(k)} \supseteq \hat Q^{(k)} = \hat S_k$, there is an index $i \in \{ h, \ldots, k-1 \}$ such that $Q^{(k)} \supseteq \hat S_k, \ldots, Q^{(i+1)} \supseteq \hat S_{i+1}$ but $Q^{(i)} \subsetneqq \hat S_i$. (Recall that daisies are totally ordered by inclusion) As $\hat S_i$ is a perfect daisy, we also have $Q^{(i)} \cup \ldots \cup Q^{(h)} \subsetneqq \hat S_i$. Since $Q^{(i+1)} \supseteq \hat S_{i+1}$ it is then possible to exchange the two sets $Q^{(i)} \cup \ldots \cup Q^{(h)}$ and $\hat S_i$ without changing the total number of bonds: indeed, we remove these two sets from their position by rigidly moving $\hat S_i$ into the $i$-dimensional affine hyperplane that was occupied by $Q^{(i)} \cup \ldots \cup Q^{(h)}$ in such a way that all the bonds that have deleted while detaching $\hat S_i$ are restored as bonds with $P_{j-1}$ and with $Q^{(d-1)} \cup \ldots \cup Q^{(i+1)}$, and similarly by moving $Q^{(i)} \cup \ldots \cup Q^{(h)}$ rigidly to a subset originally occupied by $\hat S_i$, restoring all bonds that have been deleted while detaching $Q^{(i)} \cup \ldots \cup Q^{(h)}$. 
\end{itemize}
\end{itemize}

We have shown that it is always possible to take points from the $a_d$-level to the $j$-level. Both in Case 1 and Case 2 above, the $a_d$-level is not  exhausted by this procedure. Indeed, in Case 1 we see that $\hat Q^{(d-1)}$ is left at the top level. Moreover, we have seen through the different instances of the algorithm in Case 2 that the top level is not exhausted. Therefore, we can repeat the procedure, and with a finite number of steps we reach a configuration of the form
\[
\{1,\ldots,\ell_1\}\times\ldots\times\{1,\ldots,\ell_{d-1}\}\times\{1,\ldots,a_d-1\}\cup F_1,
\]  
as desired. 

Let us conclude by generalizing the argument in case $P_1$ is not a perfect daisy. As $P_1=P_1^{(d-1)}\cup\ldots \cup P_1^{(1)}$, let us consider the set of points $H$ in $C'$ whose projection on $\{x\cdot\mathbf e_d=1\}$ belongs to $P_1 \setminus P_1^{(d-1)}$. Let $k\in\{1,\ldots, a_d\}$ denote the top level where points of $H$ are found. If $k\le a_d-2$,  then $P_{k}^{(d-1)}=P_1^{(d-1)}$ and $P_{k+1}\subseteq P_{k}^{(d-1)}$. Therefore we can proceed as before with $P_{k}$ in place of $P_1$. We obtain a configuration of the form \eqref{quasi} with $F_2=H$. If $k\in\{a_d-1,a_d\}$, then $C'$ is already of the form \eqref{quasi}, with the  $\ell_i$'s being the coefficients of the perfect daisy $P_1^{(d-1)}$.
\end{proof}

\begin{corollary}\label{cor:ad-est}
Let $C\subset \mathbb Z^d$ be an $EIP^d$ minimizer. Let $R(C)$ be the minimal rectangle and assume $x_0 = 0$. Let $a_d$ be the maximal edge of $R(C)$. Let $\ell = \ell_1$ from \eqref{quasi}. Then 
\[ a_d-\ell\le4^{c_d}h_{\ell,d} + 6. \] 
\end{corollary}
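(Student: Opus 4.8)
The plan is to combine Lemma~\ref{key} with Lemma~\ref{converse}, deriving a contradiction from the assumption that $a_d-\ell$ is too large. By Lemma~\ref{key} there is an $EIP^d$ minimizer $\bar C$ of the form \eqref{quasi} with the same maximal edge $a_d$; since the inequality to be proved involves only $a_d$ and $\ell=\ell_1$ (both read off from \eqref{quasi}), we may work with $\bar C=B\cup F_1\cup F_2$. Here $B=\{1,\ldots,\ell_1\}\times\cdots\times\{1,\ldots,\ell_{d-1}\}\times\{1,\ldots,a_d-1\}$ is the bulk box with $(\ell_1,\ldots,\ell_{d-1})$ a $DO1$-tuple and $\ell=\ell_1$, the $(d-1)$-dimensional daisy $F_1$ sits in $\{x\cdot\mathbf e_d=a_d\}$ and projects into the top face of $B$, and $F_2\subset\{x\cdot\mathbf e_j=\ell_j+1\}$ projects into a lateral face of $B$. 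Since $\bar C$ reaches level $a_d$ we have $F_1\neq\emptyset$; since $F_1,F_2$ are glued onto faces of $B$ the edge perimeter splits additively, $\Theta_d(\bar C)=\Theta_d(B)+\Theta_{d-1}(F_1)+\Theta_{d-1}(F_2)$, and by Corollary~\ref{coro1} (applied to the sections equal to $F_1$, already a daisy, and to $F_2$) one has $\Theta_{d-1}(F_i)=EIP^{d-1}(\#F_i)$. If $a_d-\ell\le6$ there is nothing to prove, so assume $a_d-\ell\ge7$ and, for contradiction, $a_d-\ell>4^{c_d}h_{\ell,d}+6$.

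A preliminary reduction disposes of $F_2$: since $EIP^{d-1}(\#F_1+\#F_2)<EIP^{d-1}(\#F_1)+EIP^{d-1}(\#F_2)$ whenever $\#F_1,\#F_2\ge1$ — glue $EIP^{d-1}$-optimal configurations for $\#F_1$ and $\#F_2$ points along a common face — a configuration of $n=\#\bar C$ points in which $F_1\cup F_2$ is replaced by a single $(d-1)$-dimensional daisy of $\#F_1+\#F_2$ points attached to one face of $B$ has strictly smaller edge perimeter, contradicting minimality. Hence we may take $\bar C=B\cup F_1$ with $\#F_1\le\ell_1\cdots\ell_{d-1}$ and $\Theta_d(\bar C)=\Theta_d(B)+EIP^{d-1}(\#F_1)$. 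It now suffices to show that the box $B$, whose $d-1$ short edges have length $\ell_i\in\{\ell-1,\ell\}$ and whose long edge has length $a_d-1\ge\ell+4^{c_d}h_{\ell,d}+6$, cannot be part of an $EIP^d$ minimizer.

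To this end I would compare $\bar C$ with the configuration obtained by \emph{rebalancing} it: successively move $(d-2)$-dimensional slices off the $\mathbf e_d$-direction onto the shorter faces — exactly the edge-perimeter-preserving rearrangements used in the proofs of Lemmas~\ref{lemma:lb} and~\ref{converse} — and absorb the leftover into $F_1$ together with a bounded number of \emph{thin} defect layers so as to keep the cardinality equal to $n$. After matching cardinalities the result coincides, up to these few lower-dimensional layers, with a box $P_{\tilde\ell,j,d,2p}$ for some $\tilde\ell$ comparable to $a_d$, some $j\in\{0,\ldots,d-1\}$, and $2p$ comparable to $(d-1)(a_d-\ell)$. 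Under our hypothesis $a_d-\ell>4^{c_d}h_{\ell,d}+6$ the threshold $2p\ge4^{c_d}h_{\tilde\ell,d}$ of Lemma~\ref{converse} is met — here one uses the crude a priori bound $a_d-\ell=O(\sqrt\ell)$, obtained by comparing $\Theta_d(\bar C)$ with $EIP^d(n)\le 2d\,n^{(d-1)/d}+O(\ell^{d-2})$, to ensure that $h_{\tilde\ell,d}$ and $h_{\ell,d}$ agree up to a factor tending to $1$ — so the rebalanced configuration is not an $EIP^d$ minimizer; since rebalancing does not increase the edge perimeter, neither is $\bar C$, the desired contradiction.

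The main obstacle is the exact bookkeeping: realizing the competitor with cardinality precisely $n$ using only a bounded number of thin adjustment layers, controlling the resulting $O(1)$ discrepancy (this is the origin of the additive constant $6$), and propagating the sharp constant $4^{c_d}$. The last point is where the dimensional recursions $c_d=\tfrac{c_{d-1}+1}{2}$ and $h_{\ell,d-1}=h_{\ell,d}^{\,2}$ enter, precisely as in the passage from \eqref{kk} to the conclusion of Lemma~\ref{converse}: moving $\sim(a_d-\ell)^2$ slices turns a $(d-1)$-dimensional non-optimality at depth $\sim h_{\ell,d-1}=h_{\ell,d}^{2}$ into a $d$-dimensional non-optimality already at elongation $\sim h_{\ell,d}$.
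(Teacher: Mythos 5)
The decisive step of your plan --- ``after matching cardinalities the result coincides, up to these few lower-dimensional layers, with a box $P_{\tilde\ell,j,d,2p}$\dots so the rebalanced configuration is not an $EIP^d$ minimizer'' --- is a genuine gap, not bookkeeping. Lemma~\ref{converse} asserts non-minimality only for the exact product set $P_{\tilde\ell,j,d,2p}$; it says nothing about that set augmented by leftover points and ``thin adjustment layers'', and non-minimality does not transfer across such modifications without a further argument of the type of Lemma~\ref{easy2} or Corollary~\ref{coro1}. Moreover, the rebalancing you invoke is much more drastic than the slice moves of Lemmas~\ref{lemma:lb} and~\ref{converse}: you must reshape an $\ell^{d-1}\times a_d$ box into one with $d-1$ sides of length $\approx a_d$ and one short side (so that $2p\sim(d-1)(a_d-\ell)$), i.e.\ lengthen $d-1$ directions by whole $(d-1)$-dimensional slabs while keeping the edge perimeter non-increasing and the cardinality exact; this construction is asserted, not given, and it is exactly where the difficulty of the corollary lies. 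The paper takes a different and shorter route that avoids both problems: starting from $\bar C$ of Lemma~\ref{key} with $p=\lfloor (a_d-\ell)/2\rfloor$, it folds the top $p$ levels onto a lateral face by an explicit bond-preserving rigid motion, then moves $k\approx p^2-3p$ many $(d-2)$-dimensional slices from the new top level into the freed positions; the outcome is again an $EIP^d$ minimizer whose top face is \emph{exactly} $\{1,\ldots,\ell_j-k\}\times\prod_{i\neq j}\{1,\ldots,\ell_i\}$, so Corollary~\ref{coro1} together with Lemma~\ref{converse} applied in dimension $d-1$ yields $p^2-3p\le 4^{c_{d-1}}h_{\ell,d-1}$, and the identities $c_{d-1}+1=2c_d$, $h_{\ell,d-1}=h_{\ell,d}^2$ give $2p\le 4^{c_d}h_{\ell,d}+5$, hence $a_d-\ell\le 4^{c_d}h_{\ell,d}+6$. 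Note the paper's threshold is quadratic in $p$ at level $d-1$, not linear in $a_d-\ell$ at level $d$; your scheme would, if completed, even claim a smaller constant, which should make you suspicious of the unproved reshaping step.

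Two secondary points. Your preliminary elimination of $F_2$ is not needed for the paper's argument and is itself incomplete: the identity $\Theta_d(\bar C)=\Theta_d(B)+\Theta_{d-1}(F_1)+\Theta_{d-1}(F_2)$ requires every point of $F_1,F_2$ to have exactly one bond into $B$ (otherwise you only get an inequality, which happens to still suffice), and the merged daisy of $\#F_1+\#F_2$ points must fit inside a single face of $B$, which is not automatic since $\#F_2$ may be of order $\ell^{d-2}a_d$ while the top face has only $\sim\ell^{d-1}$ sites. Likewise the a priori bound $a_d-\ell=O(\sqrt\ell)$ is plausible (a second-order expansion of the perimeter of \eqref{quasi} against $EIP^d(n)$) but is asserted rather than proved, and in the paper's proof no such a priori bound is needed.
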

\begin{proof}
From Lemma \ref{key} we obtain $\bar C$ as in \eqref{quasi} with top layer $F_1$ and (possibly a) lateral face $F_2$ contained in $\{x\cdot \mathbf e_j=\ell_{j}+1\}$ for some $j\in\{1,\ldots, d-1\}$. Wlog we assume that $p = \lfloor \frac{a_d - \ell}{2} \rfloor \ge 3$. 

Throughout the proof, we perform transformations that delete and restore only the bonds in the directions $\mathbf e_j$ and $\mathbf e_d$. We first obtain another $EIP^d$ minimizer by cutting the entire block of points at the levels from $a_d-p+1$ to $a_d$ and paste it after a rigid motion  to the lateral face of $\bar C$ that is contained in the hyperplane $\{\mathbf e_j\cdot x=1\}$. In particular, we perform this rigid motion by letting the moved points from $F_2$  find their new positions at the first level, i.e., on the hyperplane $\{\mathbf e_d\cdot x=1\}$ and the points from $F_1$ on the hyperplane $\{\mathbf e_j\cdot x=1-p\}$. More precisely, any $x \in \bar C$ with $x_d \in \{a_d-p+1, \ldots, a_d\}$ is mapped to 
\[ (x_1,\ldots, x_{j-1}, a_d-p+1-x_d, x_{j+1}, \ldots, x_{d-1}, \ell_j+2-x_j). \]
This is possible without reducing the number of bonds since $\ell_j + 1 \le \ell + 1 \le a_d - p$. 
In this way, the obtained configuration $C'$ contains the set 
\[ Y := \prod_{i=1}^{j-1} \{1, \ldots, \ell_i\} \times \{-p+2, \ldots, 0\} \times \prod_{i=j+1}^{d-1} \{1, \ldots, \ell_i\} \times \{\ell_j+1\} \] 
but not the points above $Y$ in the $\mathbf e_d$ direction. Moreover, the top level of $C'$ is the level $a_d - p$, and precisely it is the set $\big( \prod_{i=1}^{d-1} \{1,\ldots, \ell_i\}\times\{a_d - p\} \big) \cup F_2^{a_d-p}$, where $F_2^{a_d-p}:=S_{d,a_d-p}(F_2)$. 

Let $k=p^2-3p$ if $\ell_j = \ell_{d-1}$ and $k=p^2-3p+1$ if $\ell_j = \ell_{d-1}+1$. We move points from the level $a_d - p$ to obtain another $EIP^d$ minimizer, whose upper face is 
\[ U := \prod_{i=1}^{j-1} \{1, \ldots, \ell_i\} \times \{k+1, \ldots, \ell_j\} \times \prod_{i=j+1}^{d-1} \{1, \ldots, \ell_i\} \times \{a_d-p\}. \] 
This is done, similarly to the constructions of Section \ref{lower}, by moving  $(d-2)$-dimensional faces of the top level (one by one): we remove $\{ x \in C' : x_j = i, \, x_d = a_d-p \}$ for $i=1,\ldots, k$, and place such $(d-2)$-dimensional layers at the positions 
\[ \prod_{i=1}^{j-1} \{1, \ldots, \ell_i\} \times \{-j_1\} \times \prod_{i=j+1}^{d-1} \{1, \ldots, \ell_i\} \times \{j_2\}, \] 
where $j_1 \in \{0,1,\ldots, p-2\}$ and $j_2\in\{\ell_j+2,\ldots, a_d-p-1\}$, which are the $(p-1)(a_d-p-\ell_j-2) \ge(p-1)(p-2)\ge k+1$ free positions above $Y$. (This is done, say, following the right-to-left lexicographic order of those $(j_1,j_2)$.) In doing so we fill $k$ of such free positions, and if $F_2^{a_d-p}\neq\emptyset$, we finally move it to fill the $(k+1)$-st position.

Since the upper face $U$ is necessarily an $EIP^{d-1}$ minimizer by Corollary \ref{coro1}, from Lemma \ref{converse} we infer 
\[ p^2-3p 
   \le 4^{c_{d-1}}\,h_{\ell_{d-1},d-1} 
   \le 4^{c_{d-1}}\,h_{\ell,d-1}, \]
which implies, by using the relations $c_{d-1}+1=2c_d$ and $h_{\ell,d-1}=h^2_{\ell,d}$,
\[ 2p 
   \le 3+\sqrt{9+4^{c_{d-1}+1} h_{\ell,d-1}}\le 5+ 2^{c_{d-1}+1}\sqrt{h_{\ell,d-1}}=4^{c_d}h_{\ell,d}+5,
\]
where we have also used the elementary inequality $3+\sqrt{9+x}\le 5+\sqrt{x}$, which holds for $x\ge 2$  (noticing that $4^{c_{d-1}+1}h_{\ell, d-1}\ge 4$ as $d\ge 2$). The result is proven. 
\end{proof}

\begin{proof}[Proof of Theorem \ref{thm:main}]
\noindent (i) Let $C$ be an $EIP^d$ minimizer with $\#C = n$. Wlog suppose and $R(C) = \{1, \ldots, a_1\} \times \ldots \times \{1, \ldots, a_d\}$ and $a_1, \ldots, a_{d-1} \le a_d$. By Lemma \ref{key} and Corollary \ref{cor:ad-est}, with $\ell = \ell_1$ from \eqref{quasi} we have $n = \ell^{d-1} a_d + O(\ell^{d-2} a_d)$ and $a_d-\ell\le 4^{c_d} h_{\ell,d} + 6$. In particular, $n = \ell^d + O(h_{\ell,d} \ell^{d-1})$. We also observe that \eqref{quasi} gives $n \ge (\ell-1)^d$. 

Now suppose there is an $i$ with $a_i \le \ell - 2 d 4^{c_d} h_{\ell,d}$. Then 
\begin{align*}
  n 
   \le \# R(C) 
  &\le (\ell - 2 d 4^{c_d} h_{\ell,d}) (\ell + 4^{c_d} h_{\ell,d} + 6)^{d-1} \\ 
  &= \ell^d (1 - 2 d 4^{c_d} h_{\ell,d} \ell^{-1}) (1 + (4^{c_d} h_{\ell,d} + 6) \ell^{-1})^{d-1}. 
\end{align*}
Using that $h_{\ell,d} \ell^{-1} \to 0$ as $n \to \infty$ and $(1 + (4^{c_d} h_{\ell,d} + 6) \ell^{-1})^{d-1} = 1 + (d-1) (4^{c_d} h_{\ell,d} + 6) \ell^{-1} + O((h_{\ell,d} \ell^{-1})^2)$, we find that for $n$ sufficiently large,  
\[ \ell^d (1 - d \ell^{-1}) 
   \le \ell^d (1 - \ell^{-1})^d 
   \le n 
   \le \ell^d (1 - 2 d 4^{c_d} h_{\ell,d} \ell^{-1}) (1 + d 4^{c_d} h_{\ell,d} \ell^{-1}) 
\]
and so 
$$ 1 - d \ell^{-1} \le 1 - d 4^{c_d} h_{\ell,d} \ell^{-1}, $$ 
contradicting $h_{\ell,d} \to \infty$ as $n \to \infty$. This shows that in fact $a_i \ge \ell - 2 d 4^{c_d} h_{\ell,d}$ for all $i$ if $n$ is large enough. 

As a consequence we have 
$$ \# R(C) \triangle \{1, \ldots, \ell\}^d 
   \le (\ell + 4^{c_d} h_{\ell,d})^d - (\ell - 2 d 4^{c_d} h_{\ell,d})^d 
   = O(h_{\ell,d}\ell^{d-1}). $$ 
From $n = \ell^d + O(h_{\ell,d} \ell^{d-1}) = \ell^d(1 +  O(h_{\ell,d} \ell^{-1}))$ and thus $\lfloor n^{1/d} \rfloor = \ell (1 +  O(h_{\ell,d} \ell^{-1})) = \ell + O(h_{\ell,d})$ we also obtain 
$$ \# W_n \triangle  \{1, \ldots, \ell\}^d 
   = O(h_{\ell,d}\ell^{d-1}). $$
So by the triangle inequality we get 
$$ \# W_n \triangle C 
   = O(h_{\ell,d}\ell^{d-1}) $$
as claimed. 
\smallskip

\noindent (ii) This follows directly from Lemma \ref{lemma:lb}. 
\end{proof}

\subsection*{Acknowledgements} E.M.\ acknowledges support from the MIUR-PRIN  project  No 2017TEXA3H and from the INdAM-GNAMPA 2019 project {\it ``Trasporto ottimo per dinamiche con interazione''}. Both authors wish to thank Paolo Piovano and Ulisse Stefanelli for interesting discussions on the subject of the paper. 

%%%%%%%%%%%%%%%%%%%%%%%%%%%%%%%%%%%%%%%%%%%%%%%%%%%%%%%%%%%%%%%%%%%%%%%%%%%%%%%%%%%%%%%%%%%%%%%%%%%%%%%%%%%%%%%%%%%%%%%%%%%%%%%%%%%%%%%%%%%%%%%%%%%%%%%%%%%%%%%%%%%%%%%%%%%%%%%%%%%%%%%%%%%%%%%%%%%%%%%%%%%%%%%%%%%%%%%%%%%%%%%%%%%%%%%%%%%%%
%%%%%%%%%%%%%%%%%%%%%%%%%%%%%%%%%%%%%%%%%%%%%%%%%%%%%%%%%%%%%%%%%%%%%%%%%%%%%%%%%%%%%%%%%%%%%%%%%%%%%%%%%%%%%%%%%%%%%%%%%%%%%%%%%%%%%%%%%%%%%%%%%%%%%%%%%%%%%%%%%%%%%%%%%%%%%%%%%%%%%%%%%%%%%%%%%%%%%%%%%%%%%%%%%%%%%%%%%%%%%%%%%%%%%%%%%%%%%
%%%%%%%%%%%%%%%%%%%%%%%%%%%%%%%%%%%%%%%%%%%%%%%%%%%%%%%%%%%%%%%%%%%%%%%%%%%%%%%%%%%%%%%%%%%%%%%%%%%%%%%%%%%%%%%%%%%%%%%%%%%%%%%%%%%%%%%%%%%%%%%%%%%%%%%%%%%%%%%%%%%%%%%%%%%%%%%%%%%%%%%%%%%%%%%%%%%%%%%%%%%%%%%%%%%%%%%%%%%%%%%%%%%%%%%%%%%%%
%%%%%%%%%%%%%%%%%%%%%%%%%%%%%%%%%%%%%%%%%%%%%%%%%%%%%%%%%%%%%%%%%%%%%%%%%%%%%%%%%%%%%%%%%%%%%%%%%%%%%%%%%%%%%%%%%%%%%%%%%%%%%%%%%%%%%%%%%%%%%%%%%%%%%%%%%%%%%%%%%%%%%%%%%%%%%%%%%%%%%%%%%%%%%%%%%%%%%%%%%%%%%%%%%%%%%%%%%%%%%%%%%%%%%%%%%%%%%

\end{document}